\documentclass[10pt, conference]{IEEEtran}
\usepackage[noadjust]{cite}
\usepackage[bookmarks=false]{hyperref}
\hypersetup{
  colorlinks,
  linkcolor={blue!50!black},
  citecolor={blue!50!black},
  urlcolor={blue!80!black}
}
\usepackage{subcaption}
\usepackage{caption}
\usepackage[utf8]{inputenc} 
\usepackage{graphicx,url}
\usepackage{booktabs}       
\graphicspath{ {./images/} }
\usepackage{multirow}
\usepackage{hhline}
\usepackage{xspace}
\usepackage{graphicx}
\usepackage{amssymb,amsmath,mathtools} %
\usepackage{amsthm}
\newtheorem{theorem}{Theorem}   %
\newtheorem{proposition}{Proposition}
\newtheorem{problem}{Problem}

\newtheorem{corollary}{Corollary}
\usepackage{bm}                 %

\usepackage{color,soul}
\usepackage[dvipsnames]{xcolor}
\usepackage{bbm}
\usepackage{stmaryrd}
\usepackage{textcomp}
\usepackage{lipsum,lineno}
\usepackage{float}
\makeatletter
\newif\if@restonecol
\makeatother
\usepackage{amsmath}
\usepackage{kbordermatrix}
\usepackage{mathrsfs}
\usepackage[shortlabels]{enumitem}
\newcommand\numeq[1]%
{\stackrel{\scriptscriptstyle(\mkern-1.5mu#1\mkern-1.5mu)}{=}}
\newcommand\numeqq[1]%
{\stackrel{\scriptscriptstyle(\mkern-1.5mu#1\mkern-1.5mu)}{\triangleq}}
\newcommand\numleq[1]%
{\stackrel{\scriptscriptstyle(\mkern-1.5mu#1\mkern-1.5mu)}{\leq}}
\newcommand\numgeq[1]%
{\stackrel{\scriptscriptstyle(\mkern-1.5mu#1\mkern-1.5mu)}{\geq}}
\newcommand\numimp[1]%
{\stackrel{\scriptscriptstyle(\mkern-1.5mu#1\mkern-1.5mu)}{\implies}}
\usepackage[titlenumbered,ruled,linesnumbered]{algorithm2e}
\SetAlCapNameFnt{\footnotesize}
\SetAlCapFnt{\footnotesize}

\usepackage{fancyhdr} 
\usepackage{tabularx}
\usepackage{multirow}
\usepackage{dsfont}
\usepackage{listings}
\usepackage{tcolorbox}
\newcommand{\acro}[1]{\textsc{#1}\xspace}
\newcommand{\acros}[1]{\textsc{#1}s\xspace}
\newcommand{\acrop}[1]{\textsc{#1}\xspace}

\newcommand{\bft}{\acro{bft}}
\newcommand{\rampart}{\acro{rampart}}
\newcommand{\hotstuff}{\acro{hotstuff}}
\newcommand{\uprightt}{\acro{upright}}
\newcommand{\securering}{\acro{secure-ring}}
\newcommand{\scada}{\acro{scada}}
\newcommand{\vmfit}{\acro{vm-fit}}
\newcommand{\smart}{\acro{smart}}
\newcommand{\phoenix}{\acro{phoenix}}
\newcommand{\sbft}{\acro{sbft}}
\newcommand{\raft}{\acro{raft}}
\newcommand{\wormit}{\acro{worm-it}}
\newcommand{\mttf}{\acro{mttf}}

\newcommand{\crutial}{\acro{crutial}}
\newcommand{\itcisprr}{\acro{itcis-prr}}
\newcommand{\recover}{\acro{recover}}
\newcommand{\spire}{\acro{spire}}

\newcommand{\skynet}{\acro{skynet}}
\newcommand{\reminbft}{\acro{reminbft}}
\newcommand{\minbft}{\acro{minbft}}
\newcommand{\fifo}{\acro{fifo}}
\newcommand{\cheapbft}{\acro{cheapbft}}
\newcommand{\usig}{\acro{usig}}
\newcommand{\rsa}{\acro{rsa}}
\newcommand{\minzyzzyva}{\acro{minzyzzyva}}
\newcommand{\hq}{\acro{hq}}
\newcommand{\flp}{\acro{flp}}
\newcommand{\pbft}{\acro{pbft}}
\newcommand{\zyzzyva}{\acro{zyzzyva}}
\newcommand{\prrw}{\acro{prrw}}
\newcommand{\scit}{\acro{scit}}
\newcommand{\coca}{\acro{coca}}
\newcommand{\itua}{\acro{itua}}
\newcommand{\maftia}{\acro{maftia}}
\newcommand{\itsi}{\acro{itsi}}
\newcommand{\sitar}{\acro{sitar}}
\newcommand{\btr}{\acro{btr}}
\newcommand{\kl}{\acro{kl}}

\newcommand{\mdp}{\acro{mdp}}
\newcommand{\tpp}{\acrop{tp-2}}
\newcommand{\pomdp}{\acro{pomdp}}
\newcommand{\cmdp}{\acro{cmdp}}

\newcommand{\pomdps}{\acros{pomdp}}

\newcommand{\ids}{\acro{ids}}

\newcommand{\tolerancee}{\acro{tolerance}}

\newcommand{\ssh}{\acro{ssh}}
\newcommand{\irc}{\acro{irc}}

\newcommand{\debian}{\acro{debian}}

\newcommand{\capp}{\acro{cap}}
\newcommand{\dns}{\acro{dns}}
\newcommand{\snort}{\acro{snort}}
\newcommand{\http}{\acro{http}}

\newcommand{\netem}{\acro{netem}}

\newcommand{\tcpp}{\acro{tcp}}

\newcommand{\syn}{\acro{syn}}

\newcommand{\ubuntu}{\acro{ubuntu}}

\newcommand{\ftp}{\acro{ftp}}

\newcommand{\cve}{\acro{cve}}
\newcommand{\icmp}{\acro{icmp}}
\newcommand{\cwe}{\acro{cwe}}

\newcommand{\dvwa}{\acro{dvwa}}
\newcommand{\telnet}{\acro{telnet}}
\newcommand{\pspace}{\acro{pspace}}

\newcommand{\ppo}{\acro{ppo}}
\newcommand{\cem}{\acro{cem}}
\newcommand{\cbc}{\acro{cbc}}
\newcommand{\ipp}{\acro{ip}}
\newcommand{\de}{\acro{de}}
\newcommand{\spsa}{\acro{spsa}}
\newcommand{\bo}{\acro{bo}}

\lstdefinestyle{mystyle}{
  backgroundcolor=\color{backcolour},
  commentstyle=\color{codegreen},
  keywordstyle=\color{magenta},
  numberstyle=\tiny\color{codegray},
  stringstyle=\color{codepurple},
  basicstyle=\ttfamily\footnotesize,
  breakatwhitespace=false,
  breaklines=true,
  captionpos=b,
  keepspaces=true,
  showspaces=false,
  showstringspaces=false,
  showtabs=false,
  tabsize=2,
  xleftmargin=50pt,
  xrightmargin=50pt
}
\lstset{style=mystyle}
\pagestyle{fancy}

\lhead{}\chead{}\rhead{}
\lfoot{}\cfoot{\thepage}\rfoot{}
\makeatletter
\makeatother
\usepackage{tikz,pgfplots}
\usepackage{pgfplots}
\usepackage{pgfplotstable}
\definecolor{gray2}{HTML}{ededed}
\definecolor{gray3}{HTML}{F5F5F5}
\usetikzlibrary{shapes.geometric,backgrounds,patterns, trees}
\usetikzlibrary{3d,decorations.text,shapes.arrows,positioning,fit,backgrounds}
\usetikzlibrary{positioning, decorations.pathmorphing, shapes}
\usetikzlibrary{decorations.pathreplacing}
\usetikzlibrary{shapes.geometric,backgrounds,patterns, trees}
\usetikzlibrary{spy}
\usetikzlibrary{arrows.meta,
  bending,
  intersections,
  quotes,
  shapes.geometric}
\usetikzlibrary{automata, positioning}
\usepgfplotslibrary{fillbetween}
\pgfmathdeclarefunction{gauss}{2}{%
  \pgfmathparse{1/(#2*sqrt(2*pi))*exp(-((x-#1)^2)/(2*#2^2))}%
}
\usetikzlibrary{shapes,arrows}
\usetikzlibrary{arrows.meta}
\usetikzlibrary{positioning}
\tikzset{set/.style={draw,circle,inner sep=0pt,align=center}}
\usetikzlibrary{automata, positioning}
\usetikzlibrary{shapes,shadows}
\tikzstyle{abstractbox} = [draw=black, fill=white, rectangle,
inner sep=10pt, style=rounded corners, drop shadow={fill=black,
  opacity=1}]
\tikzstyle{abstracttitle} =[fill=white]
\usetikzlibrary{calc,positioning,shapes.geometric}
\usetikzlibrary{arrows.meta,arrows}

\DeclareMathOperator*{\argmin}{arg\,min}

\DeclareMathOperator*{\minimize}{\text{\normalfont minimize}}

\usetikzlibrary{matrix}
\tikzstyle{cblue}=[circle, draw, thin,fill=cyan!20, scale=0.8]
\tikzstyle{qgre}=[rectangle, draw, thin,fill=green!20, scale=0.8]
\tikzstyle{rpath}=[ultra thick, red, opacity=0.4]
\tikzstyle{legend_isps}=[rectangle, rounded corners, thin,
fill=gray!20, text=blue, draw]

\tikzstyle{legend_overlay}=[rectangle, rounded corners, thin,
top color= white,bottom color=green!25,
minimum width=2.5cm, minimum height=0.8cm,
pinegreen]
\tikzstyle{legend_phytop}=[rectangle, rounded corners, thin,
top color= white,bottom color=cyan!25,
minimum width=2.5cm, minimum height=0.8cm,
royalblue]
\tikzstyle{legend_general}=[rectangle, rounded corners, thin,
top color= white,bottom color=lavander!25,
minimum width=2.5cm, minimum height=0.8cm,
violet]
\usetikzlibrary{matrix}

\colorlet{myRed}{red!20}
\tikzset{
  rows/.style 2 args={/utils/temp/.style={row ##1/.append style={nodes={#2}}},
    /utils/temp/.list={#1}},
  columns/.style 2 args={/utils/temp/.style={column ##1/.append style={nodes={#2}}},
    /utils/temp/.list={#1}}}
\usetikzlibrary{backgrounds,calc,shadings,shapes.arrows,shapes.symbols,shadows}
\definecolor{switch}{HTML}{006996}

\makeatletter
\pgfkeys{/pgf/.cd,
  parallelepiped offset x/.initial=2mm,
  parallelepiped offset y/.initial=2mm
}
\pgfdeclareshape{parallelepiped}
{
  \inheritsavedanchors[from=rectangle]
  \inheritanchorborder[from=rectangle]
  \inheritanchor[from=rectangle]{north}
  \inheritanchor[from=rectangle]{north west}
  \inheritanchor[from=rectangle]{north east}
  \inheritanchor[from=rectangle]{center}
  \inheritanchor[from=rectangle]{west}
  \inheritanchor[from=rectangle]{east}
  \inheritanchor[from=rectangle]{mid}
  \inheritanchor[from=rectangle]{mid west}
  \inheritanchor[from=rectangle]{mid east}
  \inheritanchor[from=rectangle]{base}
  \inheritanchor[from=rectangle]{base west}
  \inheritanchor[from=rectangle]{base east}
  \inheritanchor[from=rectangle]{south}
  \inheritanchor[from=rectangle]{south west}
  \inheritanchor[from=rectangle]{south east}
  \backgroundpath{
    \southwest \pgf@xa=\pgf@x \pgf@ya=\pgf@y
    \northeast \pgf@xb=\pgf@x \pgf@yb=\pgf@y
    \pgfmathsetlength\pgfutil@tempdima{\pgfkeysvalueof{/pgf/parallelepiped
        offset x}}
    \pgfmathsetlength\pgfutil@tempdimb{\pgfkeysvalueof{/pgf/parallelepiped
        offset y}}
    \def\ppd@offset{\pgfpoint{\pgfutil@tempdima}{\pgfutil@tempdimb}}
    \pgfpathmoveto{\pgfqpoint{\pgf@xa}{\pgf@ya}}
    \pgfpathlineto{\pgfqpoint{\pgf@xb}{\pgf@ya}}
    \pgfpathlineto{\pgfqpoint{\pgf@xb}{\pgf@yb}}
    \pgfpathlineto{\pgfqpoint{\pgf@xa}{\pgf@yb}}
    \pgfpathclose
    \pgfpathmoveto{\pgfqpoint{\pgf@xb}{\pgf@ya}}
    \pgfpathlineto{\pgfpointadd{\pgfpoint{\pgf@xb}{\pgf@ya}}{\ppd@offset}}
    \pgfpathlineto{\pgfpointadd{\pgfpoint{\pgf@xb}{\pgf@yb}}{\ppd@offset}}
    \pgfpathlineto{\pgfpointadd{\pgfpoint{\pgf@xa}{\pgf@yb}}{\ppd@offset}}
    \pgfpathlineto{\pgfqpoint{\pgf@xa}{\pgf@yb}}
    \pgfpathmoveto{\pgfqpoint{\pgf@xb}{\pgf@yb}}
    \pgfpathlineto{\pgfpointadd{\pgfpoint{\pgf@xb}{\pgf@yb}}{\ppd@offset}}
  }
}

\makeatletter
\tikzset{anchor/.append code=\let\tikz@auto@anchor\relax,
  add font/.code=%
  \expandafter\def\expandafter\tikz@textfont\expandafter{\tikz@textfont#1},
  left delimiter/.style 2 args={append after command={\tikz@delimiter{south east}
      {south west}{every delimiter,every left delimiter,#2}{south}{north}{#1}{.}{\pgf@y}}}}
\tikzstyle{sms} = [rectangle callout, draw,very thick, rounded corners, minimum height=20pt]
\makeatletter
\tikzset{anchor/.append code=\let\tikz@auto@anchor\relax,
  add font/.code=%
  \expandafter\def\expandafter\tikz@textfont\expandafter{\tikz@textfont#1},
  left delimiter/.style 2 args={append after command={\tikz@delimiter{south east}
      {south west}{every delimiter,every left delimiter,#2}{south}{north}{#1}{.}{\pgf@y}}}}
\tikzstyle{sms} = [rectangle callout, draw,very thick, rounded corners, minimum height=20pt]
\usetikzlibrary{positioning,calc}
\tikzstyle{block} = [rectangle, draw,
text width=10.5em, text centered, rounded corners, minimum height=4em]
\tikzstyle{line} = [draw, -latex]
\tikzset{
  mybackground9/.style={execute at end picture={
      \begin{scope}[on background layer]
        \draw[black,fill=black!5,rounded corners=6ex] (current bounding box.south west)
        rectangle (current bounding box.north east);
        \node[draw,fill=white,ellipse,anchor=west,inner sep=1pt,minimum width=4ex] at (current bounding box.north
        west){#1};
      \end{scope}
    }},
}
\newcommand{\Crosss}{$\mathbin{\tikz [x=1.4ex,y=1.4ex,line width=.2ex] \draw (0,0) -- (1,1) (0,1) -- (1,0);}$}%
\tikzset{
  mybackground13/.style={execute at end picture={
      \begin{scope}[on background layer]
        \draw[black, fill=gray2, rounded corners=4ex] (current bounding box.south west)
        rectangle (current bounding box.north east);
        \node[draw,fill=white,ellipse,anchor=west,inner sep=1pt,minimum width=4ex] at (current bounding box.north
        west){#1};
      \end{scope}
    }},
}
\tikzset{
  mybackground14/.style={execute at end picture={
      \begin{scope}[on background layer]
        \draw[black, rounded corners=2ex] (current bounding box.south west)
        rectangle (current bounding box.north east);
        \node[draw,fill=white,ellipse,anchor=west,inner sep=1pt,minimum width=4ex] at (current bounding box.north
        west){#1};
      \end{scope}
    }},
}

\tikzset{
  mybackground6/.style={execute at end picture={
      \begin{scope}[on background layer]
        \draw[black,rounded corners=1ex, line width=0.15mm] (current bounding box.south west)
        rectangle (current bounding box.north east);
        \node[draw,fill=white,ellipse,anchor=west,inner sep=1pt,minimum width=4ex] at (current bounding box.north
        west){#1};
      \end{scope}
    }},
}

\tikzset{
  mybackground11/.style={execute at end picture={
      \begin{scope}[on background layer]
        \draw[black, fill=Black!80!Sepia!9, rounded corners=6ex] (current bounding box.south west)
        rectangle (current bounding box.north east);
        \node[draw,fill=white,ellipse,anchor=west,inner sep=1pt,minimum width=4ex] at (current bounding box.north
        west){#1};
      \end{scope}
    }},
}

\tikzset{
  mybackground15/.style={execute at end picture={
      \begin{scope}[on background layer]
        \draw[black, fill=Black!80!Sepia!9, rounded corners=3ex] (current bounding box.south west)
        rectangle (current bounding box.north east);
        \node[draw,fill=white,ellipse,anchor=west,inner sep=1pt,minimum width=4ex] at (current bounding box.north
        west){#1};
      \end{scope}
    }},
}

\tikzset{
  mybackground12/.style={execute at end picture={
      \begin{scope}[on background layer]
        \draw[black, fill=Black!40!Emerald!30, rounded corners=3ex, line width=0.3mm] (current bounding box.south west)
        rectangle (current bounding box.north east);
      \end{scope}
    }},
}
\tikzset{
  mybackground18/.style={execute at end picture={
      \begin{scope}[on background layer]
        \draw[black, draw, fill=gray3, rounded corners=0ex] (current bounding box.south west)
        rectangle (current bounding box.north east);
        \node[draw,fill=white,rectangle,anchor=west,inner sep=2pt,minimum width=4ex, rounded corners=0.7ex] at (current bounding box.north
        west){#1};
      \end{scope}
    }},
}
\tikzset{
  mybackground58/.style={execute at end picture={
      \begin{scope}[on background layer]
        \draw[black, fill=blue!40!black!5, rounded corners=1ex] (current bounding box.south west)
        rectangle (current bounding box.north east);
        \node[draw,fill=white,ellipse,anchor=west,inner sep=1pt,minimum width=4ex, rounded corners=1ex] at (current bounding box.north
        west){#1};
      \end{scope}
    }},
}
\tikzset{l3 switch/.style={
    parallelepiped,fill=switch, draw=white,
    minimum width=0.75cm,
    minimum height=0.75cm,
    parallelepiped offset x=1.75mm,
    parallelepiped offset y=1.25mm,
    path picture={
      \node[fill=white,
      circle,
      minimum size=6pt,
      inner sep=0pt,
      append after command={
        \pgfextra{
          \foreach \angle in {0,45,...,360}
          \draw[-latex,fill=white] (\tikzlastnode.\angle)--++(\angle:2.25mm);
        }
      }
      ]
      at ([xshift=-0.75mm,yshift=-0.5mm]path picture bounding box.center){};
    }
  },
  ports/.style={
    line width=0.3pt,
    top color=gray!20,
    bottom color=gray!80
  },
  rack switch/.style={
    parallelepiped,fill=white, draw,
    minimum width=1.25cm,
    minimum height=0.25cm,
    parallelepiped offset x=2mm,
    parallelepiped offset y=1.25mm,
    xscale=-1,
    path picture={
      \draw[top color=gray!5,bottom color=gray!40]
      (path picture bounding box.south west) rectangle
      (path picture bounding box.north east);
      \coordinate (A-west) at ([xshift=-0.2cm]path picture bounding box.west);
      \coordinate (A-center) at ($(path picture bounding box.center)!0!(path
      picture bounding box.south)$);
      \foreach \x in {0.275,0.525,0.775}{
        \draw[ports]([yshift=-0.05cm]$(A-west)!\x!(A-center)$)
        rectangle +(0.1,0.05);
        \draw[ports]([yshift=-0.125cm]$(A-west)!\x!(A-center)$)
        rectangle +(0.1,0.05);
      }
      \coordinate (A-east) at (path picture bounding box.east);
      \foreach \x in {0.085,0.21,0.335,0.455,0.635,0.755,0.875,1}{
        \draw[ports]([yshift=-0.1125cm]$(A-east)!\x!(A-center)$)
        rectangle +(0.05,0.1);
      }
    }
  },
  server/.style={
    parallelepiped,
    fill=white, draw,
    minimum width=0.35cm,
    minimum height=0.75cm,
    parallelepiped offset x=3mm,
    parallelepiped offset y=2mm,
    xscale=-1,
    path picture={
      \draw[top color=gray!5,bottom color=gray!40]
      (path picture bounding box.south west) rectangle
      (path picture bounding box.north east);
      \coordinate (A-center) at ($(path picture bounding box.center)!0!(path
      picture bounding box.south)$);
      \coordinate (A-west) at ([xshift=-0.575cm]path picture bounding box.west);
      \draw[ports]([yshift=0.1cm]$(A-west)!0!(A-center)$)
      rectangle +(0.2,0.065);
      \draw[ports]([yshift=0.01cm]$(A-west)!0.085!(A-center)$)
      rectangle +(0.15,0.05);
      \fill[black]([yshift=-0.35cm]$(A-west)!-0.1!(A-center)$)
      rectangle +(0.235,0.0175);
      \fill[black]([yshift=-0.385cm]$(A-west)!-0.1!(A-center)$)
      rectangle +(0.235,0.0175);
      \fill[black]([yshift=-0.42cm]$(A-west)!-0.1!(A-center)$)
      rectangle +(0.235,0.0175);
    }
  },
}

\usetikzlibrary{calc, shadings, shadows, shapes.arrows}
\tikzset{cross/.style={cross out, draw=black, minimum size=2*(#1-\pgflinewidth), inner sep=0pt, outer sep=0pt},
  cross/.default={1pt}}
\tikzset{%
  interface/.style={draw, rectangle, rounded corners, font=\LARGE\sffamily},
  ethernet/.style={interface, fill=yellow!50},
  serial/.style={interface, fill=green!70},
  speed/.style={sloped, anchor=south, font=\large\sffamily},
  route/.style={draw, shape=single arrow, single arrow head extend=4mm,
    minimum height=1.7cm, minimum width=3mm, white, fill=switch!20,
    drop shadow={opacity=.8, fill=switch}, font=\tiny}
}
%

\makeatletter
\pgfdeclareradialshading[tikz@ball]{cloud}{\pgfpoint{-0.275cm}{0.4cm}}{%
  color(0cm)=(tikz@ball!75!white);
  color(0.1cm)=(tikz@ball!85!white);
  color(0.2cm)=(tikz@ball!95!white);
  color(0.7cm)=(tikz@ball!89!black);
  color(1cm)=(tikz@ball!75!black)
}
\tikzoption{cloud color}{\pgfutil@colorlet{tikz@ball}{#1}%
  \def\tikz@shading{cloud}\tikz@addmode{\tikz@mode@shadetrue}}
\makeatother

\tikzset{my cloud/.style={
    cloud, draw, aspect=2,
    cloud color={gray!5!white}
  }
}

\hyphenation{de-di-ca-ted}

\usepackage{etoolbox}
\makeatletter
\patchcmd{\@makecaption}
{\scshape}
{}
{}
{}
\makeatother

\hyphenation{op-tical net-works semi-conduc-tor}

\allowdisplaybreaks

\begin{document}
\bstctlcite{MyBSTcontrol}
\title{Intrusion Tolerance for Networked Systems \\through Two-Level Feedback Control}
\author{\IEEEauthorblockN{Kim Hammar\IEEEauthorrefmark{2} and Rolf Stadler\IEEEauthorrefmark{2}}
  \IEEEauthorblockA{\IEEEauthorrefmark{2}
    KTH Royal Institute of Technology, Sweden\\
    Email: \{kimham, stadler\}@kth.se
  }
  \today
}
\maketitle
\begin{abstract}
We formulate intrusion tolerance for a system with service replicas as a two-level optimal control problem. On the local level node controllers perform intrusion recovery, and on the global level a system controller manages the replication factor. The local and global control problems can be formulated as classical problems in operations research, namely, the machine replacement problem and the inventory replenishment problem. Based on this formulation, we design \tolerancee, a novel control architecture for intrusion-tolerant systems. We prove that the optimal control strategies on both levels have threshold structure and design efficient algorithms for computing them. We implement and evaluate \tolerancee in an emulation environment where we run $10$ types of network intrusions. The results show that \tolerancee can improve service availability and reduce operational cost compared with state-of-the-art intrusion-tolerant systems.
\end{abstract}
\begin{IEEEkeywords}
Intrusion tolerance, Byzantine fault tolerance, \bft, intrusion recovery, optimal control, \pomdp, \mdp, \cmdp.
\end{IEEEkeywords}

\section{Introduction}
As our reliance on online services is growing, there is an increasing demand for reliable systems that provide correct service without disruption. Traditionally, the main cause of disruption in networked systems has been hardware failures and power outages \cite{fault_tolerance_orig,cristian1991understanding}. While tolerance against these types of failures is important, a growing source of disruptions is network intrusion \cite{cs_trend}.

Intrusions are fundamentally different from hardware failures as an attacker can behave arbitrarily, i.e., Byzantine, which leads to unanticipated failure behavior. Due to the high costs of such failures and the fact that it is impractical to prevent all intrusions, the ability to \textit{tolerate} intrusions becomes necessary \cite{resilience_nature}. This ability is particularly important for safety-critical applications, e.g., real-time control systems \cite{scada_intrusion_tolerance_practice,wolff_space_byzantine,614114,Lala1986,1455383,259424}, control-planes for software defined networks \cite{sdn_byzantine}, \scada systems \cite{8416480,6576306}, and e-commerce applications \cite{9713988}.

We call a system \textit{intrusion-tolerant} if it provides \textit{correct service} while intrusions occur \cite{deswarte1991intrusion,10.1007/3-540-45177-3_1,1290202}. The common approach to build intrusion-tolerant systems is to replicate the system over a set of \textit{nodes}, each of which can respond to service requests. Through such redundancy, \textit{compromised} and \textit{crashed} nodes can be substituted by \textit{healthy} nodes.

Current intrusion-tolerant systems are based on three main building blocks: (\textit{i}) a protocol for service replication that tolerates a subset of compromised and crashed nodes; (\textit{ii}) a replication strategy that adjusts the replication factor; and (\textit{iii}) a recovery strategy that determines when to recover potentially compromised nodes \cite{deswarte1991intrusion,10.1007/3-540-45177-3_1,resilience_nature}.

Replication protocols that satisfy the condition in (\textit{i}) are called \textit{Byzantine fault-tolerant} (\bft) and have been studied extensively (see survey \cite{bft_systems_survey}). In comparison, few prior works have studied (\textit{ii}) and (\textit{iii}). As a consequence, current intrusion-tolerant systems typically use a fixed replication factor \cite{virt_replica, 4365686, spire} and rely on inefficient recovery strategies, such as periodic recoveries \cite{pbft,virt_replica}, heuristic rule-based recoveries \cite{10.1007/3-540-45177-3_1}, or manual recoveries by system administrators \cite{int_prevention,rampart}.

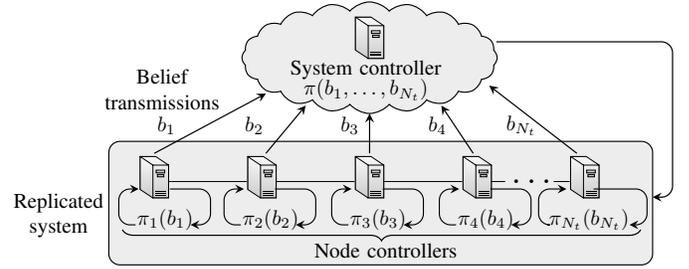
\begin{figure}
  \centering
  \scalebox{1.06}{
          \begin{tikzpicture}[fill=white, >=stealth,
    node distance=3cm,
    database/.style={
      cylinder,
      cylinder uses custom fill,
      shape border rotate=90,
      aspect=0.25,
      draw}]

    \tikzset{
node distance = 9em and 4em,
sloped,
   box/.style = {%
    shape=rectangle,
    rounded corners,
    draw=blue!40,
    fill=blue!15,
    align=center,
    font=\fontsize{12}{12}\selectfont},
 arrow/.style = {%
    line width=0.1mm,
    -{Triangle[length=5mm,width=2mm]},
    shorten >=1mm, shorten <=1mm,
    font=\fontsize{8}{8}\selectfont},
}

\draw[fill=gray2, rounded corners] (-1.27,-0.35) rectangle node (m1){} (5.55,1.2);

\node [scale=0.75] (node1) at (2,0.75) {
\begin{tikzpicture}[fill=white, >=stealth,
    node distance=3cm,
    database/.style={
      cylinder,
      cylinder uses custom fill,
      shape border rotate=90,
      aspect=0.25,
      draw}]
    \tikzset{
node distance = 9em and 4em,
sloped,
   box/.style = {%
    shape=rectangle,
    rounded corners,
    draw=blue!40,
    fill=blue!15,
    align=center,
    font=\fontsize{12}{12}\selectfont},
 arrow/.style = {%
    line width=0.1mm,
    shorten >=1mm, shorten <=1mm,
    font=\fontsize{8}{8}\selectfont},
}

\node[scale=1] (emulation_system) at (-0.07,-2.7)
{
\begin{tikzpicture}

\node[scale=0.9] (cloud_1) at (3,0.5)
{
  \begin{tikzpicture}

\node[server, scale=0.8](n1) at (-4,0) {};
\node[server, scale=0.8](n2) at (-2,0) {};
\node[server, scale=0.8](n3) at (0,0) {};
\node[server, scale=0.8](n4) at (2,0) {};
\node[server, scale=0.8](n5) at (4,0) {};
\node[inner sep=0pt,align=center, scale=1.8, color=black] (hacker) at (3,0) {
$\hdots$
};

\draw[-, color=black] (-3.86,0) to (-2.4,0);
\draw[-, color=black] (-1.86,0) to (-0.4,0);
\draw[-, color=black] (0.14,0) to (1.6,0);
\draw[-, color=black] (2.14,0) to (2.4,0);
\draw[-, color=black] (3.3,0) to (3.6,0);

    \end{tikzpicture}
  };
    \end{tikzpicture}
  };
\end{tikzpicture}
};
\node[scale=1] (emulation_system) at (-0.57,0.35)
{
  \begin{tikzpicture}

\node[inner sep=0pt,align=center, scale=0.75, color=black] (hacker) at (-0.52,0.22) {
  $\pi_1(b_1)$
};
\draw[->, color=black, rounded corners] (-0.94,0.15) to (-1.15,0.15) to (-1.15, 0.6) to (-0.9, 0.6);
\draw[->, color=black, rounded corners] (-0.51,0.6) to (0,0.6) to (0, 0.15) to (-0.25, 0.15);
    \end{tikzpicture}
};

\node[scale=1] (emulation_system) at (0.75,0.35)
{
  \begin{tikzpicture}

\node[inner sep=0pt,align=center, scale=0.75, color=black] (hacker) at (-0.52,0.22) {
  $\pi_2(b_2)$
};
\draw[->, color=black, rounded corners] (-0.94,0.15) to (-1.15,0.15) to (-1.15, 0.6) to (-0.9, 0.6);
\draw[->, color=black, rounded corners] (-0.51,0.6) to (0,0.6) to (0, 0.15) to (-0.25, 0.15);
    \end{tikzpicture}
};

\node[scale=1] (emulation_system) at (2.1,0.35)
{
  \begin{tikzpicture}

\node[inner sep=0pt,align=center, scale=0.75, color=black] (hacker) at (-0.52,0.22) {
  $\pi_3(b_3)$
};
\draw[->, color=black, rounded corners] (-0.94,0.15) to (-1.15,0.15) to (-1.15, 0.6) to (-0.9, 0.6);
\draw[->, color=black, rounded corners] (-0.51,0.6) to (0,0.6) to (0, 0.15) to (-0.25, 0.15);
    \end{tikzpicture}
};

%

\node[scale=1] (emulation_system) at (3.45,0.35)
{
  \begin{tikzpicture}

\node[inner sep=0pt,align=center, scale=0.75, color=black] (hacker) at (-0.52,0.22) {
  $\pi_4(b_4)$
};
\draw[->, color=black, rounded corners] (-0.94,0.15) to (-1.15,0.15) to (-1.15, 0.6) to (-0.9, 0.6);
\draw[->, color=black, rounded corners] (-0.51,0.6) to (0,0.6) to (0, 0.15) to (-0.25, 0.15);
    \end{tikzpicture}
  };

\node[scale=1] (emulation_system) at (4.8,0.35)
{
  \begin{tikzpicture}

\node[inner sep=0pt,align=center, scale=0.75, color=black] (hacker) at (-0.52,0.22) {
  $\pi_{N_t}(b_{N_t})$
};
\draw[->, color=black, rounded corners] (-1.09,0.15) to (-1.2,0.15) to (-1.2, 0.6) to (-0.85, 0.6);
\draw[->, color=black, rounded corners] (-0.51,0.6) to (0.16,0.6) to (0.16, 0.15) to (-0.06, 0.15);
    \end{tikzpicture}
};
%

  \node[cloud, fill=gray2, cloud puffs=20, cloud puff arc=100,
  minimum width=4cm, minimum height=1.75cm, aspect=1, draw=black, scale=0.8] at (2,2.25) {};

\node [scale=0.75] (node1) at (2,2.5) {
\begin{tikzpicture}[fill=white, >=stealth,
    node distance=3cm,
    database/.style={
      cylinder,
      cylinder uses custom fill,
      shape border rotate=90,
      aspect=0.25,
      draw}]
    \tikzset{
node distance = 9em and 4em,
sloped,
   box/.style = {%
    shape=rectangle,
    rounded corners,
    draw=blue!40,
    fill=blue!15,
    align=center,
    font=\fontsize{12}{12}\selectfont},
 arrow/.style = {%
    line width=0.1mm,
    shorten >=1mm, shorten <=1mm,
    font=\fontsize{8}{8}\selectfont},
}

\node[scale=1] (emulation_system) at (-0.07,-1.85)
{
\begin{tikzpicture}

\node[scale=0.9] (cloud_1) at (3,1.15)
{
  \begin{tikzpicture}

\node[server, scale=0.8](n2) at (1,0) {};


%

    \end{tikzpicture}
  };
    \end{tikzpicture}
  };
\end{tikzpicture}
};

\node[inner sep=0pt,align=center, scale=0.75, color=black] (hacker) at (-0.6,1.85) {
  Belief\\
  transmissions
};


\draw [decorate,decoration={brace,amplitude=4pt,mirror,raise=4.5pt},yshift=0pt,line width=0.15mm]
(-1.1,0.25) -- (5.4,0.25) node [black,midway,xshift=0.2cm] {};

\node[inner sep=0pt,align=center, scale=0.75, color=black] (hacker) at (2.25,-0.17) {
  Node controllers
};

\node[inner sep=0pt,align=center, scale=0.75, color=black] (hacker) at (-1.9,0.27) {
  Replicated\\
  system
};
\node[inner sep=0pt,align=center, scale=0.75, color=black] (hacker) at (2,2.1) {
  System controller
};

\node[inner sep=0pt,align=center, scale=0.75, color=black] (hacker) at (2,1.85) {
  $\pi(b_1,\hdots,b_{N_t})$
};

\draw[->, black, thick, line width=0.15mm] (-0.7,1.05) to (0.7,1.8);
\draw[->, black, thick, line width=0.15mm] (0.7,1.05) to (1.2,1.65);
\draw[->, black, thick, line width=0.15mm] (2,1.05) to (2,1.55);
\draw[->, black, thick, line width=0.15mm] (3.3,1.05) to (2.9,1.65);
\draw[->, black, thick, line width=0.15mm] (4.6,1.05) to (3.5,1.9);

\node[inner sep=0pt,align=center, scale=0.75, color=black] (hacker) at (-0.5,1.4) {
  $b_1$
};
\node[inner sep=0pt,align=center, scale=0.75, color=black] (hacker) at (0.6,1.4) {
  $b_2$
};
\node[inner sep=0pt,align=center, scale=0.75, color=black] (hacker) at (1.8,1.4) {
  $b_3$
};
\node[inner sep=0pt,align=center, scale=0.75, color=black] (hacker) at (2.9,1.4) {
  $b_4$
};
\node[inner sep=0pt,align=center, scale=0.75, color=black] (hacker) at (3.95,1.4) {
  $b_{N_t}$
};

\draw[->, black, thick, line width=0.15mm, rounded corners] (3.59,2.45) to (5.8,2.45) to (5.8, 0.5) to (5.55, 0.5);





\end{tikzpicture}
  }
  \caption{Two-level feedback control for intrusion tolerance; node controllers with strategies $\pi_{1},\hdots, \pi_{N_t}$ compute belief states $b_1,\hdots,b_{N_t}$ and make local recovery decisions; a global system controller with strategy $\pi$ receives belief states and manages the replication factor $N_t$.}
  \label{fig:tolerance_8}
\end{figure}
In this paper, we address the above limitations and present \tolerancee, which stands for \underline{T}w\underline{o}-\underline{l}\underline{e}vel \underline{r}ecovery \underline{an}d replication \underline{c}ontrol with f\underline{e}edback. \tolerancee is a control architecture for intrusion-tolerant systems with two levels of control (see Fig. \ref{fig:tolerance_8}). On the local level node controllers perform intrusion recovery, and on the global level a system controller manages the replication factor. The associated control problems can be formulated as two classical problems in operations research, namely, the \textit{machine replacement problem} \cite{or_machine_replace_2} and the \textit{inventory replenishment problem} \cite{inventory_replenishment_donaldson}. Based on this formulation, we prove structural properties of the optimal control strategies and design efficient algorithms for computing them.

Key benefits of the \tolerancee architecture are: (\textit{i}) through feedback in terms of alerts and log files, the system can promptly adapt to network intrusions; (\textit{ii}) by using two control levels rather than one, the system can tolerate partial failures and network partitions; and (\textit{iii}) through the connection with classical problems from operations research, the system can leverage well-established control techniques with theoretical guarantees.

To assess the performance of \tolerancee, we implement it in an emulation environment where we run $10$ types of network intrusions. The results show that \tolerancee can achieve higher service availability and lower operational cost than state-of-the-art intrusion-tolerant systems.

Our contributions can be summarized as follows:
\begin{enumerate}
\item We present and evaluate \tolerancee, a novel control architecture for intrusion-tolerant systems that uses two levels of control to decide when to perform recovery and when to increase the replication factor.
\item We prove properties of the optimal control strategies and design efficient algorithms for computing them.
\item We implement \tolerancee in an emulation environment and evaluate its performance against $10$ types of network intrusions.
\end{enumerate}

\textbf{Software and data availability.} Source code, container images, and a dataset of $6400$ intrusion traces are available in the supplementary material \cite{supplementary,csle_docs}.
\section{The Intrusion Tolerance Use Case}
We consider a set of nodes that collectively offer a service to a client population. Each node is segmented into two domains: an application domain, which runs a service replica, and a privileged domain, which runs security and control functions. The replicas are coordinated through a replication protocol that relies on digital signatures and guarantees correct service if no more than $f$ nodes are compromised or crashed simultaneously.

Clients access the service through gateways, which also are accessible to an attacker. The attacker's goal is to intrude on the system and compromise replicas while avoiding detection. We assume that the attacker a) does not have physical access to nodes; b) can not forge digital signatures; and c) can only access the service replicas, not the privileged domains (i.e., we consider the \textit{hybrid failure model} \cite{wormit}). Apart from these restrictions, the attacker can control a compromised replica in arbitrary, i.e. Byzantine, ways. It can shut it down, delay its service responses, communicate with other replicas, etc.

To prevent the number of compromised and crashed nodes to exceed $f$, we consider three types of response actions: (\textit{i}) recover compromised nodes; (\textit{ii}) evict crashed nodes from the system; and (\textit{iii}) add new nodes. Each of these actions incurs a cost that must be weighed against the security benefit.

Note that, while we focus on the response actions: eviction, addition, and recovery of nodes in this paper, the use case can be extended to include additional response actions, such as rate limiting \cite{8692706} and access control \cite{hammar_stadler_tnsm}. This extension is however beyond the scope of this paper.
\section{Background on Intrusion-Tolerant Systems}
\subsection{Fault-Tolerant Systems}
Research on fault-tolerant systems has almost a century-long history with the seminal work being made by von Neumann \cite{vN56} and Shannon \cite{MOORE1956191} in 1956. The early work focused on tolerance against hardware failures. Since then the field has broadened to include tolerance against software bugs, operator mistakes, and malicious attacks \cite{old_fault_tolerance,avivzienis1967design,fault_tolerance_orig,cristian1991understanding,byzantine_generals_lamport}.

The common approach to build a fault-tolerant service is \textit{redundancy}, whereby the service is provided by a set of replicas. Through such redundancy, compromised and crashed replicas can be substituted by healthy replicas as long as the healthy replicas can coordinate their service responses. This coordination problem is known as the \textit{consensus} problem.

\begin{figure*}
  \centering
  \scalebox{1.05}{
    \input{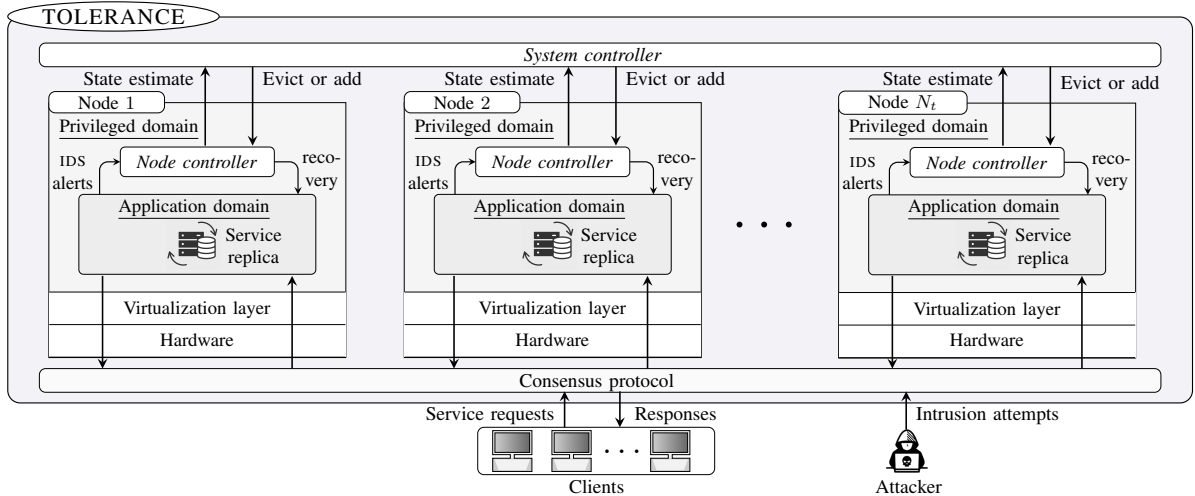}
  }
  \caption{The \tolerancee architecture; $N_t$ nodes provide a replicated service to a client population; service responses are coordinated through an intrusion-tolerant consensus protocol; local node controllers decide when to perform recovery and a global system controller manages the replication factor $N_t$.}
  \label{fig:tolerance_6}
\end{figure*}
\subsection{Consensus}\label{sec:dist_consensus}
Consensus is the problem of reaching agreement among distributed nodes subject to failures \cite{cachin}. The solvability of consensus depends on synchrony and failure assumptions.

The main synchrony assumptions are: (\textit{i}) the \textit{synchronous model}, where there is an upper bound on the communication delay between nodes; (\textit{ii}) the \textit{partially synchronous model}, where an upper bound exists but the system may have periods of instability where the bound is violated; and (\textit{iii}) the \textit{asynchronous model}, where no bound exists \cite{cachin, partial_synchrony,dolev_partial_sync}.

The main failure assumptions are: (\textit{i}) the \textit{crash-stop failure model}, where nodes fail by crashing; (\textit{ii}) the \textit{Byzantine failure model}, where nodes fail arbitrarily; and (\textit{iii}) the \textit{hybrid failure model}, where nodes fail arbitrarily but are equipped with trusted components that fail by crashing \cite{dist_computing_book_formal,wormit}.

Deterministic consensus is \textit{not} solvable in the asynchronous model \cite[Thm. 1]{flp}. In the partially synchronous model, however, consensus \textit{is} solvable with $N$ nodes and $f=\frac{N-1}{2}$ crash-stop failures, $f=\frac{N-1}{3}$ Byzantine failures, and $f=\frac{N-1}{2}$ hybrid failures \cite[Thm. 1]{async_consensus_lb}\cite[Thms. 5.8,5.11]{dist_computing_book_formal}\cite[Thm. 2]{giuliana_thesis}\cite[Thm. 1]{ittai_hybrid_failures}. Similarly, consensus \textit{is} solvable in the synchronous model with $f=N-1$ crash-stop failures, $f=\frac{N-1}{2}$ Byzantine failures, and $f=\frac{N-1}{2}$ hybrid failures \cite[Thm. 5.2]{dist_computing_book_formal}\cite[Cor. 14]{synchronous_byzantine_bound} \cite[Thm. 1]{synchronous_byzantine_bound_2}.
\subsection{Intrusion-Tolerant Systems}\label{sec:avg_recov}
Intrusion-tolerant systems extend fault-tolerant systems with intrusion detection, recovery, and response \cite{ift_5,deswarte1991intrusion,10.1007/3-540-45177-3_1,932173}. We call a system \textit{intrusion-tolerant} if it remains secure and operational while intrusions occur \cite{deswarte1991intrusion,resilience_nature}. We use the following metrics to quantify intrusion tolerance:
\begin{enumerate}
\item \textit{Average time-to-recovery $T^{(\mathrm{R})}$}: the average time from node compromise until recovery starts.
\item \textit{Average availability $T^{(\mathrm{A})}$}: the fraction of time where the number of compromised and crashed nodes is at most $f$. (We assume the \textit{primary-partition} model \cite{birman_primary_partition} to circumvent the \capp theorem \cite[Thm. 2]{cap_theorem_1}.)
\item \textit{Frequency of recoveries $F^{(\mathrm{R})}$}: the fraction of time where recovery occurs.
\end{enumerate}
\section{Intrusion Tolerance through \\Two-Level Feedback Control}\label{sec:system_arch}
In this section, we describe \tolerancee: a two-level control architecture for intrusion-tolerant systems (see Fig. \ref{fig:tolerance_6}). It is a distributed system with $N_t \geq 2f+1+k$ \textit{nodes} connected through an authenticated network \cite{reconfigurable_consensus}. Each node runs a \textit{service replica}. The replicas are coordinated through a \textit{reconfigurable} \cite{reconfigurable_consensus} consensus protocol that guarantees \textit{correct service} if no more than $f$ nodes are compromised or crashed simultaneously (e.g., reconfigurable \minbft \cite[\S 4.2]{giuliana_thesis}).

\tolerancee uses two levels of control: local and global. On the local level, each node runs a \textit{node controller} that monitors the service replica through alerts from an Intrusion Detection System (\ids). Based on these alerts, the controller estimates the replica's state, i.e., whether it is compromised or not, and decides when it should be recovered. As each recovery incurs a cost, the challenge for the controller is to balance the recovery costs against the security benefits. To guarantee correct service, at most $k$ nodes are allowed to recover simultaneously.

The global level includes a \textit{system controller} that collects state estimates from the nodes and adjusts the replication factor $N_t$. When deciding if $N_t$ should be increased, the controller faces a classical dilemma in reliability theory \cite{reliability_theory_barlow}. On the one hand, it aims at high redundancy to maximize service availability. On the other hand, it does not want an excessively large and costly system.

Since the only task of the system controller is to execute control actions and communicate with the node controllers, it can be deployed on a standard crash-tolerant system, e.g., a \raft-based system \cite{raft}. For this reason, we consider the probability that the system controller crashes to be negligible.

Similar to the \vmfit and the \wormit architectures \cite{virt_replica, 4365686,wormit}, each node in \tolerancee is segmented into two domains: a \textit{privileged domain}, which can only fail by crashing, and an \textit{application domain}, which may be compromised by an attacker. The controllers and the \ids{}s execute in the privileged domain, whereas the service replicas execute in the application domain. The separation between the two domains can be realized in several ways. One option is to use a secure coprocessor to execute the privileged domain (e.g., \textsc{ibm 4758}) \cite{pbft,ittai_hybrid_failures}. Another option, which does not require special hardware, is to use a security kernel to run the privileged domain, as in the \wormit architecture \cite{wormit}. A third option, used in the \vmfit architecture \cite{virt_replica}, is to separate the application domain from the privileged domain using a secure virtualization layer that can be formally verified \cite{Dam675835}. \tolerancee implements the last option for the following reasons: (\textit{i}) virtualization enables efficient recovery of a compromised replica by replacing its virtual container \cite{4365686,virt_replica}; and (\textit{ii}) virtualization simplifies implementation of software diversification, which reduces the correlation between compromise events across nodes \cite{5958251}.
\subsection{Correctness}
We say that a system provides \textit{correct service} if the healthy replicas satisfy the following properties:
\begin{align*}
  \tag{Liveness} &\text{Each request is eventually executed}.\label{eq:liveness}\\
  \tag{Validity} &\text{Each executed request was sent by a client.}\label{eq:validity}\\
  \tag{Safety} &\text{Each replica executes the same request sequence.}\label{eq:safety}
\end{align*}
\begin{proposition}\label{prop:correctness}
\tolerancee provides correct service if the following holds:
\begin{enumerate}[(a)]
\item An attacker can not forge digital signatures.
\item Network links are authenticated and reliable \cite[p. 42]{cachin}.
\item At most $k$ nodes recover simultaneously and at most $f$ nodes are compromised or crashed simultaneously.
\item $N_t \geq 2f + 1 + k$ at all times $t$.
\item The system is partially synchronous \cite{partial_synchrony}.
\end{enumerate}
\end{proposition}
\begin{proof}[Proof (Sketch)]
(a)-(b) and the fact that the controllers can only fail by crashing imply the hybrid failure model (\S \ref{sec:dist_consensus}). (c)-(d) state that at least $f + 1 + k$ nodes are healthy at all times $t$. These properties together with the tolerance threshold $f = \frac{N_t-1-k}{2}$ of the consensus protocol (e.g., \minbft \cite[\S 4.2]{giuliana_thesis}) imply (\ref{eq:safety}) (\cite[Thms. 1--2]{giuliana_thesis}). Next, it follows from (e) that the healthy nodes will eventually agree on the response to any service request, which allows to circumvent \flp \cite[Thm. 1]{flp} and achieve (\ref{eq:liveness}). Finally, (\ref{eq:validity}) is ensured by the consensus protocol (e.g., \minbft \cite[\S 4.2]{giuliana_thesis}).
\end{proof}
\noindent \textbf{Remark.} By appropriate use of cryptographic methods and firewalls, \tolerancee can be extended to provide confidentiality in addition to (\ref{eq:safety}), (\ref{eq:liveness}), and (\ref{eq:validity}), see e.g., \cite{270414,6038579}. We leave this extension for future work.
\section{Formalizing Intrusion Tolerance As a \\Constrained Two-Level Control Problem}\label{sec:system_model}
To guarantee correct service, the controllers in \tolerancee must ensure that: a) the number of compromised and crashed nodes is at most $f$, which is achieved by recovery; and b) the number of nodes satisfies $N_t \geq 2f+1+k$, which is achieved by replacing crashed nodes (Prop. \ref{prop:correctness}). In the following, we formulate the problem of meeting these constraints while minimizing operational cost as a control problem with a local and a global level. On the local level, node controllers minimize cost while meeting a), and on the global level, the system controller minimizes cost while meeting b).

\vspace{1mm}

\noindent\textbf{Notation.} Random variables are denoted by upper-case letters (e.g., $X$) and their values by lower-case (e.g., $x$). $\mathbb{P}$ is a probability measure. The expectation of an expression $\phi$ with respect to $X$ is written as $\mathbb{E}_X[\phi]$. When $\phi$ includes many random variables that depend on $\pi$, we simply write $\mathbb{E}_{\pi}[\phi]$. $x \sim \phi$ means that $x$ is sampled from $\phi$. We use $\mathbb{P}[x]$ as a shorthand for $\mathbb{P}[X=x]$. Calligraphy letters (e.g., $\mathcal{V}$) represent sets. $\llbracket \cdot \rrbracket$ is the Iverson bracket. $\lfloor \cdot \rfloor$ is the floor function. Further notation is listed in Table \ref{tab:notation}.

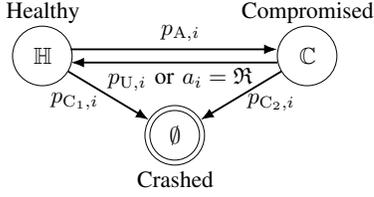
\begin{figure}
  \centering
  \scalebox{1.1}{
    \begin{tikzpicture}[fill=white, >=stealth,
    node distance=3cm,
    database/.style={
      cylinder,
      cylinder uses custom fill,
      shape border rotate=90,
      aspect=0.25,
      draw}]

\node[scale=0.8] (kth_cr) at (0,2.15)
{
  \begin{tikzpicture}

\node[scale=1] (level1) at (-1.7,-5.6)
{
  \begin{tikzpicture}
\node[draw,circle, minimum width=15mm, scale=0.6](s0) at (0,0) {};
\node[draw,circle, minimum width=15mm, scale=0.6](s1) at (4,0) {};
\node[draw,circle, minimum width=15mm, scale=0.6](s2) at (2,-1.2) {};
\node[draw,circle, minimum width=15mm, scale=0.5](s4) at (2,-1.2) {};

\node[inner sep=0pt,align=center, scale=1] (time) at (0.07,0)
{
$\mathbb{H}$
};

\node[inner sep=0pt,align=center, scale=1] (time) at (4.07,0)
{
$\mathbb{C}$
};

\node[inner sep=0pt,align=center, scale=1] (time) at (2.07,-1.2)
{
$\emptyset$
};

\node[inner sep=0pt,align=center, scale=1] (time) at (2.07,-1.86)
{
Crashed
};

\node[inner sep=0pt,align=center, scale=1] (time) at (0.07,0.65)
{
Healthy
};

\node[inner sep=0pt,align=center, scale=1] (time) at (4.07,0.65)
{
Compromised
};

\node[inner sep=0pt,align=center, scale=1] (time) at (0.55,-0.68)
{
$p_{\mathrm{C}_1,i}$
};

\node[inner sep=0pt,align=center, scale=1] (time) at (3.52,-0.7)
{
$p_{\mathrm{C}_2,i}$
};

\node[inner sep=0pt,align=center, scale=1] (time) at (2.145,-0.35)
{
$p_{\mathrm{U},i}$ or $a_i=\mathfrak{R}$
};

\node[inner sep=0pt,align=center, scale=1] (time) at (2.15,0.35)
{
$p_{\mathrm{A},i}$
};

\draw[thick,-{Latex[length=2mm]}] (0.43,0.1) to (3.55,0.1);
\draw[thick,-{Latex[length=2mm]}] (3.55,-0.1) to (0.43,-0.1);
\draw[thick,-{Latex[length=2mm]}] (s1) to (s2);
\draw[thick,-{Latex[length=2mm]}] (s0) to (s2);

    \end{tikzpicture}
  };
    \end{tikzpicture}
  };

\end{tikzpicture}
  }
  \caption{State transition diagram of node $i$ (\ref{eq:recovery_dynamics}): disks represent states; arrows represent state transitions; labels indicate probabilities and conditions for state transition; self-transitions are not shown.}
  \label{fig:state_transition_node}
\end{figure}

\begin{table}
  \centering
  \scalebox{0.9}{
  \begin{tabular}{ll} \toprule
    {\textit{Notation(s)}} & {\textit{Description}} \\ \midrule
    $\mathcal{N}_t, N_t, f$ & Set of nodes, number of nodes, tolerance threshold (Prop. \ref{prop:correctness})\\
    $T^{(\mathrm{R})}$ & Average time-to-recovery when an intrusion occurs (\S \ref{sec:avg_recov})\\
    $F^{(\mathrm{R})}, T^{(\mathrm{A})}$ & Frequency of recoveries, average service availability (\ref{eq:objective_recovery})\\
    $T^{(f)}, k$ & Time to system failure (Fig. \ref{fig:reliability_curves_3}), \# parallel recoveries (Prop. \ref{prop:correctness})\\
    $R(t),R_i(t)$ & Reliability function for the system and for a node (Fig. \ref{fig:reliability_curves_3})\\
    $J_i, J$ & Objectives of node $i$ (\ref{eq:objective_recovery}) and the system controller (\ref{eq:objective_response})\\
    $\pi_{i,t}, \Pi_{\mathrm{N}}$ & Strategy and strategy space of node $i$, $\pi_{i,t} \in \Pi_{\mathrm{N}}$ (\ref{eq:recovery_problem})\\
    $\pi_{i,t}^{\star}, \alpha_i^{\star}$& Optimal strategy and threshold for node $i$ (\ref{eq:recovery_problem})\\
    $s_{i,t},o_{i,t}$ & State (\ref{eq:state_node_evolution}) and observation (\ref{eq:obs_function}) of node $i$ at time $t$\\
    $b_{i,t}$ & Belief state of node $i$ at time $t$ (\ref{eq:belief_def})\\
    $S_{i,t},O_{i,t}, B_{i,t}$ & Random variables with realizations $s_{i,t}$ (\ref{eq:state_node_evolution}), $o_{i,t}$ (\ref{eq:obs_function}), $b_{i,t}$ (\ref{eq:belief_def})\\
    $a_{i,t},c_{i,t}$ & Action and cost of node controller $i$ at time $t$ (\ref{eq:recovery_problem})\\
    $A_{i,t},C_{i,t}$ & Random variables with realizations $a_{i,t}$ and $c_{i,t}$ (\ref{eq:recovery_problem})\\
    $\mathcal{S}_{\mathrm{N}},\mathcal{A}_{\mathrm{N}},\mathcal{O}_{\mathrm{N}}$ & State, action and observation spaces of nodes\\
    $\mathfrak{W},\mathfrak{R} = 0,1$ & The ($\mathfrak{W}$)ait and ($\mathfrak{R}$)ecovery actions (Fig. \ref{fig:state_transition_node}) \\
    $\mathbb{H},\mathbb{C}=0,1$ & The ($\mathbb{H}$)ealthy and ($\mathbb{C}$)compromised node states (Fig. \ref{fig:state_transition_node})\\
    $\emptyset$ & The crashed node state (Fig. \ref{fig:state_transition_node}) \\
    $f_{\mathrm{N},i}, Z_i$ & Transition (\ref{eq:recovery_dynamics}) and observation (\ref{eq:obs_function}) functions for node $i$\\
    $c_{\mathrm{N}}(s_{i,t}, a_{i,t})$ & Cost function for a node (\ref{eq:objective_recovery})\\
    $p_{\mathrm{A},i}$ & Probability that node $i$ is compromised (\ref{eq:recovery_dynamics})\\
    $p_{\mathrm{C}_1,i}$ & Probability that node $i$ crashes in the healthy state (\ref{eq:recovery_dynamics})\\
    $p_{\mathrm{C}_2,i}$ & Probability that node $i$ crashes in the compromised state (\ref{eq:recovery_dynamics})\\
    $p_{\mathrm{U},i}$ & Probability that the service replica of node $i$ is updated (\ref{eq:recovery_dynamics})\\
    $\Delta_{\mathrm{R}}$ & Maximum allowed time between node recoveries (\ref{eq:recovery_problem})\\
    $\mathcal{W},\mathcal{R}$ & Wait and recovery sets for node controllers (Appendix \ref{appendix:theorem_1})\\
    $S_{t}, s_{t}$ & State of the system controller at time $t$ (\ref{eq:transition_system_controller}), $s_t$ realizes $S_t$\\
    $a_{t},c_t$ & Action and cost of system controller at time $t$ (\ref{eq:transition_system_controller})\\
    $A_{t},C_{t}$ & Random variables with realizations $a_t,c_t$ (\ref{eq:response_problem})\\
    $f_{\mathrm{S}}$ & Transition function of the system controller (\ref{eq:transition_system_controller})\\
    $\mathcal{S}_{\mathrm{S}},\mathcal{A}_{\mathrm{S}}$ & State and action spaces of the system controller\\
    $s_{\mathrm{max}}$ & Maximum number of nodes (\S \ref{sec:formal_global_level})\\
    $\pi, \Pi_{\mathrm{S}}$ & Strategy and strategy space of the system controller (\ref{eq:response_problem})\\
    $\epsilon_{\mathrm{A}}$ & Lower bound on the average service availability (\ref{eq:response_problem})\\
    \bottomrule\\
  \end{tabular}
}
  \caption{Notation.}\label{tab:notation}
\end{table}
\subsection{The Local Level: Controlling Intrusion Recovery}\label{sec:local_level}
The local level consists of $N_t$ node controllers, each of which decides when to perform intrusion recovery. We model this control problem as an instance of the \textit{machine replacement problem} from operations research \cite{or_machine_replace_2}.

Let $\mathcal{N}_t \triangleq \{1, 2, \hdots, N_t\}$ be the set of nodes and $\pi_{1,t}, \pi_{2,t},\hdots, \pi_{N,t}$ the corresponding control strategies at time $t$. Node $i$ has state $s_{i,t} \in \mathcal{S}_{\mathrm{N}}$ with three values: $\emptyset$ if it is crashed, $\mathbb{C}$ if it is compromised, and $\mathbb{H}$ if it is healthy (see Fig. \ref{fig:state_transition_node}). Controller $i$ takes action $a_{i,t} \in \mathcal{A}_{\mathrm{N}}$ with two values: $\mathfrak{R}$ is the recovery action and $\mathfrak{W}$ is the wait action. We assume that a recovery action at time $t$ is completed by $t+1$.

The evolution of $s_{i,t}$ can be written as
\begin{align}
  s_{i,t+1} &\sim f_{\mathrm{N},i}(\cdot \mid S_{i,t}=s_{i,t}, A_{i,t}=a_{i,t}),\label{eq:state_node_evolution}
\end{align}
where $f_{\mathrm{N},i}$ is a Markovian transition function defined as
\begin{subequations}\label{eq:recovery_dynamics}
  \begin{align}
    &f_{\mathrm{N},i}(\emptyset \mid \emptyset, \cdot) \triangleq 1 \label{eq:node_transition_1}\\
    &f_{\mathrm{N},i}(\emptyset \mid \mathbb{H}, \cdot) \triangleq p_{\mathrm{C}_1,i}\label{eq:node_transition_2}\\
    &f_{\mathrm{N},i}(\emptyset \mid \mathbb{C}, \cdot) \triangleq p_{\mathrm{C}_2,i}\label{eq:node_transition_8}\\
    &f_{\mathrm{N},i}(\mathbb{H} \mid \mathbb{H}, \mathfrak{R})\triangleq (1-p_{\mathrm{A},i})(1-p_{\mathrm{C}_1,i})\label{eq:node_transition_3}\\
    &f_{\mathrm{N},i}(\mathbb{H} \mid \mathbb{H}, \mathfrak{W}) \triangleq (1-p_{\mathrm{A},i})(1-p_{\mathrm{C}_1,i})\label{eq:node_transition_3_2}\\
    &f_{\mathrm{N},i}(\mathbb{H} \mid \mathbb{C}, \mathfrak{R})\triangleq(1-p_{\mathrm{A},i})(1-p_{\mathrm{C}_2,i})\label{eq:node_transition_9}\\
    &f_{\mathrm{N},i}(\mathbb{H} \mid \mathbb{C}, \mathfrak{W}) \triangleq (1-p_{\mathrm{C}_2,i})p_{\mathrm{U},i}\label{eq:node_transition_4}\\
    &f_{\mathrm{N},i}(\mathbb{C} \mid \mathbb{H}, \mathfrak{W}) \triangleq f_{\mathrm{N},i}(\mathbb{C} \mid \mathbb{H}, \mathfrak{R})=(1-p_{\mathrm{C_1},i})p_{\mathrm{A},i}\label{eq:node_transition_6}  \\
    &f_{\mathrm{N},i}(\mathbb{C} \mid \mathbb{C}, \mathfrak{R})\triangleq (1-p_{\mathrm{C_2},i})p_{\mathrm{A},i}\label{eq:node_transition_10}\\
    &f_{\mathrm{N},i}(\mathbb{C} \mid \mathbb{C}, \mathfrak{W}) \triangleq (1-p_{\mathrm{C}_2,i})(1-p_{\mathrm{U},i}),\label{eq:node_transition_7}
  \end{align}
\end{subequations}
where $p_{\mathrm{A},i}$ is the probability that the attacker compromises the node during the time interval $[t, t+1]$, $p_{\mathrm{C}_1,i}$ is the probability that the node crashes in the healthy state, $p_{\mathrm{C}_2,i}$ is the probability that the node crashes in the compromised state, and $p_{\mathrm{U},i}$ is the probability that the node's software is updated. These parameters can be set based on domain knowledge or be obtained through system measurements. In fact, companies such as Google, Meta, and IBM, have documented procedures for estimating such parameters, see e.g., \cite{google_failure_model, facebook_faulure_model}.

Equations (\ref{eq:node_transition_1})--(\ref{eq:node_transition_8}) capture the transitions to $\emptyset$, which is an absorbing state \cite[Def. 4.4]{douc2018markov}. (A crashed node that is restarted is considered as a new node in the model.) Next, (\ref{eq:node_transition_3})--(\ref{eq:node_transition_4}) define the transitions to $\mathbb{H}$, which is reached when the controller takes action $\mathfrak{R}$ (\ref{eq:node_transition_3})--(\ref{eq:node_transition_9}) or when its software is updated (\ref{eq:node_transition_4}). Lastly, (\ref{eq:node_transition_6})--(\ref{eq:node_transition_7}) capture the transitions to $\mathbb{C}$, which happen when an intrusion occurs.

It follows from (\ref{eq:recovery_dynamics}) that the number of time-steps until a node fails (crash or compromise) is geometrically distributed (see Fig. \ref{fig:reliability_curve_1}). Note that (\ref{eq:recovery_dynamics}) applies independently to each node, which means that compromise and crash events are statistically independent across nodes. This independence reflects our assumption that the system is geographically distributed and uses software diversification (\S \ref{sec:system_arch}).

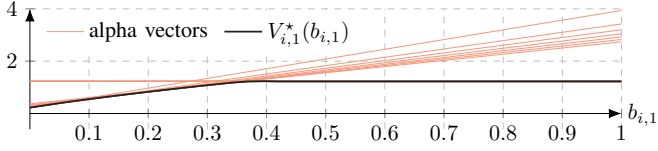
\begin{figure}
  \centering
  \scalebox{0.835}{
    \pgfplotstableread{
0.0 0.22717
0.05263157894736842 0.4054973684210526
0.10526315789473684 0.5659931578947368
0.15789473684210525 0.7159199999999999
0.21052631578947367 0.8575747368421053
0.2631578947368421 0.9920815789473684
0.3157894736842105 1.1199873684210526
0.3684210526315789 1.2270852631578946
0.42105263157894735 1.227073157894737
0.47368421052631576 1.227061052631579
0.5263157894736842 1.227048947368421
0.5789473684210527 1.227036842105263
0.631578947368421 1.2270247368421052
0.6842105263157894 1.2270126315789474
0.7368421052631579 1.2270005263157895
0.7894736842105263 1.2269884210526316
0.8421052631578947 1.2269763157894735
0.894736842105263 1.2269642105263157
0.9473684210526315 1.2269521052631578
1.0 1.22694
}\valuefunzeroone
\begin{tikzpicture}[
    dot/.style={
        draw=black,
        fill=blue!90,
        circle,
        minimum size=3pt,
        inner sep=0pt,
        solid,
    },
    ]

\node[scale=1] (kth_cr) at (0,0)
{
  \begin{tikzpicture}
    \begin{axis}[
      xmin=0,
      grid=major,
      grid style={dashed},
      xmax=1,
      ymin=-0.6,
      ymax=4,
        axis lines=center,
        xlabel style={below right},
        ylabel style={above left},
        axis line style={-{Latex[length=2mm]}},
        smooth,
        legend style={at={(0.56,0.97)}},
        legend columns=2,
        legend style={
            /tikz/column 2/.style={
                column sep=5pt,
              },
              draw=none
            },
            width=11cm,
            height=3.5cm
            ]
            \addplot[Red!50!white,mark repeat=10,mark size=1.3pt,samples=100,smooth, name path=l1, domain=0:1] (x, {(1-x)*0.22717 + x*3.93854});
\addplot[Black, thick, mark repeat=10,mark size=1.3pt,samples=100,smooth] table [x index=0, y index=1] {\valuefunzeroone};
\addplot[Red!50!white,mark repeat=10,mark size=1.3pt,samples=100,smooth, name path=l1, domain=0:1] (x, {(1-x)*0.23795 + x*3.42135});
\addplot[Red!50!white,mark repeat=10,mark size=1.3pt,samples=100,smooth, name path=l1, domain=0:1] (x, {(1-x)*0.25623 + x*3.19898});
\addplot[Red!50!white,mark repeat=10,mark size=1.3pt,samples=100,smooth, name path=l1, domain=0:1] (x, {(1-x)*0.27909 + x*3.04568});
\addplot[Red!50!white,mark repeat=10,mark size=1.3pt,samples=100,smooth, name path=l1, domain=0:1] (x, {(1-x)*0.30592 + x*2.92628});
\addplot[Red!50!white,mark repeat=10,mark size=1.3pt,samples=100,smooth, name path=l1, domain=0:1] (x, {(1-x)*0.3359 + x*2.82939});
\addplot[Red!50!white,mark repeat=10,mark size=1.3pt,samples=100,smooth, name path=l1, domain=0:1] (x, {(1-x)*0.3724 + x*2.73976});
\addplot[Red!50!white,mark repeat=10,mark size=1.3pt,samples=100,smooth, name path=l1, domain=0:1] (x, {(1-x)*1.22717 + x*1.22694});
\addplot[Red!50!white,mark repeat=10,mark size=1.3pt,samples=100,smooth, name path=l1, domain=0:1] (x, {(1-x)*1.22717 + x*1.22694});
\addplot[Red!50!white,mark repeat=10,mark size=1.3pt,samples=100,smooth, name path=l1, domain=0:1] (x, {(1-x)*1.22717 + x*1.22694});
\addplot[Red!50!white,mark repeat=10,mark size=1.3pt,samples=100,smooth, name path=l1, domain=0:1] (x, {(1-x)*1.22717 + x*1.22694});
\addplot[Red!50!white,mark repeat=10,mark size=1.3pt,samples=100,smooth, name path=l1, domain=0:1] (x, {(1-x)*1.22717 + x*1.22694});
\addplot[Red!50!white,mark repeat=10,mark size=1.3pt,samples=100,smooth, name path=l1, domain=0:1] (x, {(1-x)*1.22717 + x*1.22694});
\addplot[Red!50!white,mark repeat=10,mark size=1.3pt,samples=100,smooth, name path=l1, domain=0:1] (x, {(1-x)*1.22717 + x*1.22694});
\addplot[Red!50!white,mark repeat=10,mark size=1.3pt,samples=100,smooth, name path=l1, domain=0:1] (x, {(1-x)*1.22717 + x*1.22694});
\addplot[Red!50!white,mark repeat=10,mark size=1.3pt,samples=100,smooth, name path=l1, domain=0:1] (x, {(1-x)*1.22717 + x*1.22694});
\addplot[Red!50!white,mark repeat=10,mark size=1.3pt,samples=100,smooth, name path=l1, domain=0:1] (x, {(1-x)*1.22717 + x*1.22694});
\addplot[Red!50!white,mark repeat=10,mark size=1.3pt,samples=100,smooth, name path=l1, domain=0:1] (x, {(1-x)*1.22717 + x*1.22694});
\addplot[Red!50!white,mark repeat=10,mark size=1.3pt,samples=100,smooth, name path=l1, domain=0:1] (x, {(1-x)*1.22717 + x*1.22694});
\addplot[Red!50!white,mark repeat=10,mark size=1.3pt,samples=100,smooth, name path=l1, domain=0:1] (x, {(1-x)*1.22717 + x*1.22694});
\addplot[Black, thick, mark repeat=10,mark size=1.3pt,samples=100,smooth] table [x index=0, y index=1] {\valuefunzeroone};
\legend{alpha vectors, $V^{\star}_{i,1}(b_{i,1})$}
\end{axis}
\node[inner sep=0pt,align=center, scale=1, rotate=0, opacity=1] (obs) at (9.825,0.23)
{
  $b_{i,1}$
};
\end{tikzpicture}
};

  \end{tikzpicture}
  }
  \caption{The optimal value function $V^{\star}_{i,t}(b_{i,t})$ for Prob. \ref{prob_1}; the dashed red lines indicate the alpha-vectors \cite{smallwood_1,krishnamurthy_2016} and the solid black lines indicate $V^{\star}_{i,t}(b_{i,t})$; the parameters for computing $V^{\star}_{i,t}(b_{i,t})$ are listed in Appendix \ref{appendix:hyperparameters}.}
  \label{fig:value_funs}
\end{figure}

A node state is hidden from its controller. At time $t$, controller $i$ observes $o_{i,t} \in \mathcal{O} \subset \mathbb{N}_0$, which is based on the number of \ids alerts received during the time interval $[t-1,t]$, weighted by priority. (While we focus on the \ids alert metric in this paper, alternative sources of metrics can be used; a comparison between different metrics is available in Appendix \ref{appendix:infrastructure_metrics}.) $o_{i,t}$ is drawn from the random variable $O_{i,t}$ with distribution
\begin{align}
Z_i(o_{i,t} \mid s_{i,t}) &\triangleq \mathbb{P}\left[O_{i}=o_{i,t} \mid S_{i,t}=s_{i,t}\right].\label{eq:obs_function}
\end{align}
Based on its observations, controller $i$ computes the belief state
\begin{align}
  b_{i,t} \triangleq \mathbb{P}[S_{i,t} = \mathbb{C} \mid o_{i,1}, a_{i,1},\hdots, a_{i,t-1}, o_{i,t}, b_{i,1}],\label{eq:belief_def}
\end{align}
which is a sufficient statistic for $s_{i,t}$ (see Appendix \ref{appendix_belief_comp}). $\pi_{i,t}$ can thus be defined as $\pi_{i,t}: [0,1] \rightarrow \{\mathfrak{W}, \mathfrak{R}\}$.

When selecting the strategy $\pi_{i,t}$, the controller balances two conflicting goals: minimize the average time-to-recovery $T_{i}^{(\mathrm{R})}$ and minimize the frequency of recoveries $F^{(R)}_{i}$. The weight $\eta \geq 1$ controls the trade-off between these two objectives, which results in the bi-objective
\begin{align}
  \minimize J_i &\triangleq \lim_{T \rightarrow \infty}\left[\eta T_{i,T}^{(\mathrm{R})} + F^{(\mathrm{R})}_{i,T}\right]\label{eq:objective_recovery}\\
                &=\lim_{T \rightarrow \infty}\Bigg[\frac{1}{T}\sum_{t=1}^T\underbrace{\eta s_{i,t} - a_{i,t}\eta s_{i,t} + a_{i,t}}_{\triangleq c_{\mathrm{N}}(s_{i,t}, a_{i,t})}\Bigg],\nonumber
\end{align}
where $T_{i,T}^{(\mathrm{R})}$ and $F^{(\mathrm{R})}_{i,T}$ denote the average values at time $T$, $\mathbb{H},\mathbb{C}=0,1$, $\mathfrak{W},\mathfrak{R}=0,1$, and $c_{\mathrm{N}}$ is the cost function.

We define the intrusion recovery problem as that of minimizing $J_i$ (\ref{eq:objective_recovery}) subject to a bounded-time-to-recovery (\btr) constraint \cite{rebound_btr}. Formally,
\begin{figure}
  \centering
  \scalebox{0.87}{
    \begin{tikzpicture}[
    dot/.style={
        draw=black,
        fill=blue!90,
        circle,
        minimum size=3pt,
        inner sep=0pt,
        solid,
    },
    ]
\tikzset{
        hatch distance/.store in=\hatchdistance,
        hatch distance=10pt,
        hatch thickness/.store in=\hatchthickness,
        hatch thickness=2pt
      }
\pgfdeclarepatternformonly[\hatchdistance,\hatchthickness]{flexible hatch}
    {\pgfqpoint{0pt}{0pt}}
    {\pgfqpoint{\hatchdistance}{\hatchdistance}}
    {\pgfpoint{\hatchdistance-1pt}{\hatchdistance-1pt}}%
    {
        \pgfsetcolor{\tikz@pattern@color}
        \pgfsetlinewidth{\hatchthickness}
        \pgfpathmoveto{\pgfqpoint{0pt}{0pt}}
        \pgfpathlineto{\pgfqpoint{\hatchdistance}{\hatchdistance}}
        \pgfusepath{stroke}
      }
\node[scale=1] (kth_cr) at (0,0)
{
  \begin{tikzpicture}[declare function={sigma(\x)=1/(1+exp(-\x));
sigmap(\x)=sigma(\x)*(1-sigma(\x));}]
\begin{axis}[
        xmin=1,
        xmax=100,
        ymin=0,
        ymax=1.05,
        width =1.2\columnwidth,
        height = 0.4\columnwidth,
        axis lines=center,
        xmajorgrids=true,
        ymajorgrids=true,
        major grid style = {lightgray},
        minor grid style = {lightgray!25},
        scaled y ticks=false,
        yticklabel style={
        /pgf/number format/fixed,
        /pgf/number format/precision=5
        },
        xlabel style={below right},
        ylabel style={above left},
        axis line style={-{Latex[length=2mm]}},
        smooth,
        legend style={at={(1.05,-0.245)}},
        legend columns=5,
        legend style={
          draw=none,
            /tikz/column 2/.style={
                column sep=0pt,
              }
              }
              ]
              \addplot[Blue,mark=diamond,smooth, name path=l1, thick, domain=1:100]   (x,{1-(1-0.10001)^(x)});

              \addplot[Red,mark=triangle,mark repeat=1, smooth, name path=l1, thick, domain=1:100]   (x,{1-(1-0.05001)^(x)});

              \addplot[Black,mark=square,mark repeat=1, smooth, name path=l1, thick, domain=1:100]   (x,{1-(1-0.025001)^(x)});

              \addplot[OliveGreen,mark=pentagon,mark repeat=1, smooth, name path=l1, thick, domain=1:100]   (x,{1-(1-0.010001)^(x)});


\legend{$p_{\mathrm{A},i}=0.1$,$p_{\mathrm{A},i}=0.05$, $p_{\mathrm{A},i}=0.025$, $p_{\mathrm{A},i}=0.01$}
\end{axis}
\end{tikzpicture}
};

\node[inner sep=0pt,align=center, scale=1, rotate=0, opacity=1] (obs) at (4.95,-0.5)
{
  $t$
};

\node[inner sep=0pt,align=center, scale=1, rotate=0, opacity=1] (obs) at (-1.23,1.64)
{
  $\mathbb{P}[S_{i,t}=\mathbb{C} \cup S_{i,t}=\emptyset \mid \pi_{i,t}(b)=\mathfrak{W} \text{ }\forall t,b]$
};

  \end{tikzpicture}
  }
  \caption{Probability that a node is compromised ($\mathbb{C}$) or crashed $(\emptyset)$ by time-step $t$ if no recoveries occur; the curves relate to $p_{\mathrm{A},i}$ (\ref{eq:recovery_dynamics}); hyperparameters are listed in Appendix \ref{appendix:hyperparameters}.}
  \label{fig:reliability_curve_1}
\end{figure}
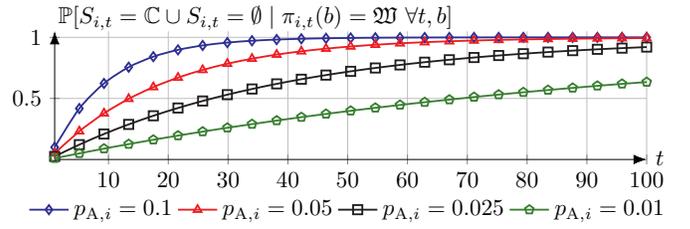
\begin{problem}[Optimal Intrusion Recovery]\label{prob_1}
  \begin{subequations}\label{eq:recovery_problem}
    \begin{align}
      \minimize_{\pi_{i,t} \in \Pi_{\mathrm{N}}} \quad &\mathbb{E}_{\pi_{i,t}}\left[J_i \mid b_{i,1}=p_{\mathrm{A},i}\right] && \forall i \in \mathcal{N}_t\label{eq:node_maximization}\\
      \text{\normalfont subject to}\quad & a_{i,k\Delta_{\mathrm{R}}} = \mathfrak{R} &&\forall i,k \label{eq:recovery_constraint}\\
                                          &s_{i,t+1} \sim f_{\mathrm{N},i}(\cdot \mid s_{i,t}, a_{i,t}) &&\forall i,t \label{eq:dynamics_recov_constraint}\\
      \quad & o_{i,t+1} \sim Z_i(\cdot \mid s_{i,t}) &&\forall i,t \label{eq:obs_recov_constraint}\\
      \quad & a_{i, t+1} = \pi_{i,t}(b_{i,t}) &&\forall i,t, \label{eq:action_recov_constraint}
    \end{align}
  \end{subequations}
\end{problem}
where $t,k=1,2,\hdots$; $\Pi_{\mathrm{N}}$ is the strategy space; $b_{i,1}$ is the initial state distribution of node $i$; $\mathbb{E}_{\pi_{i,t}}$ denotes the expectation over the random variables $(S_{i,t}, O_{i,t}, A_{i,t}, B_{i,t})_{t \in \{1,2,\hdots\}}$ when following strategy $\pi_{i,t}$; (\ref{eq:recovery_constraint}) is the \btr constraint; (\ref{eq:dynamics_recov_constraint}) is the dynamics constraint; (\ref{eq:obs_recov_constraint}) captures the observations; and (\ref{eq:action_recov_constraint}) captures the actions. (Remark: Prob. \ref{prob_1} does not include a constraint on the maximum number of parallel recoveries (i.e., $k$ in Prop. \ref{prop:correctness}) as we assume this constraint is enforced by the implementation.)

We say that a \textit{strategy $\pi_{i,t}^{\star}$ is optimal} if it solves (\ref{eq:recovery_problem}). Figure \ref{fig:value_funs} shows the expected cost of $\pi_{i,t}^{\star}$. We note that $\pi_{i,t}^{\star}$ has threshold structure, as stated below.
\begin{theorem}\label{thm:stopping_policy}
Assuming
  \begin{align}
    &\tag{A}p_{\mathrm{A},i}, p_{\mathrm{U},i}, p_{\mathrm{C}_1,i}, p_{\mathrm{C}_2,i} \in (0,1) \label{eq:assumption_A}\\
    & \tag{B}p_{\mathrm{A},i} + p_{\mathrm{U},i} \leq 1 \label{eq:assumption_B}\\
    &\tag{C}\frac{p_{\mathrm{C}_1,i}(p_{\mathrm{U},i}-1)}{p_{\mathrm{A},i}(p_{\mathrm{C}_1,i}-1) + p_{\mathrm{C}_1,i}(p_{\mathrm{U},i}-1)} \leq p_{\mathrm{C}_2,i} \label{eq:assumption_C}\\
    &\tag{D}Z_i(o_{i,t} \mid s_{i,t}) > 0 \quad\quad\forall o_{i,t} \in \mathcal{O}, s_{i,t} \in \mathcal{S}_{\mathrm{N}}\label{eq:assumption_D}\\
    &\tag{E}Z_i\text{ }\text{is }\text{\tpp }\text{ }\label{eq:assumption_E}\text{\cite[Def. 10.2.1]{krishnamurthy_2016}},
  \end{align}
  then there exists an optimal strategy $\pi_{i,t}^{\star}$ that solves Prob. \ref{prob_1} for each node $i$ and satisfies
  \begin{align}
    \pi_{i,t}^{\star}(b_{i,t}) = \mathfrak{R} \iff b_{i,t} &\geq \alpha_{i,t}^{\star} && \forall t,\label{eq:threhsold_structure}
  \end{align}
where $\alpha_{i,t}^{\star} \in [0,1]$ is a threshold.
\end{theorem}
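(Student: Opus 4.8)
The plan is to cast Problem~\ref{prob_1} as a partially observed Markov decision process (\pomdp) over the node state $s_{i,t}\in\{\mathfrak{H},\mathfrak{C},\emptyset\}$, pass to the belief \mdp, and then invoke the theory of monotone optimal stopping policies (as in \cite{krishnamurthy_2016}, Ch.~12). Since each node controller's problem decouples (the objective (\ref{eq:node_maximization}) is a sum over $i$ and the dynamics (\ref{eq:dynamics_recov_constraint})--(\ref{eq:action_recov_constraint}) involve only node $i$), it suffices to fix one node and drop the subscript $i$. The \btr constraint (\ref{eq:recovery_constraint}) only forces $\mathfrak{R}$ at times $k\Delta_{\mathrm{R}}$, which does not affect the structure of the optimal action at the remaining time-steps, so I would first argue that it is enough to establish the threshold property (\ref{eq:threhsold_structure}) for the unconstrained average-cost stopping problem and then note the constraint is automatically compatible with a threshold policy (one just overrides the decision to $\mathfrak{R}$ on the constrained steps). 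The crashed state $\emptyset$ is absorbing with zero continuation cost relevant to the comparison, so the effective state space for the monotonicity argument is the ordered pair $\mathfrak{H}\prec\mathfrak{C}$, and the belief $b_{i,t}=\mathbb{P}[S_{i,t}=\mathfrak{C}\mid\cdot]$ is a scalar in $[0,1]$; I would order beliefs by the usual first-order stochastic dominance (here just the natural order on $[0,1]$).

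The core is to verify the standard sufficient conditions for a monotone threshold policy in the average-cost belief \mdp: (i) the cost function $c_{\mathrm{N}}(s,a)$ is such that the ``stopping advantage'' $c_{\mathrm{N}}(s,\mathfrak{W})-c_{\mathrm{N}}(s,\mathfrak{R})$ is monotone (nondecreasing) in $s$ — from (\ref{eq:objective_recovery}), $c_{\mathrm{N}}(s,\mathfrak{W})=\eta s$ and $c_{\mathrm{N}}(s,\mathfrak{R})=1$ (identifying $\mathfrak{H}=0,\mathfrak{C}=1$), so this difference is $\eta s-1$, increasing in $s$; (ii) the transition kernel $f_{\mathrm{N}}$ is totally positive of order $2$ (\tpp) in the continuation action $\mathfrak{W}$, i.e.\ $f_{\mathrm{N}}(\cdot\mid\mathfrak{H},\mathfrak{W})$ is dominated by $f_{\mathrm{N}}(\cdot\mid\mathfrak{C},\mathfrak{W})$ in the \tpp\ (monotone likelihood ratio) sense — this is exactly where assumptions (\ref{eq:assumption_A}), (\ref{eq:assumption_B}), (\ref{eq:assumption_C}) enter, since one must check the $2\times2$ minors of the sub-stochastic matrix on $\{\mathfrak{H},\mathfrak{C}\}$ are nonnegative, which reduces after algebra to exactly inequality (\ref{eq:assumption_C}) (and (\ref{eq:assumption_B}) guarantees the relevant entries are genuine probabilities); (iii) the observation kernel $Z$ is \tpp\ (assumption~(\ref{eq:assumption_E})) and has full support (assumption~(\ref{eq:assumption_D})), which ensures the Bayesian belief update is monotone in the observation and preserves the \tpp\ ordering of beliefs, so that a higher prior belief leads to a higher posterior. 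Given (i)--(iii), a standard inductive argument on the value-iteration operator of the (vanishing-discount) average-cost \mdp\ shows the differential value function $V$ is such that the advantage of recovering, $Q(b,\mathfrak{W})-Q(b,\mathfrak{R})$, is nondecreasing in $b$; hence the set $\{b:\pi^{*}(b)=\mathfrak{R}\}$ is an up-set $[\alpha^{*},1]$, which is (\ref{eq:threhsold_structure}).

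For the passage from discounted to average cost I would invoke the standard vanishing-discount / Abelian argument, appealing to the fact that the state space is finite, the costs are bounded, and the chain is suitably communicating (the $\mathfrak{R}$ action makes every belief lead to a neighborhood of $b=p_{\mathrm{A}}$ with positive probability, and $\emptyset$ is reached only with the small probabilities $p_{\mathrm{C}_1},p_{\mathrm{C}_2}$; if $\emptyset$ being absorbing is a concern for the average-cost criterion, I would either treat $\emptyset$ as a state with its own fixed per-stage cost equal to the limiting average, or note that by (\ref{eq:assumption_A}) the pre-absorption behavior is what the objective effectively weighs and handle the absorbing state as a boundary case) so that the optimality equation has a solution and value iteration converges. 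The monotone-structure step is robust under this limit because the threshold property is preserved by pointwise limits of the $Q$-functions.

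The main obstacle I anticipate is step (ii): verifying that the node transition kernel restricted to $\{\mathfrak{H},\mathfrak{C}\}$ under the $\mathfrak{W}$ action is \tpp. This is a concrete but slightly delicate $2\times2$ determinant computation with the entries read off from (\ref{eq:node_transition_4}), (\ref{eq:node_transition_5}), (\ref{eq:node_transition_6}), (\ref{eq:node_transition_7}), and it is precisely here that the somewhat opaque hypothesis (\ref{eq:assumption_C}) must fall out as the sign condition on the relevant minor — I would expect to spend most of the proof's effort making that reduction explicit and confirming that (\ref{eq:assumption_A})--(\ref{eq:assumption_B}) ensure all matrix entries lie in $[0,1]$ and the rows sum correctly once the mass going to $\emptyset$ is accounted for. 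A secondary subtlety is ensuring the belief update is well-defined and monotone given that the true state space has three elements while the belief is a scalar; here I would use that $\emptyset$ is observationally distinguishable in the limit (or simply carry the $3$-simplex belief and project), so that the effective one-dimensional belief over $\{\mathfrak{H},\mathfrak{C}\}$ inherits the monotone-likelihood-ratio structure needed.
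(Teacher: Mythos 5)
Your overall route is the same as the paper's: decouple the $N$ nodes, view each node's problem as a \pomdp, pass to the scalar belief, and obtain the threshold structure from the standard monotone-stopping machinery in \cite{krishnamurthy_2016} (Ch.~12), with assumptions (\ref{eq:assumption_A})--(\ref{eq:assumption_C}) supplying \tpp{}-ness of the transition kernel, (\ref{eq:assumption_E}) supplying \tpp{}-ness of $Z$, and the submodular/monotone cost closing the argument. You also correctly isolate the real computational content --- the $2\times 2$ minor check that turns assumption (\ref{eq:assumption_C}) into the \tpp{} condition on $f_{\mathrm{N}}$ restricted to $\{\mathfrak{H},\mathfrak{C}\}$ --- which the paper likewise attributes to (\ref{eq:assumption_A})--(\ref{eq:assumption_C}) without writing it out.

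The genuine gap is your treatment of the \btr constraint (\ref{eq:recovery_constraint}). You propose to prove the threshold property for the \emph{unconstrained stationary} average-cost problem and then ``override'' the decision to $\mathfrak{R}$ at times $k\Delta_{\mathrm{R}}$. That does not work: overriding an unconstrained-optimal policy at the forced-recovery epochs does not yield an optimal policy for Problem~\ref{prob_1}, because the optimal constrained policy must anticipate the upcoming forced recovery (the continuation value changes as $t$ approaches $k\Delta_{\mathrm{R}}$). It also cannot produce the \emph{time-dependent} thresholds $\alpha_t^{*}$ that appear in (\ref{eq:threhsold_structure}) and that Corollary~\ref{corr:increasing_thresholds} characterizes (thresholds increasing within each interval $[k\Delta_{\mathrm{R}},(k+1)\Delta_{\mathrm{R}}]$); your construction would give a single stationary threshold plus overrides. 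The paper instead decomposes the constrained infinite-horizon objective into an equivalent sequence of \emph{finite-horizon} unconstrained \pomdps of length $\Delta_{\mathrm{R}}$ with $a_{i,T}=\mathfrak{R}$ terminal, proves the objectives coincide by splitting the Ces\`aro sum over the recovery epochs, and then applies the monotone-stopping argument to each finite-horizon subproblem; this is what makes the thresholds legitimately time-indexed. Your remaining machinery would carry over to those subproblems essentially unchanged, so the fix is a restructuring rather than a new idea, but as written the constraint-handling step fails. A secondary (shared) imprecision is the reduction of the three-point state space with absorbing $\emptyset$ to a scalar belief; your suggestion to carry the full simplex and project is a reasonable way to make that rigorous.
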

\begin{proof}
See Appendix \ref{appendix:theorem_1}.
\end{proof}
\begin{corollary}\label{corr:increasing_thresholds}
The thresholds satisfy $\alpha_{i,t+1}^{\star} \geq \alpha_{i,t}^{\star}$ for $t \in [k\Delta_{R}, (k+1)\Delta_{R}]$ and $i \in \mathcal{N}$. As $\Delta_{\mathrm{R}} \rightarrow \infty$, all thresholds converge to $\alpha_{i}^{\star}$, which is time-independent.
\end{corollary}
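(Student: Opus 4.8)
The plan is to exploit the \btr constraint (\ref{eq:recovery_constraint}), which forces $a_{i,k\Delta_{\mathrm{R}}}=\mathfrak{R}$ and therefore splits the infinite-horizon average-cost problem into a sequence of statistically identical episodes, each of length $\Delta_{\mathrm{R}}$ and each starting immediately after a mandatory recovery. Within one episode I would index time by the number of remaining steps $n$, with $n=1$ at the forced recovery, and write the finite-horizon Bellman equations on the (scalar) belief space of App.~\ref{appendix_belief_comp}: $V_1(b)=1$ (forced recovery; the inter-episode coupling adds only a constant offset that does not affect any $\argmin$), and for $n\geq 2$, $V_n(b)=\min\{Q_n^{\mathfrak{W}}(b),Q_n^{\mathfrak{R}}(b)\}$ with $Q_n^{\mathfrak{W}}(b)=\eta b+\mathbb{E}_{b,\mathfrak{W}}[V_{n-1}]$ and $Q_n^{\mathfrak{R}}(b)=1+\mathbb{E}_{\mathfrak{R}}[V_{n-1}]$, where the expectations are over the Bayes-updated belief after waiting and after recovering, respectively (using $c_{\mathrm{N}}(s,\mathfrak{R})=1$ and $\mathbb{E}_{s\sim b}[c_{\mathrm{N}}(s,\mathfrak{W})]=\eta b$). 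By Theorem~\ref{thm:stopping_policy} the optimal action with $n$ steps remaining is a threshold rule; since episode time decreases as $n$ increases, the claim $\alpha_{t+1}^{*}\geq\alpha_t^{*}$ on $[k\Delta_{\mathrm{R}},(k+1)\Delta_{\mathrm{R}}]$ is equivalent to showing that $\alpha^{*}(n)\triangleq\inf\{b\in[0,1]:Q_n^{\mathfrak{R}}(b)\leq Q_n^{\mathfrak{W}}(b)\}$ is non-increasing in $n$ (intuitively, as the mandatory recovery approaches a voluntary recovery saves fewer future compromised steps, so a higher belief is demanded). Writing the advantage $A_n(b)\triangleq Q_n^{\mathfrak{W}}(b)-Q_n^{\mathfrak{R}}(b)=\eta b-1+\mathbb{E}_{b,\mathfrak{W}}[V_{n-1}]-\mathbb{E}_{\mathfrak{R}}[V_{n-1}]$, and using that $A_n$ is non-decreasing in $b$ (the threshold structure of Theorem~\ref{thm:stopping_policy}, i.e.\ assumptions (\ref{eq:assumption_D})--(\ref{eq:assumption_E})), it suffices to prove that $A_n(b)$ is non-decreasing in $n$.

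The core of the proof is a simultaneous backward induction on $n$ establishing: (i) $V_n(b)$ is non-decreasing in $b$; (ii) $V_{n+1}(b)\geq V_n(b)$ pointwise (more remaining time is weakly costlier, since every one-step cost $c_{\mathrm{N}}$ is non-negative and the terminal forced recovery costs the same); and (iii) the increment $\delta_n(b)\triangleq V_n(b)-V_{n-1}(b)$ is non-decreasing in $b$. Given these, $A_{n+1}(b)-A_n(b)=\mathbb{E}_{b,\mathfrak{W}}[\delta_n]-\mathbb{E}_{\mathfrak{R}}[\delta_n]\geq 0$, because $\delta_n$ is non-decreasing in $b$ by (iii) and the post-wait belief first-order stochastically dominates the post-recovery belief (the latter concentrated near $p_{\mathrm{A}}$); this closes the induction and proves $\alpha_{t+1}^{*}\geq\alpha_t^{*}$. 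The \emph{crux}, and the step I expect to be hardest, is property (iii): propagating this ``monotone-differences'' property of the value function through the Bayesian belief update. I would handle it by combining (i)--(ii) with the fact that a \tpp observation kernel preserves the monotone-likelihood-ratio order under conditioning (assumption (\ref{eq:assumption_E})), so that the updated belief is monotone in the prior and in the observation; assumptions (\ref{eq:assumption_A})--(\ref{eq:assumption_C}) on $p_{\mathrm{A}},p_{\mathrm{U}},p_{\mathrm{C}_1},p_{\mathrm{C}_2}$ are precisely what rule out order reversals caused by the three-state transition (in particular the crash transitions into the absorbing state $\emptyset$) and by the recovery map.

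For the limit $\Delta_{\mathrm{R}}\to\infty$, the plan is to observe that the only source of time-dependence of $\alpha_t^{*}$ is the impending forced recovery, whose influence vanishes as it recedes. As $\Delta_{\mathrm{R}}\to\infty$ the \btr constraint (\ref{eq:recovery_constraint}) becomes non-binding and Problem~\ref{prob_1} reduces to the unconstrained infinite-horizon average-cost machine-replacement problem; fixing the distance $n$ from the end of an episode, the value functions $V_n$ are monotone in the horizon (property (ii)) and, after subtracting the average-cost offset, uniformly bounded, hence converge, and the limit satisfies the average-cost optimality equation of the unconstrained problem, which, by the optimal-stopping theory underlying Theorem~\ref{thm:stopping_policy}, admits a stationary threshold-optimal policy. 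Therefore $\alpha_t^{*}\to\alpha^{*}$ with $\alpha^{*}$ independent of $t$. The delicate point here is the standard one for increasing-horizon arguments in average-cost dynamic programming: ensuring the relative value functions stay bounded so that the finite-horizon thresholds converge to the threshold of the limiting optimality equation rather than to a spurious accumulation point; this can be discharged using the communicating structure of the induced belief chain — every belief reaches the post-recovery belief in one step under action $\mathfrak{R}$ — together with assumption (\ref{eq:assumption_A}).
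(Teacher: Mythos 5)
Your proposal is sound and shares the paper's overall skeleton---use the \btr constraint (\ref{eq:recovery_constraint}) to decompose the average-cost problem into i.i.d.\ episodes of length $\Delta_{\mathrm{R}}$, run a backward induction inside one episode, and handle $\Delta_{\mathrm{R}}\to\infty$ by a separate stationarity argument---but the inductive engine is genuinely different. The paper works directly with the advantage-type quantity $W_{i,t}(b)=\mathbb{E}[V^{*}_{i,t+1}(B')\mid \mathfrak{W},b]-\mathbb{E}[V^{*}_{i,t+1}(B')\mid \mathfrak{R},b]$, reads off $\alpha^{*}_t=(1-W_{i,t})/\eta$ from the Bellman equation (\ref{eq:belief_bellman}), and proves $W_{i,t}\geq W_{i,t+1}$ by unrolling the Bellman minimum one step and case-splitting on whether the next belief lands in the wait or recovery region under each action; no property of the value-function increments is ever stated. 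You instead prove monotonicity of the advantage $A_n$ in the remaining horizon via the auxiliary ``monotone differences'' lemma---$\delta_n(b)=V_n(b)-V_{n-1}(b)$ non-decreasing in $b$---combined with first-order stochastic dominance of the post-wait belief over the post-recovery belief. Your route is the more standard one in the optimal-stopping literature and arguably cleaner (it avoids the paper's four-way case analysis, whose first case in the paper actually only establishes equality rather than the needed inequality), but it fronts-loads the difficulty into step (iii), which you correctly identify as the crux and only sketch: propagating the monotone-differences property through the Bayes update requires exactly the \tpp/MLR machinery of assumptions (\ref{eq:assumption_A})--(\ref{eq:assumption_E}), and a complete write-up would need to verify it for the three-state chain with the absorbing crash state. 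For the $\Delta_{\mathrm{R}}\to\infty$ limit your increasing-horizon argument with bounded relative value functions is more careful than the paper's, which simply observes that the unconstrained problem is a stationary \pomdp admitting a stationary deterministic optimal strategy and hence a time-independent threshold; both suffice.
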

\begin{proof}
See Appendix \ref{appendix:corollary_1}.
\end{proof}
Theorem \ref{thm:stopping_policy} states that under assumptions generally met in practice, there exists an optimal strategy for each node that performs recovery when the belief (\ref{eq:belief_def}) exceeds a threshold (\ref{eq:threhsold_structure}). Further, Cor. \ref{corr:increasing_thresholds} says two things: (\textit{i}) the threshold increases as the time until the next periodic recovery decreases; and (\textit{ii}) when there are no periodic recoveries (i.e., when $\Delta_{R}=\infty$), the threshold is independent of time.

The above statements rely on assumptions \ref{eq:assumption_A}--\ref{eq:assumption_E}. \ref{eq:assumption_A}--\ref{eq:assumption_C} are mild assumptions stating that the attack, crash, and upgrade probabilities are small but non-zero and that the difference $p_{\mathrm{C}_2,i}-p_{\mathrm{C}_1,i} > 0$ is sufficiently large. \ref{eq:assumption_D} is a technical assumption saying that the probability of observing $k$ \ids alerts is non-zero for any $k \in \mathcal{O}$, which generally is true in practice (see \S \ref{sec:solution_approach}). \ref{eq:assumption_E} states that the probability of high-priority \ids alerts increases when an intrusion occurs. While \ref{eq:assumption_E} may not always hold, empirical studies suggest that it holds for many types of intrusions \cite{hammar_stadler_tnsm,hammar_stadler_tnsm_23}. If \ref{eq:assumption_E} does not hold, the threshold strategy in (\ref{eq:threhsold_structure}) can still achieve near-optimal performance but it is not guaranteed to be optimal.

Theorem \ref{thm:stopping_policy} and Cor. \ref{corr:increasing_thresholds} lead to two important practical benefits. First, the complexity of computing optimal strategies can be reduced by only considering threshold strategies (see \S \ref{sec:solution_approach}). Second, the optimal strategies can be efficiently implemented.

\subsection{The Global Level: Controlling the Replication Factor}\label{sec:formal_global_level}
The global level includes a \textit{system controller} that adjusts the replication factor $N_t$. At each time $t$, it receives the belief states $b_{1,t},\hdots,b_{N_t,t}$ from the nodes and decides whether $N_t$ should be increased (see Fig. \ref{fig:tolerance_8}). A node that fails to send $b_{i,t}$ is considered to have crashed and is evicted from the system, which decrements $N_t$. (Remark: in practice, an evicted node can rejoin the system, but it is then considered as a new node in the model.) We assume that the system controller does not crash, i.e., we assume it is crash-tolerant (\S \ref{sec:system_arch}).

A large replication factor $N_t$ improves the service availability but increases cost (see Fig. \ref{fig:reliability_curves_3}). Our goal thus is to find the optimal cost-redundancy trade-off. We model this control problem as an instance of the \textit{inventory replenishment problem} from operations research \cite{inventory_replenishment_donaldson}.

We define the state $s_{t}$ to represent the expected number of healthy nodes at time $t$. The state space is $\mathcal{S}_{\mathrm{S}} \triangleq \{0,1,\hdots,s_{\mathrm{max}}\}$ and the initial state is $s_{1}=N_1$.

At time $t$, the controller adds $a_t \in \{0,1\} \triangleq \mathcal{A}_{\mathrm{S}}$ nodes:
\begin{align}
s_{t+1} &\sim f_{\mathrm{S}}(\cdot \mid S_t=s_t, A_t=a_t),\label{eq:transition_system_controller}
\end{align}
where $f_{\mathrm{S}}$ is defined as
\begin{align*}
f_{\mathrm{S}}(s_{t+1} \mid s_t,a_t) &\triangleq \mathbb{P}\left[\left\lfloor \sum_{i \in \mathcal{N}_t} 1-B_{i,t}\right\rfloor = s_{t+1}-a_t\right].
\end{align*}

When selecting the strategy $\pi$, the controller balances two conflicting goals: maximize the average service availability $T^{(\mathrm{A})}$ and minimize the number of nodes $s_t$. We model this bi-objective as follows.
\begin{align}
  \text{minimize }&J\triangleq \lim_{T \rightarrow \infty}\left[\frac{1}{T}\sum_{t=1}^Ts_t\right]\label{eq:objective_response}\\
  \text{subject to }&T^{(\mathrm{A})} \geq \epsilon_{\mathrm{A}},\text{ which can be expressed as (Prop. \ref{prop:correctness})}\nonumber\\
  &\lim_{T \rightarrow \infty}\left[\frac{1}{T}\sum_{t=1}^{T}\llbracket s_t \geq f+1\rrbracket \right] \geq \epsilon_{\mathrm{A}},\nonumber
\end{align}
where $\epsilon_{\mathrm{A}}$ is the lower bound on service availability.

Given (\ref{eq:objective_response}) and the Markov property of $s_t$ (\ref{eq:transition_system_controller}), we define $\pi$ as a function $\pi: \mathcal{S}_{\mathrm{S}} \rightarrow \Delta(\mathcal{A}_{\mathrm{S}})$, where $\Delta(\mathcal{A}_{\mathrm{S}})$ is the set of probability distributions over $\mathcal{A}_{\mathrm{S}}$. Based on this definition, we formulate the problem of controlling the replication factor as
\begin{problem}[Optimal Replication Factor]\label{prob_2}
  \begin{subequations}\label{eq:response_problem}
    \begin{align}
      \minimize_{\pi \in \Pi_{\mathrm{S}}} \quad &\mathbb{E}_{\pi}\left[J \mid s_{1}=N_1\right]\label{eq:system_maximization}\\
      \mathrm{subject}\text{ }\mathrm{to} \quad & \mathbb{E}_{\pi}\left[T^{(\mathrm{A})}\right] \geq \epsilon_{\mathrm{A}}  \label{eq:system_avail_constraint}\\
      \quad & s_{t+1} \sim f_{\mathrm{S}}(\cdot \mid s_t,a_t)&&\forall t \label{eq:system_dynamics_constraint}\\
      \quad & a_{t+1} \sim \pi(\cdot \mid s_{t}) &&\forall t, \label{eq:action_response_constraint}
    \end{align}
  \end{subequations}
\end{problem}
where $\Pi_{\mathrm{S}}$ is the strategy space; $s_1$ is the initial state; $\mathbb{E}_{\pi}$ denotes the expectation of the random variables $(S_t, A_t)_{t=1,2,\hdots}$ under strategy $\pi$; (\ref{eq:system_avail_constraint}) is the availability constraint; (\ref{eq:system_dynamics_constraint}) is the dynamics constraint; and (\ref{eq:action_response_constraint}) captures the actions.

\begin{figure}
  \centering
  \begin{subfigure}[t]{1\columnwidth}
    \centering
    \scalebox{0.85}{
      \input{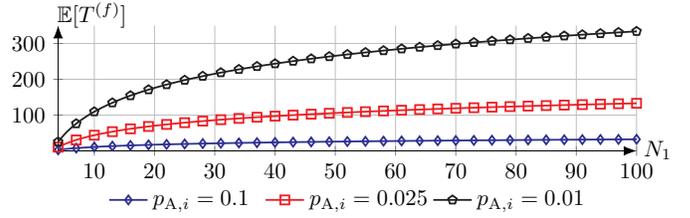}
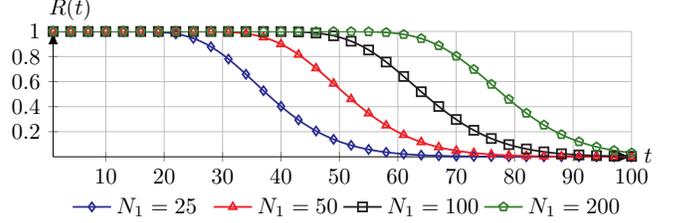
    }
    \caption{Mean time to failure (\mttf) in function of the number of initial nodes $N_1$; the curves relate to $p_{\mathrm{A},i}$ (\ref{eq:recovery_dynamics}).}
    \label{fig:mttf}
  \end{subfigure}
  \hfill
  \begin{subfigure}[t]{1\columnwidth}
    \centering
    \scalebox{0.85}{
      \begin{tikzpicture}[
    dot/.style={
        draw=black,
        fill=blue!90,
        circle,
        minimum size=3pt,
        inner sep=0pt,
        solid,
    },
    ]

\node[scale=1] (kth_cr) at (0,2.15)
{
  \begin{tikzpicture}[declare function={sigma(\x)=1/(1+exp(-\x));
      sigmap(\x)=sigma(\x)*(1-sigma(\x));}]
\pgfplotstableread{
1   1.000000
2   1.000000
3   1.000000
4   1.000000
5   1.000000
6   1.000000
7   1.000000
8   1.000000
9   1.000000
10   0.999999
11   0.999996
12   0.999985
13   0.999953
14   0.999875
15   0.999697
16   0.999332
17   0.998640
18   0.997422
19   0.995409
20   0.992263
21   0.987592
22   0.980962
23   0.971930
24   0.960072
25   0.945020
26   0.926489
27   0.904305
28   0.878419
29   0.848918
30   0.816016
31   0.780048
32   0.741447
33   0.700723
34   0.658434
35   0.615160
36   0.571480
37   0.527949
38   0.485076
39   0.443318
40   0.403062
41   0.364629
42   0.328266
43   0.294152
44   0.262401
45   0.233069
46   0.206160
47   0.181635
48   0.159421
49   0.139417
50   0.121501
51   0.105538
52   0.091382
53   0.078888
54   0.067906
55   0.058294
56   0.049912
57   0.042630
58   0.036324
59   0.030882
60   0.026199
61   0.022181
62   0.018743
63   0.015809
64   0.013311
65   0.011189
66   0.009391
67   0.007870
68   0.006585
69   0.005503
70   0.004593
71   0.003829
72   0.003188
73   0.002651
74   0.002203
75   0.001828
76   0.001516
77   0.001256
78   0.001040
79   0.000860
80   0.000711
81   0.000587
82   0.000484
83   0.000400
84   0.000329
85   0.000271
86   0.000223
87   0.000184
88   0.000151
89   0.000124
90   0.000102
91   0.000084
92   0.000069
93   0.000056
94   0.000046
95   0.000038
96   0.000031
97   0.000026
98   0.000021
99   0.000017
100   0.000014
}\datatablee
\pgfplotstableread{
1   1.000000
2   1.000000
3   1.000000
4   1.000000
5   1.000000
6   1.000000
7   1.000000
8   1.000000
9   1.000000
10   1.000000
11   1.000000
12   1.000000
13   1.000000
14   1.000000
15   1.000000
16   1.000000
17   1.000000
18   1.000000
19   1.000000
20   0.999999
21   0.999997
22   0.999992
23   0.999979
24   0.999949
25   0.999884
26   0.999754
27   0.999509
28   0.999070
29   0.998326
30   0.997121
31   0.995254
32   0.992474
33   0.988486
34   0.982960
35   0.975546
36   0.965891
37   0.953662
38   0.938568
39   0.920386
40   0.898970
41   0.874275
42   0.846355
43   0.815369
44   0.781572
45   0.745304
46   0.706976
47   0.667049
48   0.626013
49   0.584371
50   0.542615
51   0.501214
52   0.460597
53   0.421147
54   0.383189
55   0.346989
56   0.312755
57   0.280634
58   0.250721
59   0.223060
60   0.197651
61   0.174458
62   0.153413
63   0.134424
64   0.117382
65   0.102164
66   0.088640
67   0.076674
68   0.066133
69   0.056885
70   0.048801
71   0.041761
72   0.035652
73   0.030366
74   0.025808
75   0.021889
76   0.018528
77   0.015654
78   0.013202
79   0.011115
80   0.009342
81   0.007841
82   0.006570
83   0.005498
84   0.004595
85   0.003835
86   0.003197
87   0.002662
88   0.002214
89   0.001840
90   0.001527
91   0.001267
92   0.001049
93   0.000869
94   0.000719
95   0.000594
96   0.000491
97   0.000405
98   0.000334
99   0.000275
100   0.000227
}\datatableee

\pgfplotstableread{
1   1.000000
2   1.000000
3   1.000000
4   1.000000
5   1.000000
6   1.000000
7   1.000000
8   1.000000
9   1.000000
10   1.000000
11   1.000000
12   1.000000
13   1.000000
14   1.000000
15   1.000000
16   1.000000
17   1.000000
18   1.000000
19   1.000000
20   1.000000
21   1.000000
22   1.000000
23   1.000000
24   1.000000
25   1.000000
26   1.000000
27   1.000000
28   1.000000
29   1.000000
30   1.000000
31   1.000000
32   0.999999
33   0.999997
34   0.999992
35   0.999981
36   0.999956
37   0.999906
38   0.999808
39   0.999629
40   0.999315
41   0.998787
42   0.997935
43   0.996613
44   0.994634
45   0.991771
46   0.987757
47   0.982298
48   0.975076
49   0.965775
50   0.954088
51   0.939744
52   0.922524
53   0.902277
54   0.878934
55   0.852513
56   0.823129
57   0.790982
58   0.756357
59   0.719607
60   0.681141
61   0.641403
62   0.600858
63   0.559973
64   0.519200
65   0.478965
66   0.439653
67   0.401603
68   0.365101
69   0.330378
70   0.297611
71   0.266923
72   0.238386
73   0.212030
74   0.187845
75   0.165785
76   0.145782
77   0.127741
78   0.111555
79   0.097105
80   0.084263
81   0.072903
82   0.062895
83   0.054112
84   0.046435
85   0.039748
86   0.033943
87   0.028920
88   0.024587
89   0.020860
90   0.017662
91   0.014927
92   0.012593
93   0.010606
94   0.008917
95   0.007486
96   0.006275
97   0.005253
98   0.004391
99   0.003666
100   0.003057
}\datatableeee
\pgfplotstableread{
1   1.000000
2   1.000000
3   1.000000
4   1.000000
5   1.000000
6   1.000000
7   1.000000
8   1.000000
9   1.000000
10   1.000000
11   1.000000
12   1.000000
13   1.000000
14   1.000000
15   1.000000
16   1.000000
17   1.000000
18   1.000000
19   1.000000
20   1.000000
21   1.000000
22   1.000000
23   1.000000
24   1.000000
25   1.000000
26   1.000000
27   1.000000
28   1.000000
29   1.000000
30   1.000000
31   1.000000
32   1.000000
33   1.000000
34   1.000000
35   1.000000
36   1.000000
37   1.000000
38   1.000000
39   1.000000
40   1.000000
41   1.000000
42   1.000000
43   1.000000
44   0.999999
45   0.999998
46   0.999996
47   0.999990
48   0.999977
49   0.999950
50   0.999898
51   0.999798
52   0.999621
53   0.999315
54   0.998808
55   0.998002
56   0.996762
57   0.994920
58   0.992267
59   0.988559
60   0.983522
61   0.976860
62   0.968269
63   0.957452
64   0.944137
65   0.928096
66   0.909156
67   0.887220
68   0.862271
69   0.834378
70   0.803700
71   0.770474
72   0.735010
73   0.697679
74   0.658895
75   0.619099
76   0.578744
77   0.538277
78   0.498125
79   0.458683
80   0.420307
81   0.383304
82   0.347929
83   0.314385
84   0.282820
85   0.253334
86   0.225982
87   0.200774
88   0.177689
89   0.156671
90   0.137643
91   0.120509
92   0.105158
93   0.091470
94   0.079320
95   0.068583
96   0.059134
97   0.050849
98   0.043613
99   0.037315
100   0.031851
}\datatableeeee

\pgfplotstableread{
1   1.000000
2   1.000000
3   1.000000
4   1.000000
5   1.000000
6   1.000000
7   1.000000
8   1.000000
9   1.000000
10   1.000000
11   1.000000
12   1.000000
13   1.000000
14   1.000000
15   1.000000
16   1.000000
17   1.000000
18   1.000000
19   1.000000
20   1.000000
21   1.000000
22   1.000000
23   1.000000
24   1.000000
25   1.000000
26   1.000000
27   1.000000
28   1.000000
29   1.000000
30   1.000000
31   1.000000
32   1.000000
33   1.000000
34   1.000000
35   1.000000
36   1.000000
37   1.000000
38   1.000000
39   1.000000
40   1.000000
41   1.000000
42   1.000000
43   1.000000
44   1.000000
45   1.000000
46   1.000000
47   1.000000
48   1.000000
49   1.000000
50   1.000000
51   1.000000
52   1.000000
53   1.000000
54   1.000000
55   1.000000
56   1.000000
57   0.999999
58   0.999998
59   0.999996
60   0.999991
61   0.999980
62   0.999956
63   0.999911
64   0.999825
65   0.999671
66   0.999407
67   0.998968
68   0.998267
69   0.997186
70   0.995572
71   0.993236
72   0.989951
73   0.985462
74   0.979487
75   0.971730
76   0.961897
77   0.949710
78   0.934925
79   0.917347
80   0.896847
81   0.873372
82   0.846949
83   0.817693
84   0.785800
85   0.751542
86   0.715255
87   0.677325
88   0.638177
89   0.598250
90   0.557989
91   0.517827
92   0.478171
93   0.439392
94   0.401820
95   0.365734
96   0.331361
97   0.298877
98   0.268408
99   0.240030
100   0.213778
}\datatableeeeee
\begin{axis}[
        xmin=1,
        xmax=100,
        ymin=0,
        ymax=1,
        width =1.2\columnwidth,
        height = 0.4\columnwidth,
        axis lines=center,
        xmajorgrids=true,
        ymajorgrids=true,
        major grid style = {lightgray},
        minor grid style = {lightgray!25},
        scaled y ticks=false,
        yticklabel style={
        /pgf/number format/fixed,
        /pgf/number format/precision=5
        },
        xlabel style={below right},
        ylabel style={above left},
        axis line style={-{Latex[length=2mm]}},
        smooth,
        legend style={at={(1,-0.25)}},
        legend columns=5,
        legend style={
          draw=none,
            /tikz/column 2/.style={
                column sep=5pt,
              }
              }
              ]
\addplot[Blue,mark=diamond, mark repeat=3, name path=l1, thick, domain=1:100] table [x index=0, y index=1] {\datatablee};
\addplot[Red,mark=triangle, mark repeat=3, name path=l1, thick, domain=1:100] table [x index=0, y index=1] {\datatableee};
\addplot[Black,mark=square, mark repeat=3, name path=l1, thick, domain=1:100] table [x index=0, y index=1] {\datatableeee};
\addplot[OliveGreen,mark=pentagon, mark repeat=3, name path=l1, thick, domain=1:100] table [x index=0, y index=1] {\datatableeeee};
\legend{$N_1=25$, $N_1=50$, $N_1=100$, $N_1=200$}
\end{axis}
\end{tikzpicture}
};

\node[inner sep=0pt,align=center, scale=1, rotate=0, opacity=1] (obs) at (5.04,1.6)
{
  $t$
};
\node[inner sep=0pt,align=center, scale=1, rotate=0, opacity=1] (obs) at (-4,3.9)
{
  $R(t)$
};

  \end{tikzpicture}
    }
    \caption{Reliability curves for varying number of initial nodes $N_1$; The reliability function is defined as $R(t) \triangleq \mathbb{P}[T^{(f)} > t]$.}\label{fig:reliability_curve_2}
  \end{subfigure}
  \caption{Illustration of Prob. \ref{prob_2}; $T^{(f)}$ is a random variable representing the time when $N_t < 2f +k +1$ with $f=3$ and $k=1$ (Prop. \ref{prop:correctness}); hyperparameters and formulas for computing the curves are listed in Appendix \ref{appendix:hyperparameters}--\ref{appendix_mttf}.}\label{fig:reliability_curves_3}
\end{figure}

We say that a \textit{strategy $\pi^{\star}$ is optimal} if it solves (\ref{eq:response_problem}). We note that (\ref{eq:response_problem}) states a Constrained Markov Decision Problem (\cmdp) \cite{altman-constrainedMDP}, which leads to the following result.
\begin{theorem}\label{thm:structure_respone}
  Assuming
  \begin{align}
    &\tag{A} \exists \pi \in \Pi_{\mathrm{S}} \text{ such that }\mathbb{E}_{\pi}\left[T^{(\mathrm{A})}\right] \geq \epsilon_{\mathrm{A}}\label{eq:feasibility_assumption}\\
    &\tag{B} f_{\mathrm{S}}(s^{\prime} \mid s, a) > 0 \quad\quad\quad\quad\quad\quad\quad\quad\quad\quad\quad\text{ }\text{ }\text{ }\text{ }\text{ }  \label{eq:unichain_assumption}\\
    &\tag{C} \sum_{s^{\prime}=s}^{s_{\mathrm{max}}} f_{\mathrm{S}}(s^{\prime}\mid\hat{s}+1, a) \geq \sum_{s^{\prime}=s}^{s_{\mathrm{max}}} f_{\mathrm{S}}(s^{\prime} \mid \hat{s}, a) \label{eq:dominance_transitions}\\
    &\tag{D} \sum_{s^{\prime}=s}^{s_{\mathrm{max}}} f_{\mathrm{S}}(s^{\prime}|\hat{s}, 1)-f_{\mathrm{S}}(s^{\prime}|\hat{s}, 0) \text{ is increasing in $s$} \label{eq:tail_sum}
  \end{align}
  for all $(s,\hat{s},a) \in \mathcal{S}_{\mathrm{S}}\times \mathcal{S}_{\mathrm{S}} \times \mathcal{A}_{\mathrm{S}}$.

  Then there exist two strategies $\pi_{\lambda_1}$ and $\pi_{\lambda_2}$ that satisfy
  \begin{align}
    \pi_{\lambda_1}(s_{t}) &= 1 \iff s_{t} \leq \beta_1 && \forall t, s_t \in \mathcal{S}_{\mathrm{S}}\\
    \pi_{\lambda_2}(s_{t}) &= 1 \iff s_{t} \leq \beta_2 && \forall t, s_t \in \mathcal{S}_{\mathrm{S}}
  \end{align}
  and an optimal strategy $\pi^{\star}$ that satisfies
  \begin{align}
    \pi^{\star}(s_{t}) &= \kappa \pi_{\lambda_1}(s_{t}) + (1-\kappa)\pi_{\lambda_2}(s_{t}) && \forall t,s_t \in \mathcal{S}_{\mathrm{S}} \label{eq:randomized_threshold}
  \end{align}
  for some probability $\kappa \in [0,1]$, where $\lambda_1,\lambda_2$ are Lagrange multipliers and $\beta_1,\beta_2$ are thresholds.
\end{theorem}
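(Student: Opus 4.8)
The plan is to treat Problem~\ref{prob_2} as a constrained average-cost Markov decision process and argue in three stages: (i) relax the availability constraint through a Lagrange multiplier to obtain a family of unconstrained average-cost \mdps; (ii) show that each relaxed problem admits a stationary deterministic optimal strategy of threshold form, via a monotone-\mdp\ induction driven by assumptions (\ref{eq:dominance_transitions})--(\ref{eq:tail_sum}); and (iii) invoke the structure theory of constrained \mdps\ due to Beutler--Ross and Altman to conclude that, under the feasibility assumption (\ref{eq:feasibility_assumption}), an optimal strategy for the constrained problem randomizes between at most two such threshold strategies, which is exactly the form (\ref{eq:randomized_threshold}).

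For stage (i), set $\mathcal{L}(\pi,\lambda) \triangleq \mathbb{E}_{\pi}[J] + \lambda\big(\epsilon_{\mathrm{A}} - \mathbb{E}_{\pi}[T^{(\mathrm{A})}]\big)$ for $\lambda \geq 0$, with $T^{(\mathrm{A})}$ written through the indicator $\llbracket s_t \geq f+1\rrbracket$ as in (\ref{eq:objective_response}). Minimizing $\mathcal{L}(\cdot,\lambda)$ over $\Pi_{\mathrm{S}}$ is an unconstrained average-cost \mdp\ on the finite space $\mathcal{S}_{\mathrm{S}}$ with dynamics $f_{\mathrm{S}}$ and per-stage cost $g_{\lambda}(s_t) \triangleq s_t - \lambda\llbracket s_t \geq f+1\rrbracket + \lambda\epsilon_{\mathrm{A}}$, which is independent of the action. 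Assumption (\ref{eq:unichain_assumption}) (positivity of all transition probabilities) makes the chain induced by every stationary policy unichain, so the average-cost optimality equation has a bounded solution, the optimal gain does not depend on the initial state, and a stationary deterministic optimal policy exists; I would extract it as the limit of discounted value iteration as the discount factor $\gamma\uparrow 1$ (the vanishing-discount argument, valid by the unichain property and finiteness of $\mathcal{S}_{\mathrm{S}}$ and $\mathcal{A}_{\mathrm{S}}$).

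For stage (ii), I would prove threshold optimality of the relaxed problem by a monotone-\mdp\ induction on discounted value iteration. Since $g_{\lambda}$ is action-independent, the advantage is $Q_{\gamma}(s,1)-Q_{\gamma}(s,0)=\gamma\sum_{s'}\big(f_{\mathrm{S}}(s'\mid s,1)-f_{\mathrm{S}}(s'\mid s,0)\big)V_{\gamma}(s')$, and summation by parts rewrites it as the increments of $V_{\gamma}$ integrated against the tail sums of the kernels. Assumption (\ref{eq:dominance_transitions}) states that the next-state distribution is stochastically increasing in the current state, which propagates monotonicity of $V_{\gamma}$ through the value-iteration operator, and assumption (\ref{eq:tail_sum}) states that the kernel tail-sum difference is non-decreasing, which is precisely the single-crossing/supermodularity condition that makes $Q_{\gamma}(\cdot,1)-Q_{\gamma}(\cdot,0)$ cross zero at most once and from below. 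Topkis' monotone selection theorem then gives that the set of states where adding a node is optimal is a down-set $\{s\leq\beta_{\gamma,\lambda}\}$, and letting $\gamma\uparrow 1$ produces a stationary deterministic threshold strategy $\pi_{\lambda}$ optimal for $\min_{\pi}\mathcal{L}(\pi,\lambda)$, i.e., ``add a node iff $s_t\leq\beta_{\lambda}$.'' I expect this step to be the main obstacle: $g_{\lambda}$ is not convex in $s$ (it drops by $\lambda$ at $s=f+1$), so the usual convexity route to base-stock optimality does not apply, and the induction has to be carried using assumptions (\ref{eq:dominance_transitions})--(\ref{eq:tail_sum}) directly; the clipping of $f_{\mathrm{S}}$ at the boundaries $0$ and $s_{\mathrm{max}}$ must also be checked separately at those states.

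For stage (iii), assumption (\ref{eq:feasibility_assumption}) is a Slater-type condition, so strong Lagrangian duality holds for this finite constrained average-cost \mdp\ and an optimal multiplier $\lambda^{\star}$ exists. The dual-optimal deterministic policies are threshold policies by stage (ii), and because $\lambda\mapsto\mathbb{E}_{\pi_{\lambda}}[T^{(\mathrm{A})}]$ and $\lambda\mapsto\beta_{\lambda}$ are monotone step functions, at $\lambda^{\star}$ at most two thresholds $\beta_1,\beta_2$ are dual-optimal (the left and right limits, differing by at most one); denote the corresponding policies $\pi_{\lambda_1},\pi_{\lambda_2}$ as in the statement. The constrained-\mdp\ structure result --- with a single constraint there is an optimal stationary policy that randomizes in at most one state --- then says the optimal policy for Problem~\ref{prob_2} is $\pi^{*}(s_t)=\kappa\pi_{\lambda_1}(s_t)+(1-\kappa)\pi_{\lambda_2}(s_t)$ for the unique $\kappa\in[0,1]$ making $\mathbb{E}_{\pi^{*}}[T^{(\mathrm{A})}]=\epsilon_{\mathrm{A}}$ (or $\kappa\in\{0,1\}$ if the unconstrained optimum already satisfies the availability constraint), which is exactly (\ref{eq:randomized_threshold}).
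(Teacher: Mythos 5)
Your proposal follows essentially the same route as the paper's own proof: Lagrangian relaxation of the availability constraint, threshold optimality of each relaxed average-cost \mdp under assumptions (\ref{eq:dominance_transitions})--(\ref{eq:tail_sum}), and the Beutler--Ross/Altman single-constraint result that an optimal \cmdp strategy randomizes between two deterministic Lagrangian-optimal (hence threshold) policies. The one substantive difference is that you carry the sign of the multiplier term correctly --- your per-stage cost $s_t-\lambda\llbracket s_t\geq f+1\rrbracket+\lambda\epsilon_{\mathrm{A}}$ drops by $\lambda$ at $s=f+1$ and so is not monotone in $s$, which is why your stage (ii) does the monotone-induction/Topkis work explicitly, whereas the paper writes $c_{\lambda}(s_t)=s_t+\lambda\llbracket s_t\geq f+1\rrbracket$, asserts it is non-decreasing, and delegates the threshold step entirely to cited theorems.
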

\begin{proof}
See Appendix \ref{appendix:theorem_2}.
\end{proof}
Theorem \ref{thm:structure_respone} states that under assumptions \ref{eq:feasibility_assumption}-\ref{eq:tail_sum}, there exists an optimal strategy that can be written as a mixture of two threshold strategies. \ref{eq:feasibility_assumption} ensures that Prob. \ref{prob_2} is feasible. This assumption can be met by tuning $\epsilon_{\mathrm{A}}$ (\ref{eq:system_avail_constraint}). \ref{eq:unichain_assumption}--\ref{eq:tail_sum} are mild assumptions which generally hold in practice. \ref{eq:unichain_assumption} states that the probability of several simultaneous intrusions or recoveries is non-zero; \ref{eq:dominance_transitions} conveys that a large number of healthy nodes at current time increases the probability of having a large number of healthy nodes in the future; and \ref{eq:tail_sum} states that the transition probabilities are tail-sum supermodular \cite[Eq. 9.6]{krishnamurthy_2016}.
\section{Computing Optimal Control Strategies}\label{sec:finding_strategies}
\begin{algorithm}
\scriptsize
  \SetNoFillComment
  \SetKwProg{myInput}{Input:}{}{}
  \SetKwProg{myOutput}{Output:}{}{}
  \SetKwFunction{pomdpsolver}{\textsc{pomdpsolver}}
  \SetKwProg{myalg}{Algorithm}{}{}
  \SetKwProg{myproc}{Procedure}{}{}
  \SetKw{KwTo}{inp}
  \SetKwFor{Forp}{in parallel for}{\string do}{}%
  \SetKwProg{mylamb}{\textbf{$\lambda$}}{}{}
  \SetKwFor{Loop}{Loop}{}{EndLoop}
  \DontPrintSemicolon
  \SetKwBlock{DoParallel}{do in parallel}{end}
  \myInput{
    \upshape Problem \ref{prob_1}, a node $i\in \mathcal{N}$, and a parametric optimizer $\textsc{po}$.
  }{}
  \myOutput{
    \upshape A near-optimal recovery strategy $\hat{\pi}_{i,\bm{\theta},t}$ (\ref{eq:recovery_problem}).
  }{}
  \caption{Parametric optimization for optimal recovery strategy.}\label{alg:solver_prob1}
  \myalg{}{
    if $\Delta_{\mathrm{R}} < \infty$, $\mathrm{d} \leftarrow \Delta_{\mathrm{R}}-1$,  else $\mathrm{d} \leftarrow 1$\;
    $\Theta \leftarrow [0,1]^{\mathrm{d}}$\;
   $ \begin{aligned}
      \pi_{i,\bm{\theta},t}(b_t) \triangleq
      \begin{dcases}
        \mathfrak{R} & \text{if }b_t \geq \bm{\theta}_k \text{ where }k=\max[t,\mathrm{d}]\\
        \mathfrak{W} & \text{otherwise}
      \end{dcases} \quad \forall \bm{\theta} \in \Theta, t \geq 1
    \end{aligned}$\;
    $J_{i,\bm{\theta}} \leftarrow \mathbb{E}_{\pi_{i,\bm{\theta},t,}}[J_i]$ where $J_i$ is defined in (\ref{eq:objective_recovery})\;
    $\hat{\pi}_{i,\bm{\theta},t} \leftarrow \textsc{po}(\Theta, J_{i,\bm{\theta}})$\;
    \Return $\hat{\pi}_{i,\bm{\theta},t}$
  }
\normalsize
\end{algorithm}

\begin{algorithm}
\scriptsize
  \SetNoFillComment
  \SetKwProg{myInput}{Input:}{}{}
  \SetKwProg{myOutput}{Output:}{}{}
  \SetKwFunction{pomdpsolver}{\textsc{pomdpsolver}}
  \SetKwProg{myalg}{Algorithm}{}{}
  \SetKwProg{myproc}{Procedure}{}{}
  \SetKw{KwTo}{inp}
  \SetKwFor{Forp}{in parallel for}{\string do}{}%
  \SetKwProg{mylamb}{\textbf{$\lambda$}}{}{}
  \SetKwFor{Loop}{Loop}{}{EndLoop}
  \DontPrintSemicolon
  \SetKwBlock{DoParallel}{do in parallel}{end}
  \myInput{
    \upshape Problem \ref{prob_2} and a linear programming solver $\textsc{lp}$.
  }{}
  \myOutput{
    \upshape An optimal replication strategy $\pi^{\star}$ (\ref{eq:response_problem}).
  }{}
  \caption{Linear program for optimal replication strategy.}\label{alg:solver_prob2}
  \myalg{}{
    Solve (\ref{eq:cmdp_lp}) using $\textsc{lp}$; let $\rho^{\star}$ denote the solution of (\ref{eq:cmdp_lp}) and define
    \begin{align*}
      \pi^{\star}(a|s) &\triangleq \frac{\rho^{\star}(s,a)}{\sum_{s \in \mathcal{S}_{\mathrm{S}}}\rho^{\star}(s,a)} && \forall s \in \mathcal{S}_{\mathrm{S}}, a \in \mathcal{A}_{\mathrm{S}}
    \end{align*}
    \begin{subequations}\label{eq:cmdp_lp}
    \begin{align}
      &\minimize_{\rho}\text{ } \sum_{s \in \mathcal{S}_{\mathrm{S}}}\sum_{a \in \mathcal{A}_{\mathrm{S}}}s\rho(s,a)\label{eq:cmdp_lp_objective}\\
      &\mathrm{subject}\text{ }\mathrm{to}\text{ }\rho(s,a) \geq 0 \quad \forall s \in \mathcal{S}_{\mathrm{S}}, a \in \mathcal{A}_{\mathrm{S}}\label{eq:cmdp_lp_prob_1}\\
                               &\sum_{s \in \mathcal{S}_{\mathrm{S}}}\sum_{a \in \mathcal{A}_{\mathrm{S}}} \rho(s,a) = 1\label{eq:cmdp_lp_prob_2}\\
      &\sum_{a \in \mathcal{A}_{\mathrm{S}}}\rho(s,a) = \sum_{s^{\prime} \in \mathcal{S}_{\mathrm{S}}}\sum_{a \in \mathcal{A}_{\mathrm{S}}}\rho(s^{\prime},a)f_{\mathrm{S}}(s^{\prime}|s,a) \text{ }\forall s \in \mathcal{S}_{\mathrm{S}}\label{eq:cmdp_lp_transition}\\
      &\sum_{s \in \mathcal{S}_{\mathrm{S}}}\sum_{a \in \mathcal{A}_{\mathrm{S}}} \rho(s,a)\llbracket s_t \geq f+1\rrbracket \geq \epsilon_{\mathrm{A}}\label{eq:cmdp_lp_constraint}
    \end{align}
    \end{subequations}
    \Return $\pi^{\star}$
  }
\normalsize
\end{algorithm}
The time complexity of Prob. \ref{prob_1} (optimal intrusion recovery) is in the complexity class \pspace-hard \cite[Thm. 6]{pspace_complexity}. (Recall that $\text{\textsc{p}} \subseteq \text{\textsc{np}} \subseteq \text{\textsc{pspace}}$.) The time complexity of Prob. \ref{prob_2} (optimal replication factor), on the other hand, is polynomial \cite[Thm. 1]{KHACHIYAN198053}\cite[Thm. 4.3]{altman-constrainedMDP}.

To manage the high time complexity of Prob. \ref{prob_1}, we exploit Thm. \ref{thm:stopping_policy} and parameterize $\pi_{i,t}^{\star}$ with a finite number of thresholds. Given this parametrization, we formulate Prob. \ref{prob_1} as a parametric optimization problem, which can be solved with standard optimization algorithms, e.g., stochastic approximation \cite{spsa_impl}. Algorithm \ref{alg:solver_prob1} contains the pseudocode of our solution.

We solve Prob. \ref{prob_2} with Alg. \ref{alg:solver_prob2}, which leverages the linear programming formulation of \cmdp{}s in \cite[Thm. 4.3]{altman-constrainedMDP}. It takes a linear programming solver as input, formulates Prob. \ref{prob_2} as a linear program (\ref{eq:cmdp_lp}), and solves it using the provided solver (line $5$) \cite[Thm. 6.6.1]{krishnamurthy_2016}. (Remark: The correctness of Alg. \ref{alg:solver_prob2} is independent of Thm. \ref{thm:structure_respone} and therefore solves Prob. \ref{prob_2} even if the assumptions of Thm. \ref{thm:structure_respone} do not hold.)

\begin{table*}
  \centering
  \begin{tabular}{l|ll|ll|ll|ll}
    \toprule
    \multirow{2}{*}{Method} &
      \multicolumn{2}{c|}{$\Delta_{\mathrm{R}}=5$} &
     \multicolumn{2}{c}{$\Delta_{\mathrm{R}}=15$} &
     \multicolumn{2}{c}{$\Delta_{\mathrm{R}}=25$} &
     \multicolumn{2}{c}{$\Delta_{\mathrm{R}}=\infty$}\\
      & {Time (min)} & {$J_i$ (\ref{eq:objective_recovery})} & {Time (min)} & {$J_i$ (\ref{eq:objective_recovery})} & {Time (min)} & {$J_i$ (\ref{eq:objective_recovery})} & {Time (min)} & {$J_i$ (\ref{eq:objective_recovery})} \\
      \midrule
    \cem \cite[Alg. 1]{moss2020crossentropy} & $\bm{1.04}$ & $\bm{0.12 \pm 0.01}$ & $\bm{8.84}$ & $\bm{0.17 \pm 0.06}$ & $\bm{14.48}$ & $0.19 \pm 0.08$ & $11.81$ & $0.16 \pm 0.01$\\
    \de \cite[Fig. 3]{differential_evolution} & $2.35$ & $\bm{0.12 \pm 0.03}$ & $8.98$ & $\bm{0.17 \pm 0.01}$ & $15.45$ & $\bm{0.18 \pm 0.02}$ & $22.68$ & $0.16 \pm 0.01$\\
    \bo \cite[Alg. 1]{bayesian_opt} & $29.18$ & $\bm{0.12 \pm 0.02}$ & $62.57$ & $\bm{0.17 \pm 0.05}$ & $90.26$ & $\bm{0.18 \pm 0.12}$ & $9.07$ & $\bm{0.15 \pm 0.06}$\\
    \spsa \cite[Fig. 1]{spsa_impl} & $10.78$ & $0.18 \pm 0.01$ & $88.35$ & $0.58 \pm 0.40$ & $123.85$ & $0.77 \pm 0.48$ & $\bm{4.20}$ & $0.20 \pm 0.02$\\
    \midrule
    \ppo \cite[Alg. 1]{ppo} & $28.20$ & $0.18 \pm 0.01$ & $30.01$ & $0.19 \pm 0.02$ & $30.33$ & $0.21 \pm 0.07$ & $28.95$ & $0.21 + \pm 0.09$\\
    \ipp \cite[Fig. 4]{incremental_pruning_pomdp} & $11.11$ & $\bm{0.12}$ & $237.06$ & $\bm{0.17}$ & $743.73$ & $\bm{0.18}$ & $>10000$ & not converged\\
    \bottomrule
  \end{tabular}
  \caption{Solving Prob. \ref{prob_1} (optimal intrusion recovery) using Alg. \ref{alg:solver_prob1} (upper rows) and baselines (lower rows); columns represent $\Delta_{\mathrm{R}}$; subcolumns indicate the computational time (left) and the average cost (right); numbers indicate the mean and the $95\%$ confidence interval based on $20$ random seeds.}\label{tab:results_prob_1}
\end{table*}
\begin{figure*}
\centering
\scalebox{0.52}{
      \includegraphics{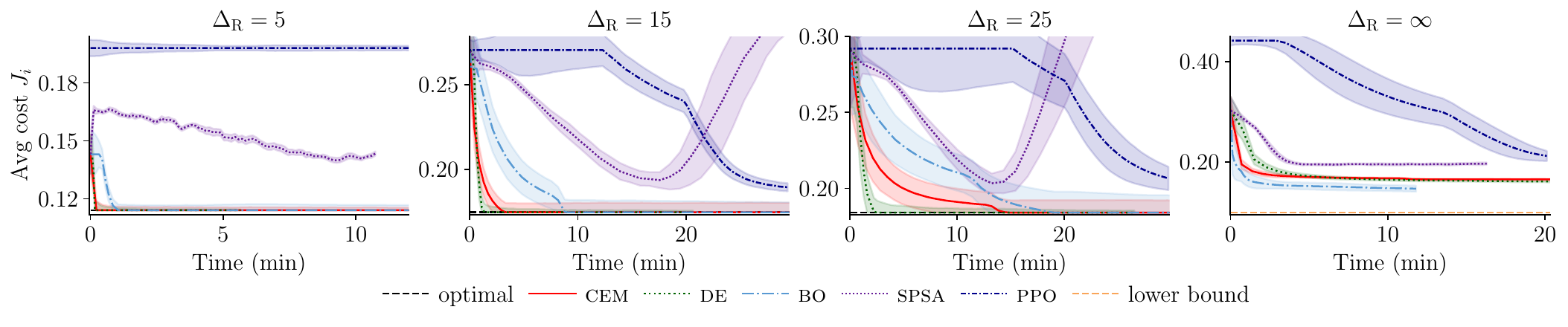}
}
\caption{Convergence curves of Alg. \ref{alg:solver_prob1} for Prob. \ref{prob_1} (optimal intrusion recovery); the curves show the mean value from evaluations with $20$ random seeds and the shaded areas indicate the $95\%$ confidence interval divided by $10$.}
    \label{fig:recovery_curves}
  \end{figure*}
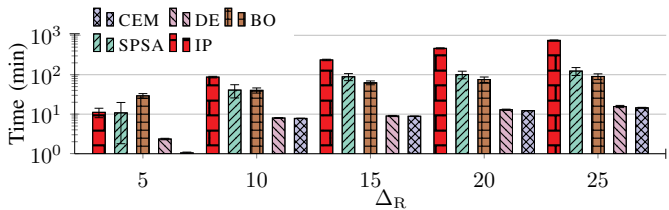
\begin{figure}
  \centering
  \scalebox{0.835}{
    \begin{tikzpicture}

\pgfplotstableread{
1.04 0.05
7.8 0.1
8.84 0.1
12.25 0.1
14.48 0.4
}\datatablee

\pgfplotstableread{
2.34 0.1
7.98 0.2
8.98 0.3
12.9 0.5
15.65 1
}\datatableee

\pgfplotstableread{
29.18 4
40.18 6
62.57 7
75.5 12
90.25 15
}\datatableeee

\pgfplotstableread{
10.78 9
40.78 15
88.78 18
100.78 22
123.78 28
}\datatableeeee

\pgfplotstableread{
11.12 3
87 4
237 10
466 15
743 17
}\datatableeeeee

%
\begin{axis}[
   ybar=-0.55cm,
    title style={align=center},
    ticks=both,
    xmax=4,
    ymax=1120,
    ymode=log,
    log basis y={10},
    axis x line = bottom,
    axis y line = left,
    axis line style={-|},
    nodes near coords align={vertical}
    ylabel=Convergence time (min),
    xtick=data,
    ymajorgrids,
    xticklabels={
        $5$, $10$, $15$, $20$, $25$
      },
    legend style={at={(0.19, 1.32)}, anchor=north, legend columns=3, draw=none},
    every axis legend/.append style={nodes={right}, inner sep = 0.2cm},
   x tick label style={align=center, yshift=-0.1cm},
   enlarge x limits=0.15,
    width=11cm,
    height=3.5cm,
    bar width=0.2cm,
    ]

\addplot[draw=black,fill=Periwinkle!40,postaction={
        pattern=crosshatch
      }] plot [error bars/.cd, y dir=both, y explicit] table [x expr=\coordindex, y index=0, y error plus index=1, y error minus index=1] {\datatablee};

\addplot[draw=black,fill=RedViolet!30,postaction={
        pattern=north west lines
      }] plot [error bars/.cd, y dir=both, y explicit] table [x expr=\coordindex, y index=0, y error plus index=1, y error minus index=1] {\datatableee};
\addplot[draw=black,fill=RawSienna!60,postaction={
        pattern=grid
      }] plot [error bars/.cd, y dir=both, y explicit] table [x expr=\coordindex, y index=0, y error plus index=1, y error minus index=1] {\datatableeee};
\addplot[draw=black,fill=SeaGreen!60,postaction={
        pattern=north east lines
      }] plot [error bars/.cd, y dir=both, y explicit] table [x expr=\coordindex, y index=0, y error plus index=1, y error minus index=1] {\datatableeeee};
\addplot[draw=black,fill=Red,postaction={
        pattern=bricks
    }] plot [error bars/.cd, y dir=both, y explicit] table [x expr=\coordindex, y index=0, y error plus index=1, y error minus index=1] {\datatableeeeee};

    \legend{\textsc{cem}, \textsc{de}, \textsc{bo}, \textsc{spsa}, \textsc{ip}}
  \end{axis}
\node[inner sep=0pt,align=center, scale=1, rotate=0, opacity=1] (obs) at (5.1,-0.72)
{
  $\Delta_{\mathrm{R}}$
};
\node[inner sep=0pt,align=center, scale=1, rotate=0, opacity=1, rotate=90] (obs) at (-0.9,1.02)
{
  Time (min)
};
\end{tikzpicture}
  }
  \caption{Mean compute time to solve Prob. \ref{prob_1} for different algorithms and values of $\Delta_{\mathrm{R}}$; the error bars indicate the $95\%$ confidence interval based on $20$ measurements.}
  \label{fig:pomdp_times}
\end{figure}
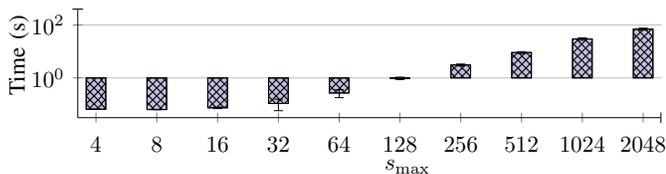
\begin{figure}
  \centering
  \scalebox{0.87}{
    \begin{tikzpicture}

\pgfplotstableread{
0.05368 0.000071
0.3324394 0.002386
2.582967 0.12
18.960885 1.64
}\datatable
%

\pgfplotstableread{
0.0658 0.0002
0.0639 0.001
0.0745 0.005
0.108 0.05
0.2696 0.09
0.9682 0.1
3.0928 0.25
9.0928 0.71
29.41 3.12
69.41 6.12
}\datatablee

\pgfplotstableread{
1926.2546 30.3
1926.2546 30.3
1926.2546 30.3
1926.2546 30.3
}\datatableee

\pgfplotstableread{
0.00145 0.0001
0.0035 0.0002
0.0051 0.0001
0.032315 0.0001
}\datatableeee

%
\begin{axis}[
  spy using outlines={rectangle, magnification=3,
   width=3cm,height=2.5cm,connect spies},
   ybar,
    title style={align=center},
    ticks=both,
    xmin=1.5,
    xmax=7.5,
    ymax=75,
    ymin=10^(-1.5),
    ymode=log,
    log basis y={10},
    axis x line = bottom,
    axis y line = left,
    axis line style={-|},
    enlarge y limits={lower, value=0.1},
    enlarge y limits={upper, value=0.22},
    xtick=data,
    ymajorgrids,
    xticklabels={
        $4$,
        $8$,
        $16$,
        $32$,
        $64$,
        $128$,
        $256$,
        $512$,
        $1024$,
        $2048$,
      },
    legend style={at={(0.42, -0.15)}, anchor=north, legend columns=2, draw=none},
    every axis legend/.append style={nodes={right}, inner sep = 0.2cm},
   x tick label style={align=center, yshift=-0.1cm},
    enlarge x limits=0.3,
    width=10.5cm,
    height=3.25cm,
    bar width=0.3cm
    ]

\addplot[draw=black,fill=Periwinkle!40,postaction={
        pattern=crosshatch
    }] plot [error bars/.cd, y dir=both, y explicit] table [x expr=\coordindex, y index=0, y error plus index=1, y error minus index=1] {\datatablee};
  \end{axis}
\node[inner sep=0pt,align=center, scale=1, rotate=0, opacity=1] (obs) at (5.1,-0.75)
{
  $s_{\mathrm{max}}$
};
\node[inner sep=0pt,align=center, scale=1, rotate=0, opacity=1, rotate=90] (obs) at (-0.9,1.02)
{
  Time (s)
};
\end{tikzpicture}
  }
  \caption{Mean compute time to solve Prob. \ref{prob_2} (optimal replication factor) with Alg. \ref{alg:solver_prob2}; error bars indicate the $95\%$ confidence interval based on $20$ measurements.}
  \label{fig:cmdp_times}
\end{figure}
\subsection{Numerical Evaluation}
We evaluate Algs. \ref{alg:solver_prob1}--\ref{alg:solver_prob2} on instantiations of Probs. \ref{prob_1}--\ref{prob_2} with different values of $\Delta_{\mathrm{R}}$ (\ref{eq:recovery_constraint}). Hyperparameters are listed in Appendix \ref{appendix:hyperparameters}. The computing environment for the evaluation is a server with a $24$-core \textsc{intel} \textsc{xeon} \textsc{gold} \small $2.10$ GHz \normalsize \textsc{cpu} and $768$ \textsc{gb} \textsc{ram}.

For each instantiation of Prob. \ref{prob_1} (optimal intrusion recovery), we run Alg. \ref{alg:solver_prob1} with four optimization algorithms: Simultaneous Perturbation Stochastic Approximation (\spsa) \cite[Fig. 1]{spsa_impl}, Bayesian Optimization (\bo) \cite[Alg. 1]{bayesian_opt}, Cross Entropy Method (\cem) \cite[Alg. 1]{moss2020crossentropy}, and Differential Evolution (\de) \cite[Fig. 3]{differential_evolution}. We compare the results with that of two baselines: Incremental Pruning (\ipp) \cite[Fig. 4]{incremental_pruning_pomdp}, which is a dynamic programming algorithm, and Proximal Policy Optimization (\ppo) \cite[Alg. 1]{ppo}, a reinforcement learning algorithm. The results are shown in Table \ref{tab:results_prob_1} and Figs. \ref{fig:recovery_curves}--\ref{fig:cmdp_times}.

We observe in the first three rows of Table \ref{tab:results_prob_1} that most of the algorithms that utilize Thm. \ref{thm:stopping_policy} find near-optimal recovery strategies for all $\Delta_{\mathrm{R}}$. By contrast, \ipp becomes computationally intractable as $\Delta_{\mathrm{R}} \rightarrow \infty$ (bottom row of Table \ref{tab:results_prob_1}).

The convergence times are shown in Figs. \ref{fig:recovery_curves}--\ref{fig:pomdp_times}. We observe that \cem, \bo, \de, and \ppo find near-optimal strategies within an hour of computation, whereas \spsa does not converge. This is probably due to a poor selection of hyperparameters.

Lastly, Fig. \ref{fig:cmdp_times} shows the performance of Alg. \ref{alg:solver_prob2}. We see that Alg. \ref{alg:solver_prob2} solves Prob. \ref{prob_2} (optimal replication factor) within $2$ minutes for systems with up to $2048$ nodes.

\section{Testbed Implementation of \tolerancee}\label{sec:implementation}
We implement \tolerancee as a proof-of-concept on a testbed. The implementation includes three layers.
\begin{table}
  \centering
\resizebox{1\columnwidth}{!}{%
\begin{tabular}{lll} \toprule
  {\textit{Server}} & {\textit{Processors}} & {\textit{\textsc{ram} (\textsc{gb})}} \\ \midrule
  $1$, \textsc{r}\footnotesize 715 2\footnotesize\textsc{u} & two $12$-core \textsc{amd} \textsc{opteron} & $64$\\
  $2$, \textsc{r}\footnotesize 715 2\footnotesize\textsc{u}  & two $12$-core \textsc{amd} \textsc{opteron} & $64$\\
  $3$, \textsc{r}\footnotesize 715 2\footnotesize\textsc{u}  & two $12$-core \textsc{amd} \textsc{opteron} & $64$\\
  $4$, \textsc{r}\footnotesize 715 2\footnotesize\textsc{u}  & two $12$-core \textsc{amd} \textsc{opteron} & $64$\\
  $5$, \textsc{r}\footnotesize 715 2\footnotesize\textsc{u}  & two $12$-core \textsc{amd} \textsc{opteron} & $64$\\
  $6$, \textsc{r}\footnotesize 715 2\footnotesize\textsc{u}  & two $12$-core \textsc{amd} \textsc{opteron} & $64$\\
  $7$, \textsc{r}\footnotesize 715 2\footnotesize\textsc{u}  & two $12$-core \textsc{amd} \textsc{opteron} & $64$\\
  $8$, \textsc{r}\footnotesize 715 2\footnotesize\textsc{u}  & two $12$-core \textsc{amd} \textsc{opteron} & $64$\\
  $9$, \textsc{r}\footnotesize 715 2\footnotesize\textsc{u}  & two $12$-core \textsc{amd} \textsc{opteron} & $64$\\
  $10$, \textsc{r}\footnotesize 630 2\footnotesize\textsc{u}  & two $12$-core \textsc{intel xeon e}\footnotesize $5$\normalsize\textsc{-}\footnotesize $2680$\normalsize & $256$\\
  $11$, \textsc{r}\footnotesize 740 2\footnotesize\textsc{u}  & $1$ $20$-core \textsc{intel xeon gold}\footnotesize $5218$\normalsize\textsc{r} & $32$\\
  $12$, \textsc{supermicro} \footnotesize 7049 \normalsize  & $2$ \textsc{tesla} \textsc{p}\footnotesize $100$, $1$ $16$-core \textsc{intel xeon} & $126$\\
  $13$, \textsc{supermicro} \footnotesize 7049 \normalsize  & $4$ \textsc{rtx} \footnotesize $8000$, $1$ $24$-core \textsc{intel xeon} & $768$\\
  \bottomrule\\
\end{tabular}}
\caption{Specifications of the physical nodes.}\label{tab:compute_cluster}
\end{table}
\subsection{The Physical Layer}\label{sec:physical_layer}
The physical layer contains a cluster with $13$ nodes connected through an Ethernet network. Specifications of the nodes can be found in Table \ref{tab:compute_cluster}.

Nodes communicate via message passing over authenticated channels. Each node runs (\textit{i}) a service replica in a Docker container \cite{docker}; (\textit{ii}) a node controller (\S \ref{sec:system_arch}); and (\textit{iii}) the \snort \ids with ruleset v2.9.17.1 \cite{snort}.
\subsection{The Virtualization Layer}\label{sec:virt_layer}
Each service replica runs a web service \cite{state_machine_reference}. The service offers two deterministic operations: (\textit{i}) a \textit{read operation}, which returns the current state of the service; and (\textit{ii}) a \textit{write operation}, which updates the state. To coordinate these operations, replicas run reconfigurable \minbft \cite[\S 4.2]{giuliana_thesis}. The throughput of our implementation of \minbft is shown in Fig. \ref{fig:tolerance_24}. The source code of our implementation is available at \cite{csle_docs,supplementary} and a description of \minbft is available in Appendix \ref{appendix_minbft}.

\begin{figure}
  \centering
  \scalebox{0.9}{
    \begin{tikzpicture}[
    dot/.style={
        draw=black,
        fill=blue!90,
        circle,
        minimum size=3pt,
        inner sep=0pt,
        solid,
    },
    ]

\node[scale=1] (kth_cr) at (0,2.15)
{
  \begin{tikzpicture}[declare function={sigma(\x)=1/(1+exp(-\x));
      sigmap(\x)=sigma(\x)*(1-sigma(\x));}]
\pgfplotstableread{
3   11    1
4   5.6   0.6
5   3.45   0.5
6   2.45   0.3
7   1.81   0.2
8   1.35   0.2
9   1.00   0.1
10   0.75   0.05
}\datatablee

\pgfplotstableread{
3   60    3.5
4   29   2
5   16   2.5
6   12   1
7   9.5   0.9
8   6.5   0.7
9   5   0.6
10   3.25   0.5
}\datatableee

\begin{axis}[
        xmin=2.9,
        xmax=10.4,
        ymin=0,
        ymax=69,
        width =1.1\columnwidth,
        height = 0.4\columnwidth,
        axis lines=center,
        xmajorgrids=true,
        ymajorgrids=true,
        major grid style = {lightgray},
        minor grid style = {lightgray!25},
        scaled y ticks=false,
        yticklabel style={
        /pgf/number format/fixed,
        /pgf/number format/precision=5
        },
        xlabel style={below right},
        ylabel style={above left},
        axis line style={-{Latex[length=2mm]}},
        smooth,
        legend style={at={(0.96,0.7)}},
        legend columns=3,
        legend style={
          draw=none,
            /tikz/column 2/.style={
                column sep=5pt,
              }
              }
              ]
\addplot[Black,mark=diamond, mark repeat=1, name path=l1, thick, domain=1:100] plot[error bars/.cd, y dir=both, y explicit] table [x index=0, y index=1, y error plus index=2, y error minus index=2] {\datatablee};
\addplot[Gray,mark=triangle, mark repeat=1, name path=l1, thick, domain=1:100] plot[error bars/.cd, y dir=both, y explicit] table [x index=0, y index=1, y error plus index=2, y error minus index=2] {\datatableee};
\legend{$1$ client, $20$ clients}
\end{axis}
\end{tikzpicture}
};

\node[inner sep=0pt,align=center, scale=0.9, rotate=0, opacity=1] (obs) at (0.6,0.75)
{
 Number of nodes ($N$)
};
\node[inner sep=0pt,align=center, scale=0.9, rotate=90, opacity=1] (obs) at (-4.55,2.4)
{
  \# Requests/s
};

  \end{tikzpicture}
  }
  \caption{Average throughput of our implementation of \minbft; error bars indicate the $95\%$ confidence interval based on $1000$ samples.}
  \label{fig:tolerance_24}
\end{figure}
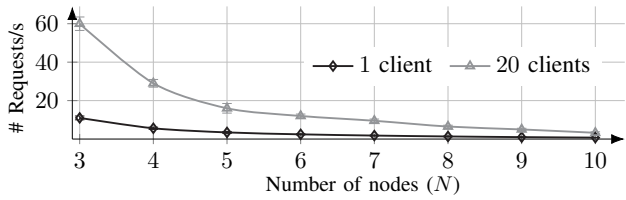

Clients access the service by issuing requests that are sent to all replicas. Each request has a unique identifier that is digitally signed. After sending a request, the client waits for a quorum of $f+1$ identical replies with valid signatures \cite{giuliana_thesis}. (Remark: a quorum is necessary to guarantee that the response is correct since the client does not know which replicas are compromised (Prop. \ref{prop:correctness}).)
\subsection{The Control Layer}\label{sec:controllers_21}
Node controllers collect \ids alerts and decide when to recover service replicas. When a replica is recovered, it starts with a new container and its state is initialized with the (identical) state from $f+1$ other replicas \cite{lazarus,4365686}.

The system controller is implemented by a crash-tolerant system that runs the \raft protocol \cite{raft}. When it decides to evict or add a node, it triggers a view change in \minbft. Figure \ref{fig:strategy_structure}.a illustrates the strategy of the system controller.
\section{Evaluation of \tolerancee}\label{sec:solution_approach}
In this section, we evaluate our implementation of \tolerancee and compare it with state-of-the-art intrusion-tolerant systems.

\subsection{Evaluation Setup}\label{sec:evaluation_setup}
\begin{figure*}
\centering
\scalebox{0.65}{
      \includegraphics{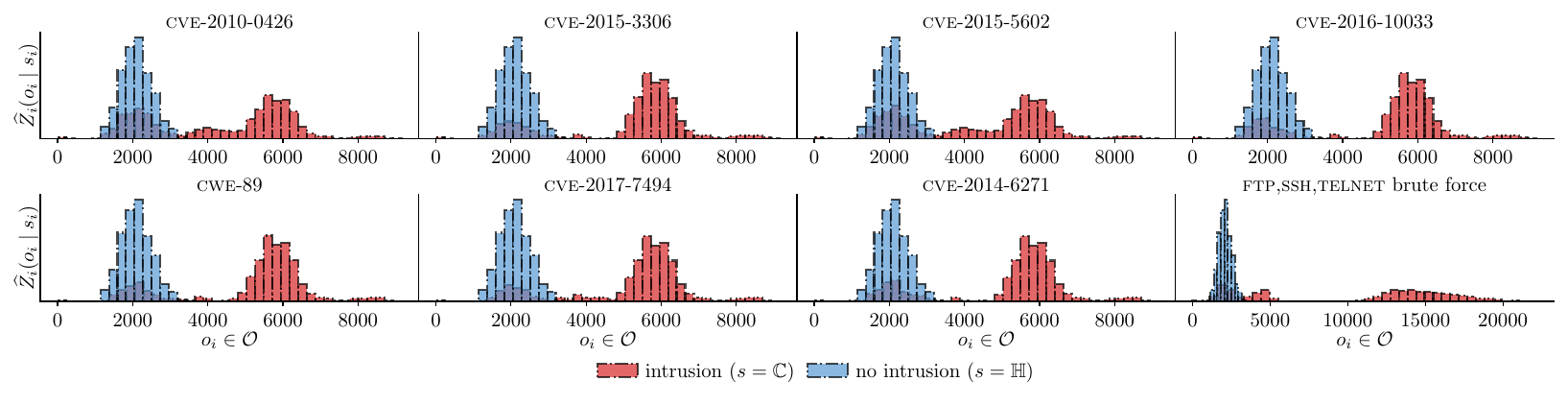}
}
\caption{Empirical distributions $\widehat{Z}_1(\cdot \mid s_i), \hdots, \widehat{Z}_{10}(\cdot \mid s_i)$ as estimates of $Z_{1}, \hdots, Z_{10}$ (\ref{eq:obs_function}) for the containers in Table \ref{tab:replica_types}.}\label{fig:tolerance_obs_dists}
\end{figure*}
An evaluation run evolves in time-steps of $60$ seconds. It starts with $N_1$ nodes from Table \ref{tab:compute_cluster}, each of which runs a service replica (see Table \ref{tab:replica_types}). At each time-step, one or more replicas may be recovered by the node controllers and a new node may be added by the system controller. When a replica is recovered, its container is replaced with a container selected randomly from the list in Table \ref{tab:replica_types}. Similarly, when a new node is added, a node from the list in Table \ref{tab:compute_cluster} is started.

Each replica has one or more vulnerabilities that can be exploited by the attacker using the steps listed in Table \ref{tab:attacker_actions}. After compromising a replica, the attacker randomly chooses between: a) participating in the consensus protocol; b) not participating; and c) participating with randomly selected messages.

Replicas are interconnected through Gbit/s connections with $0.05\%$ packet loss (emulated with \netem \cite{netem}). They receive a stream of service requests, which are sent by a client over $100$ Mbit/s connections with $0.1\%$ packet loss.

To emulate \ids events for a realistic system, each replica runs a set of background services in addition to the replicated service (see Table \ref{tab:client_profiles}). These background services are consumed by a population of background clients, who arrive with a Poisson rate $\lambda=20$ and have exponentially distributed service times with mean $\mu=4$ time-steps.

All parameters for the evaluation are listed in Appendix \ref{appendix:hyperparameters} except for $Z_i$ (\ref{eq:obs_function}), which we estimate with the empirical distribution $\widehat{Z}_i$ (see Fig. \ref{fig:tolerance_obs_dists}). We compute $\widehat{Z}_i$ based on $M=25,000$ samples \cite{supplementary,csle_docs}, knowing that $\widehat{Z}_i \overset{\text{a.s.}}{\rightarrow} Z_i$ as $M \rightarrow \infty$ (Glivenko-Cantelli theorem).

In practice, $\widehat{Z}_i$ may be implemented using any statistical intrusion detection method (e.g., anomaly detection \cite{int_prevention}). Similarly, the model parameters (e.g., the probability that a node crashes) can be defined based on domain knowledge or based on system measurements \cite{google_failure_model,facebook_faulure_model}.

\begin{table}
\centering
\begin{tabular}{lll} \toprule
  {\textit{Replica ID}} & {\textit{Operating system}} & {\textit{Vulnerabilities}} \\ \midrule
  $1$  & \ubuntu 14 & \ftp weak password\\
  $2$  & \ubuntu 20 & \ssh weak password\\
  $3$  & \ubuntu 20 & \telnet weak password\\
  $4$  & \debian 10.2 & \cve-2017-7494\\
  $5$  & \ubuntu 20 & \cve-2014-6271\\
  $6$  & \debian 10.2 & \cwe-89 on \dvwa \cite{dvwa}\\
  $7$  & \debian 10.2 & \cve-2015-3306\\
  $8$  & \debian 10.2 & \cve-2016-10033\\
  $9$  & \debian 10.2 & \cve-2010-0426, \ssh weak password\\
  $10$  & \debian 10.2 & \cve-2015-5602, \ssh weak password\\
  \bottomrule\\
\end{tabular}
\caption{Containers running the service replicas.}\label{tab:replica_types}
\end{table}

\begin{table}
\centering
\begin{tabular}{ll} \toprule
  {\textit{Background services}} & {\textit{Replica ID(s)}} \\ \midrule
  \ftp, \ssh, \textsc{mongodb}, \http, \textsc{teamspeak} & $1$\\
  \ssh, \dns, \http & $2$\\
  \ssh, \telnet, \http & $3$\\
  \ssh, \textsc{samba}, \textsc{ntp} & $4$\\
  \ssh & $5,7,8, 10$\\
  \dvwa, \irc, \ssh & $6$\\
  \textsc{teamspeak}, \http, \ssh & $9$\\
  \bottomrule\\
\end{tabular}
\caption{Background services of the service replicas.}\label{tab:client_profiles}
\end{table}

\begin{table}
\centering
\begin{tabular}{ll} \toprule
  {\textit{Replica ID}} & {\textit{Intrusion steps}} \\ \midrule
  $1$  & \tcpp \syn scan, \ftp brute force\\
  $2$  & \tcpp \syn scan, \ssh brute force\\
  $3$  & \tcpp \syn scan, \telnet brute force\\
  $4$  & \icmp scan, exploit of \cve-2017-7494\\
  $5$  & \icmp scan, exploit of \cve-2014-6271\\
  $6$  & \icmp scan, exploit of \cwe-89 on on \dvwa \cite{dvwa}\\
  $7$  & \icmp scan, exploit of \cve-2015-3306\\
  $8$  & \icmp scan, exploit of \cve-2016-10033\\
  $9$  & \icmp scan, \ssh brute force, exploit of \cve-2010-0426\\
  $10$  & \icmp scan, \ssh brute force, exploit of \cve-2015-5602\\
  \bottomrule\\
\end{tabular}
\caption{Steps to compromise service replicas.}\label{tab:attacker_actions}
\end{table}
\begin{table*}
\resizebox{1\textwidth}{!}{%
  \centering
  \begin{tabular}{l|lll|lll|lll}
    \toprule
    \multirow{2}{*}{Control strategy} &
     \multicolumn{3}{c|}{$\Delta_{\mathrm{R}}=15$} &
     \multicolumn{3}{c|}{$\Delta_{\mathrm{R}}=25$} &
     \multicolumn{3}{c}{$\Delta_{\mathrm{R}}=\infty$}\\
      & {$T^{(\mathrm{A})}$} & {$T^{(\mathrm{R})}$} & {$F^{(\mathrm{R})}$} & {$T^{(\mathrm{A})}$} & {$T^{(\mathrm{R})}$} & {$F^{(\mathrm{R})}$} & {$T^{(\mathrm{A})}$} & {$T^{(\mathrm{R})}$} & {$F^{(\mathrm{R})}$}\\
    \midrule
    \multicolumn{9}{l}{$\quad\quad\quad$$\quad\quad\quad\quad$$\quad\quad\quad\quad$$\quad\quad\quad\quad$$\quad\quad\quad\quad$$\quad\quad\quad\quad$$\quad\quad\quad\quad$$\quad\quad\quad\quad$$\quad\quad\quad\quad$$\quad\quad\quad\quad\quad\quad N_1=3$}\\
    \midrule
    \tolerancee & $\bm{0.99 \pm 0.01}$ & $\bm{1.43 \pm 0.09}$& $0.09 \pm 0.01$ & $\bm{0.99 \pm 0.01}$ & $\bm{1.43 \pm 0.09}$ & $0.09 \pm 0.01$ & $\bm{0.99 \pm 0.01}$ & $\bm{1.43 \pm 0.09}$& $0.09 \pm 0.01$\\
    \textsc{no-recovery} & $0.08 \pm 0.06$ & $10^3 \pm 0.00$& $\bm{0.00 \pm 0.00}$ & $0.08 \pm 0.06$ & $10^3 \pm 0.00$ & $\bm{0.00 \pm 0.00}$ & $0.08 \pm 0.06$ & $10^3 \pm 0.00$& $\bm{0.00 \pm 0.00}$\\
    \textsc{periodic} & $0.97 \pm 0.01$ & $6.06 \pm 1.16$& $0.065 \pm 0.01$ & $0.93 \pm 0.01$ & $8.64 \pm 1.48$ & $0.04 \pm 0.01$ & $0.08 \pm 0.06$ & $10^3 \pm 0.00$& $\bm{0.00 \pm 0.00}$\\
    \textsc{periodic-adaptive} & $0.95 \pm 0.02$ & $5.42 \pm 0.93$& $0.05 \pm 0.01$ & $0.94 \pm 0.02$ & $6.57 \pm 1.01$ & $0.03 \pm 0.01$ & $0.09 \pm 0.04$ & $10^3 \pm 0.00$& $\bm{0.00 \pm 0.00}$\\
    \midrule
    \multicolumn{9}{l}{$\quad\quad\quad$$\quad\quad\quad\quad$$\quad\quad\quad\quad$$\quad\quad\quad\quad$$\quad\quad\quad\quad$$\quad\quad\quad\quad$$\quad\quad\quad\quad$$\quad\quad\quad\quad$$\quad\quad\quad\quad$$\quad\quad\quad\quad\quad\quad N_1=6$}\\
    \midrule
    \tolerancee & $\bm{0.99 \pm 0.01}$ & $\bm{1.47 \pm 0.07}$& $0.07 \pm 0.01$ & $\bm{0.99 \pm 0.01}$ & $\bm{1.47 \pm 0.07}$ & $0.07 \pm 0.01$ & $\bm{0.99 \pm 0.01}$ & $\bm{1.47 \pm 0.07}$ & $0.07 \pm 0.01$\\
    \textsc{no-recovery} & $0.16 \pm 0.06$ & $10^3 \pm 0.00$& $\bm{0.00 \pm 0.00}$ & $0.16 \pm 0.06$ & $10^3 \pm 0.00$ & $\bm{0.00 \pm 0.00}$ & $0.16 \pm 0.06$ & $10^3 \pm 0.00$& $\bm{0.00 \pm 0.00}$\\
    \textsc{periodic} & $0.98 \pm 0.01$ & $5.96 \pm 1.16$& $0.065 \pm 0.01$ & $0.95 \pm 0.02$ & $8.13 \pm 1.48$ & $0.04 \pm 0.01$ & $0.16 \pm 0.03$ & $10^3 \pm 0.00$& $\bm{0.00 \pm 0.00}$\\
    \textsc{periodic-adaptive} & $\bm{0.99 \pm 0.01}$ & $5.02 \pm 0.34$& $0.06 \pm 0.01$ & $0.97 \pm 0.02$ & $6.16 \pm 0.54$ & $0.03 \pm 0.01$ & $0.17 \pm 0.03$ & $10^3 \pm 0.00$& $\bm{0.00 \pm 0.00}$\\
    \midrule
    \multicolumn{9}{l}{$\quad\quad\quad$$\quad\quad\quad\quad$$\quad\quad\quad\quad$$\quad\quad\quad\quad$$\quad\quad\quad\quad$$\quad\quad\quad\quad$$\quad\quad\quad\quad$$\quad\quad\quad\quad$$\quad\quad\quad\quad$$\quad\quad\quad\quad\quad\quad N_1=9$}\\
    \midrule
    \tolerancee & $\bm{1.00 \pm 0.00}$ & $\bm{1.44 \pm 0.05}$& $0.07 \pm 0.01$ & $\bm{1.00 \pm 0.00}$ & $\bm{1.44 \pm 0.05}$ & $0.07 \pm 0.01$ & $\bm{1.00 \pm 0.00}$ & $\bm{1.44 \pm 0.05}$& $0.07 \pm 0.01$\\
    \textsc{no-recovery} & $0.17 \pm 0.04$ & $10^3 \pm 0.00$& $\bm{0.00 \pm 0.00}$ & $0.17 \pm 0.04$ & $10^3 \pm 0.00$ & $\bm{0.00 \pm 0.00}$ & $0.17 \pm 0.04$ & $10^3 \pm 0.00$& $\bm{0.00 \pm 0.00}$\\
    \textsc{periodic} & $0.99 \pm 0.00$ & $5.37 \pm 0.34$& $0.06 \pm 0.01$ & $0.98 \pm 0.01$ & $7.74 \pm 0.51$ & $0.04 \pm 0.01$ & $0.17 \pm 0.04$ & $10^3 \pm 0.00$& $\bm{0.00 \pm 0.00}$\\
    \textsc{periodic-adaptive} & $\bm{1.00 \pm 0.00}$ & $4.44 \pm 0.25$& $0.06 \pm 0.01$ & $0.99 \pm 0.01$ & $6.01 \pm 0.39$ & $0.04 \pm 0.01$ & $0.18 \pm 0.02$ & $10^3 \pm 0.00$ & $\bm{0.00 \pm 0.00}$\\
    \bottomrule
  \end{tabular}}
  \caption{Comparison between \tolerancee and the baselines (\S \ref{sec:baselines}); columns indicate values of $\Delta_{\mathrm{R}}$; subcolumns represent performance metrics; row groups relate to the number of initial nodes $N_1$; numbers indicate the mean and the $95\%$ confidence interval from evaluations with $20$ random seeds.}\label{tab:stota_eval_results}
\end{table*}
\begin{figure*}
  \centering
  \scalebox{0.93}{
    \input{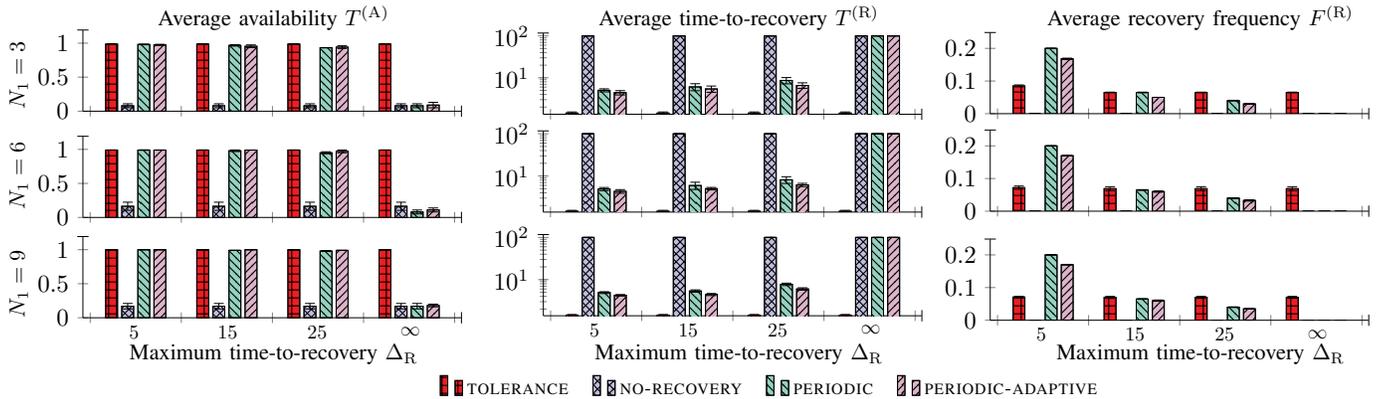}
  }
  \caption{Comparison between \tolerancee and the baselines (\S \ref{sec:baselines}); columns represent performance metrics; x-axes indicate values of $\Delta_{\mathrm{R}}$; rows relate to the number of initial nodes $N_1$; bars show the mean and error bars indicate the $95\%$ confidence interval from evaluations with $20$ random seeds.}
  \label{fig:tolerance_41}
\end{figure*}
\begin{figure}
  \centering
  \scalebox{1.08}{
    \begin{tikzpicture}[
    dot/.style={
        draw=black,
        fill=blue!90,
        circle,
        minimum size=3pt,
        inner sep=0pt,
        solid,
    },
    ]
\pgfplotsset{every tick label/.append style={font=\tiny}}
\node[scale=1] (kth_cr) at (0,2.15)
{
  \begin{tikzpicture}[declare function={sigma(\x)=1/(1+exp(-\x));
      sigmap(\x)=sigma(\x)*(1-sigma(\x));}]
\pgfplotstableread{
3   1.0
4   0.065
5   0.0
6   0.0
7   0.0
8   0.0
9   0.0
10  0.0
11  0.0
12  0.0
13  0.0
}\datatablee

\pgfplotstableread{
3   60    3.5
4   29   2
5   16   2.5
6   12   1
7   9.5   0.9
8   6.5   0.7
9   5   0.6
10   3.25   0.5
}\datatableee

\begin{axis}[
        xmin=2.7,
        xmax=13.8,
        ymin=0,
        ymax=1.3,
        width =8cm,
        height = 3cm,
        axis lines=center,
        xtick={3,4,5,6,7,8,9,10,11,12,13},
        ytick={0,0.5,1},
        scaled y ticks=false,
        yticklabel style={
        /pgf/number format/fixed,
        /pgf/number format/precision=5
        },
        xlabel style={below right},
        ylabel style={above left},
        axis line style={-{Latex[length=2mm]}},
        legend style={at={(0.96,0.7)}},
        legend columns=3,
        legend style={
            /tikz/column 2/.style={
                column sep=5pt,
              }
              }
              ]
\addplot[Black,mark=diamond, mark repeat=1, name path=l1, thick, domain=1:100] table [x index=0, y index=1] {\datatablee};
\end{axis}
\end{tikzpicture}
};

\node[inner sep=0pt,align=center, scale=0.8, rotate=0, opacity=1] (obs) at (3.7,1.58)
{
 $s$
};
\node[inner sep=0pt,align=center, scale=0.8, rotate=0, opacity=1] (obs) at (3.15,1.1)
{
 $s_{\mathrm{max}}$
};
\node[inner sep=0pt,align=center, scale=0.8, opacity=1] (obs) at (-2.3,3.2)
{
  $\pi(a=1 \mid s)$
};
\node[inner sep=0pt,align=center, scale=0.8] (time) at (0.6,1.05)
{
(a) Replication strategy.
};

\node[inner sep=0pt,align=center, scale=0.8] (time) at (0.6,-1.3)
{
(b) Recovery strategy.
};

\node[scale=1] (system) at (0.1,-0.25)
{
  \begin{tikzpicture}
\draw[->, color=black] (0,0) to (6,0);
\draw[->, color=black] (0,0) to (0,1.1);

\draw[-, color=black] (0,0.1) to (4.5,0.1) to (4.5,1) to (5.8,1);
\node[inner sep=0pt,align=center, scale=0.6] (time) at (-0.15,0.11)
{
$0$
};

\node[inner sep=0pt,align=center, scale=0.6] (time) at (-0.15,0.9)
{
$1$
};

\node[inner sep=0pt,align=center, scale=0.6] (time) at (5.85,-0.3)
{
$1$
};

\node[inner sep=0pt,align=center, scale=0.6] (time) at (0.05,-0.3)
{
$0$
};

\draw[-, color=black] (5.8,0.1) to (5.8,-0.1);

\draw[-, color=black] (0.1,0.9) to (-0.1,0.9);
\draw[-, color=black] (0.1,0.1) to (-0.1,0.1);

\draw[-, color=black] (4.5,-0.1) to (4.5,0.1);

\node[inner sep=0pt,align=center, scale=0.8] (time) at (0.2,1.28)
{
$\pi_{i,t}^{\star}(b)$
};

\node[inner sep=0pt,align=center, scale=0.8] (time) at (4.65,-0.2)
{
  $\alpha_i^{\star}=0.76$
};

\node[inner sep=0pt,align=center, scale=0.8] (time) at (6.2,0)
{
$b$
};

\node[inner sep=0pt,align=center, scale=0.8] (time) at (-0.8,0.09)
{
($\mathfrak{W}$)ait
};

\node[inner sep=0pt,align=center, scale=0.8] (time) at (-0.85,0.9)
{
($\mathfrak{R}$)ecover
};

%
%


    \end{tikzpicture}
  };

  \end{tikzpicture}
  }
  \caption{Illustration of the replication and recovery strategies for $\Delta_{\mathrm{R}}=\infty$, $N_1=6$, and $f=1$.}
  \label{fig:strategy_structure}
\end{figure}
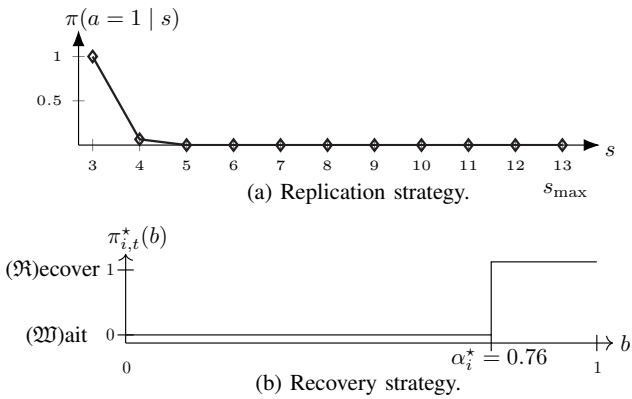
\begin{figure}
  \centering
  \scalebox{0.9}{
    \begin{tikzpicture}[
    dot/.style={
        draw=black,
        fill=blue!90,
        circle,
        minimum size=3pt,
        inner sep=0pt,
        solid,
    },
    ]
\node[scale=1] (kth_cr) at (0,0)
{
\begin{tikzpicture}
\node[scale=1] (kth_cr) at (0,2.15)
{
  \begin{tikzpicture}[declare function={sigma(\x)=1/(1+exp(-\x));
      sigmap(\x)=sigma(\x)*(1-sigma(\x));}]

\pgfplotstableread{
0.0 0.325 0.07
0.117 0.285 0.06
0.279 0.23 0.03
0.467 0.21 0.05
0.679 0.195 0.04
0.789 0.185 0.02
0.899 0.177 0.03
1.121 0.17 0.04
1.34 0.165 0.03
1.55 0.16 0.02
1.766 0.154 0.01
1.972 0.149 0.02
2.173 0.14 0.01
2.369 0.133 0.01
2.561 0.125 0.02
}\intrusionkls

\begin{axis}[
        xmin=0,
        xmax=2.7,
        ymin=0,
        ymax=0.5,
        width =1.1\columnwidth,
        height = 0.43\columnwidth,
        axis lines=center,
        xmajorgrids=true,
        ymajorgrids=true,
        major grid style = {lightgray},
        minor grid style = {lightgray!25},
        scaled y ticks=false,
        yticklabel style={
        /pgf/number format/fixed,
        /pgf/number format/precision=5
      },
        axis line style={-{Latex[length=2mm]}},
        smooth,
        legend style={at={(0.96,0.7)}},
        legend columns=3,
        legend style={
            /tikz/column 2/.style={
                column sep=5pt,
              }
              }
              ]
\addplot[Black,mark=diamond, mark repeat=1, name path=l1, thick, domain=1:100] plot[error bars/.cd, y dir=both, y explicit] table [x index=0, y index=1, y error plus index=2, y error minus index=2] {\intrusionkls};
\end{axis}
\end{tikzpicture}
};

\node[inner sep=0pt,align=center, scale=0.9, rotate=0, opacity=1] (obs) at (0.7,0.6)
{
$D_{\text{\textsc{kl}}}(Z_i(\cdot \mid \mathbb{H}) \parallel Z_i(\cdot \mid \mathbb{C}))$
};
\node[inner sep=0pt,align=center, scale=0.9, rotate=90, opacity=1] (obs) at (-4.7,2.55)
{
  $J^{\star}_i$ (\ref{eq:objective_recovery})
};
\end{tikzpicture}
};
\node[scale=1] (kth_cr) at (0,-3)
{
  \begin{tikzpicture}
\node[scale=1] (kth_cr) at (0,2.15)
{
  \begin{tikzpicture}[declare function={sigma(\x)=1/(1+exp(-\x));
      sigmap(\x)=sigma(\x)*(1-sigma(\x));}]

\pgfplotstableread{
0.0 0.13 0.025
0.038 0.151 0.02
0.097 0.153 0.045
0.147 0.155 0.025
0.195 0.157 0.02
0.25 0.16 0.025
0.306 0.162 0.02
0.363 0.16 0.035
0.407 0.17 0.02
0.62 0.19 0.035
0.93 0.23 0.045
1.264 0.28 0.07
}\intrusionkls

\begin{axis}[
        xmin=0,
        xmax=1.3,
        ymin=0,
        ymax=0.5,
        width =1.1\columnwidth,
        height = 0.43\columnwidth,
        axis lines=center,
        xmajorgrids=true,
        ymajorgrids=true,
        major grid style = {lightgray},
        minor grid style = {lightgray!25},
        scaled y ticks=false,
        yticklabel style={
        /pgf/number format/fixed,
        /pgf/number format/precision=5
      },
        axis line style={-{Latex[length=2mm]}},
        smooth,
        legend style={at={(0.96,0.7)}},
        legend columns=3,
        legend style={
            /tikz/column 2/.style={
                column sep=5pt,
              }
              }
              ]
\addplot[Black,mark=diamond, mark repeat=1, name path=l1, thick, domain=1:100] plot[error bars/.cd, y dir=both, y explicit] table [x index=0, y index=1, y error plus index=2, y error minus index=2] {\intrusionkls};
\end{axis}
\end{tikzpicture}
};

\node[inner sep=0pt,align=center, scale=0.9, rotate=0, opacity=1] (obs) at (0.7,0.6)
{
$D_{\text{\textsc{kl}}}(Z_i(\cdot \mid \mathbb{C}) \parallel \widehat{Z}_i(\cdot \mid \mathbb{C}))$
};
\node[inner sep=0pt,align=center, scale=0.9, rotate=90, opacity=1] (obs) at (-4.7,2.55)
{
  $J^{\star}_i$ (\ref{eq:objective_recovery})
};
\end{tikzpicture}
};
\end{tikzpicture}
  }
 \caption{Optimal recovery cost (\ref{eq:objective_recovery}) in function of the accuracy of the intrusion detection model $Z_i$ (\ref{eq:obs_function}); $D_{\text{\textsc{kl}}}$ refers to the Kullback-Leibler (\kl) divergence \cite{kl_divergence}.}\label{fig:kls}
\end{figure}
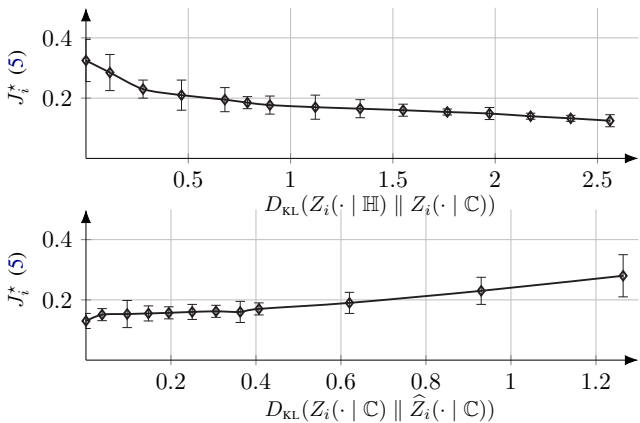
\subsection{Baseline Control Strategies}\label{sec:baselines}
We compare the control strategies of \tolerancee with those used in current intrusion-tolerant systems, whereby we choose three baseline strategies: \textsc{no-recovery}, \textsc{periodic} and \textsc{periodic-adaptive}. The first baseline, \textsc{no-recovery}, does not recover or add any nodes, which corresponds to the strategy used in traditional intrusion-tolerant systems, such as \rampart \cite{rampart} and \securering \cite{sercurering}. The second baseline, \textsc{periodic}, recovers nodes every $\Delta_{\mathrm{R}}$ time-steps but does not add any new nodes. This is the strategy used in most of the intrusion-tolerant systems proposed in prior work, including \pbft \cite{pbft}, \vmfit \cite{virt_replica, 4365686}, \wormit \cite{wormit}, \prrw \cite{4459685,5010435}, \maftia \cite{1311915}, \recover \cite{forever_service}, \scit \cite{scit, 5958797}, \coca \cite{coca}, \spire \cite{spire}, \itcisprr \cite{itcis_prr}, \crutial \cite{crutial}, \sbft \cite{sbft}, \bft-\smart \cite{bft_smart}, \uprightt \cite{upright}, and \skynet \cite{skynet}. The third baseline, \textsc{periodic-adaptive}, recovers nodes every $\Delta_{\mathrm{R}}$ time-steps and adds a node when $o_{i,t} \geq 2\mathbb{E}[O_t]$ (\ref{eq:obs_function}), which approximates the heuristic strategies used in \cite{4429182}, \sitar \cite{sitar}, \itsi \cite{itsi}, and \itua \cite{itua,1209971}.

\subsection{Evaluation Results}\label{sec:results}
The results are summarized in Fig. \ref{fig:tolerance_41} and Table \ref{tab:stota_eval_results}. The control strategies are illustrated in Fig. \ref{fig:strategy_structure} and the sensitivity of the controllers with respect to the intrusion detection model $Z_i$ (\ref{eq:obs_function}) is shown in Fig. \ref{fig:kls}.

The red bars in Fig. \ref{fig:tolerance_41} relate to \tolerancee. The blue, green, and pink bars relate to the baselines. The leftmost column in Fig. \ref{fig:tolerance_41} shows the average availability for different values of $\Delta_{\mathrm{R}}$. We observe that \tolerancee has close to $100$\% service availability in all of the cases we studied. By contrast, \textsc{no-recovery} has close to $0$\% availability. The availability achieved by \textsc{periodic} and \textsc{periodic-adaptive} is in-between; they perform similar to \tolerancee when $\Delta_{\mathrm{R}}$ is small (i.e., when recoveries are frequent) and similar to \textsc{no-recovery} when $\Delta_{\mathrm{R}} \rightarrow \infty$. We note that increasing $N_1$ from $3$ to $9$ doubles the availability of \textsc{no-recovery} but has a negligible impact on the availability of \tolerancee, \textsc{periodic}, and \textsc{periodic-adaptive}.

The second leftmost column in Fig. \ref{fig:tolerance_41} shows the average time-to-recovery $T^{(\mathrm{R})}$. We observe that $T^{(\mathrm{R})}$ of \tolerancee is an order of magnitude smaller than that of \textsc{periodic} and \textsc{periodic-adaptive} and two orders of magnitude smaller than that of \textsc{no-recovery}. This result demonstrates the benefit of feedback control, which allows the system to react promptly to intrusions.

Finally, the rightmost column in Fig. \ref{fig:tolerance_41} shows the average frequency of recoveries $F^{(\mathrm{R})}$. We note that $F^{(\mathrm{R})}$ of \tolerancee is about the same as \textsc{periodic} and \textsc{periodic-adaptive} when $\Delta_{\mathrm{R}}=15$. Interestingly, when $\Delta_{\mathrm{R}}=5$, \tolerancee has both a smaller $F^{(\mathrm{R})}$ and a smaller $T^{(\mathrm{R})}$ than \textsc{periodic} and \textsc{periodic-adaptive}.

\subsection{Discussion of the Evaluation Results}\label{sec:discussion}
The key findings from our evaluation of the \tolerancee architecture can be summarized as follows:
\begin{enumerate}[(i)]
\item \tolerancee can achieve a lower time-to-recovery and a higher service availability than state-of-the-art intrusion-tolerant systems (Table \ref{tab:stota_eval_results}, Fig. \ref{fig:tolerance_41}). The \btr constraint (\ref{eq:recovery_constraint}) guarantees that \tolerancee never has a worse time-to-recovery than current systems. The performance of \tolerancee depends on the accuracy of the intrusion detection model $\widehat{Z}_i$ (see Fig. \ref{fig:kls} and Fig. \ref{fig:tolerance_obs_dists}).
\item The solutions to both control problems of the \tolerancee architecture have threshold properties (Thms. \ref{thm:stopping_policy}--\ref{thm:structure_respone}, Cor. \ref{corr:increasing_thresholds}), which enable efficient computation of optimal strategies (Figs. \ref{fig:recovery_curves}--\ref{fig:cmdp_times}, Algs. \ref{alg:solver_prob1}--\ref{alg:solver_prob2}).
\item The benefit of using an adaptive replication strategy as opposed to a static strategy is mainly prominent when node crashes are frequent (see Fig. \ref{fig:strategy_structure} and cf. the results of \textsc{periodic} and \textsc{periodic-adaptive} in Fig. \ref{fig:tolerance_41}.)
\end{enumerate}
While the results demonstrate clear benefits of \tolerancee compared to current intrusion-tolerant systems, \tolerancee has two drawbacks. First, the performance of \tolerancee depends on the accuracy of the intrusion detection model $\widehat{Z}_i$ (see Fig. \ref{fig:tolerance_obs_dists} and Fig. \ref{fig:kls}). This means that practical deployments of \tolerancee require a statistical intrusion detection model for estimating the probability of intrusion (\ref{eq:belief_def}). This model can be realized in many ways. It can for example be based on anomaly detection methods or machine learning techniques. Further, the detection model can use different types of data sources, e.g., log files, \ids alerts, threat intelligence sources, etc. Our proof-of-concept implementation of \tolerancee uses the \snort \ids as the data source and obtains the distribution of \ids alerts using maximum likelihood estimation, which allows to compute the probability of intrusion (\ref{eq:belief_def}).

Second, \tolerancee is vulnerable to an attack where a large amount of false \ids alerts trigger excess recoveries. Analysis of such attacks requires a game-theoretic treatment, whereby problems \ref{prob_1}--\ref{prob_2} are modified to take into account how an attacker may exploit the control strategies (e.g., minimax problem formulations \cite{kim_gamesec23}). We plan to investigate such problem formulations in future work (see \S \ref{sec:solution_approach}).
\section{Related Work}\label{sec:related_work}
Intrusion tolerance is studied in several broad areas of research, including: Byzantine fault tolerance \cite{bft_systems_survey}, dependability \cite{goyal1992unified,burns1991framework}, reliability \cite{birman1993process}, survivability \cite{Kreidl2004FeedbackCA}, and cyber resilience \cite{resilience_nature,10.1145/2602087.2602116,Kott_2022,kott_resilience_3,Ellis_2022,control_rl_reviews}. This research effort has led to many mechanisms for implementing intrusion-tolerant systems, such as: intrusion-tolerant consensus protocols \cite{bft_systems_survey,goyal1992unified,burns1991framework,birman1993process,7307998,hotstuff,sbft,prime,upright,bft_smart}, software diversification schemes \cite{lazarus}, geo-replication schemes \cite{9505127}, cryptographic mechanisms \cite{270414,6038579}, and defenses against denial of service \cite{8692706}. These mechanisms provide the foundation for \tolerancee, which adds automated recovery and replication control.

While \tolerancee builds on all of the above works, we limit the following discussion to explain how \tolerancee differs from current intrusion-tolerant systems and how it relates to prior work that uses feedback control.
\subsection{Intrusion-Tolerant Systems}
Existing intrusion-tolerant systems include \pbft \cite{pbft}, \zyzzyva \cite{zyzzyva}, \hq \cite{hq}, \hotstuff \cite{hotstuff}, \vmfit \cite{virt_replica, 4365686}, \wormit \cite{wormit}, \prrw \cite{4459685}, \recover \cite{forever_service}, \scit \cite{scit, 5958797}, \coca \cite{coca}, \cite{4429182}, \spire \cite{spire}, \itcisprr \cite{itcis_prr}, \crutial \cite{crutial}, \uprightt \cite{upright}, \bft-\smart \cite{bft_smart}, \sbft \cite{sbft}, \sitar \cite{sitar}, \itua \cite{itua,1209971}, \maftia \cite{1311915}, \itsi \cite{itsi}, and \skynet \cite{skynet}. All of them are based on intrusion-tolerant consensus protocols and support recovery, either directly or indirectly through external recovery services, like \phoenix \cite{phoenix}. \tolerancee differs from these systems in two main ways.

First, \tolerancee uses feedback control to decide when to perform intrusion recovery. This contrasts with all of the referenced systems, which either use periodic or heuristic recovery schemes. (\prrw, \recover, \crutial, \scit, \sitar, \itsi, and \itua can be implemented with feedback-based recovery but they do not specify how to implement such recovery strategies.)

Second, \tolerancee uses an adaptive replication strategy. In comparison, all of the referenced systems use static replication strategies except \sitar, \cite{4429182}, \itua, and \itsi, who implement adaptive replication based on time-outs and static rules as opposed to feedback control. The benefit of feedback control is that it allows the system to adapt promptly to intrusions, not having to wait for a time-out.
\subsection{Intrusion Response through Feedback Control}
Intrusion response through feedback control is an active area of research that uses concepts and methods from various emergent and traditional fields. Most notably from reinforcement learning (see examples \cite{control_rl_reviews,4725362,janisch2023nasimemu,hammar_stadler_cnsm_20,hammar_stadler_cnsm_21, hammar_stadler_cnsm_22,hammar_stadler_tnsm,hammar_stadler_tnsm_23,kim_gamesec23,9833086,kunz2023multiagent,ko2020cyber,foley2023inroads}), control theory (see examples \cite{feedback_control_computing_systems,Miehling_control_theoretic_approaches_summary,7011201,5542603,https://doi.org/10.1002/cplx.20011,Kreidl2004FeedbackCA,9505132,tifs_hlsz_extended,li2024conjectural}), causal inference (see example \cite{causal_neil_agent}), game theory (see examples \cite{nework_security_alpcan,tambe,carol_book_intrusion_detection,levente_book,9087864,5270307}), natural language processing (see example \cite{rigaki2023cage}), evolutionary computation (see example \cite{hemberg_oreily_evo}), and general optimization (see examples \cite{7127023,6514999}). While these works have obtained promising results, none of them consider the integration with intrusion-tolerant systems as we do in this paper. Another drawback of the existing solutions is that many of them are inefficient and lack safety guarantees. Finally, and most importantly, nearly all of the above works are limited to simulation environments and it is not clear how they generalize to practical systems. In contrast, \tolerancee is practical: it can be integrated with existing intrusion-tolerant systems, it satisfies safety constraints, and it is computationally efficient.

\section{Conclusion and Future Work}\label{sec:solution_approach} %
This paper presents \tolerancee: a novel control architecture for intrusion-tolerant systems that uses two levels of control to decide when to perform recovery and when to increase the replication factor. These control problems can be formulated as two classical problems in operations research, namely, the machine replacement problem and the inventory replenishment problem. Using this formulation, we prove that the optimal control strategies have threshold structure (Thms. \ref{thm:stopping_policy}--\ref{thm:structure_respone}, Cor. \ref{corr:increasing_thresholds}) and we design efficient algorithms for computing them (Algs. \ref{alg:solver_prob1}--\ref{alg:solver_prob2}, Table \ref{tab:results_prob_1}). We evaluate \tolerancee in an emulation environment where we run $10$ types of network intrusions. The results demonstrate that \tolerancee improves service availability and reduces operational cost when compared with state-of-the-art intrusion-tolerant systems in the scenarios we studied (Fig. \ref{fig:tolerance_41}, Table \ref{tab:stota_eval_results}). The improvement of \tolerancee with respect to current systems comes at the expense of a training phase, where we first fit an intrusion detection model (Fig. \ref{fig:tolerance_obs_dists}) and then train the controllers based on this model (Figs. \ref{fig:recovery_curves}--\ref{fig:cmdp_times}).

We plan to continue this work in several directions. First, we will improve our implementation by integrating a high-performance consensus protocol, e.g., \hotstuff-\textsc{m} \cite{ittai_hybrid_failures}. Second, we intend to extend our control-theoretic model of the \tolerancee architecture to a game-theoretic model, which allows us to study defenses against dynamic attackers. Third, we plan to investigate methods for online learning of intrusion detection models and online adaptation of control strategies.

\section{Acknowledgments}
The authors would like to thank Forough Shahab Samani, Xiaoxuan Wang, and Duc Huy Le for their constructive comments on a draft of this paper.

\appendices

\section{Belief Computation}\label{appendix_belief_comp}
The belief state for each node $i \in \mathcal{N}_t$ can be computed recursively as
\begin{align*}
&b_{i,t}(s_{i,t}) \numeq{a} \mathbb{P}\big[s_{i,t} \mid \overbrace{o_{i,t},a_{i,t-1},o_{i,t-1},\hdots,a_{i,1}, o_{i,1}, b_{i,1}}^{\mathbf{h}_{i,t}}\big]\\
  &\numeq{b}\frac{\mathbb{P}\big[o_{i,t} \mid s_{i,t},a_{i,t-1},\mathbf{h}_{i,t-1}]\mathbb{P}[s_{i,t} \mid a_{i,t-1},\mathbf{h}_{i,t-1}\big]}{\mathbb{P}\big[o_{i,t} \mid a_{i,t-1},\mathbf{h}_{i,t-1}\big]}\nonumber\\
  &\numeq{c}\frac{Z_i(o_{i,t} \mid s_{i,t})\mathbb{P}[s_{i,t} \mid a_{i,t-1},\mathbf{h}_{i,t-1}\big]}{\mathbb{P}\big[o_{i,t} \mid a_{i,t-1}\mathbf{h}_{i,t-1}\big]}\nonumber\\
  &\numeq{d}\frac{Z_i(o_{i,t} \mid s_{i,t})\sum_{s \in \mathcal{S}}b_{i,t-1}(s)f_{\mathrm{N},i}(s_{i,t} \mid s, a_{t-1})}{\mathbb{P}\big[o_{i,t} \mid a_{i,t-1}\mathbf{h}_{i,t-1}\big]}\nonumber\\
  &\numeq{e}\frac{Z_i(o_{i,t} \mid s_{i,t})\sum_{s \in \mathcal{S}}b_{i,t-1}(s)f_{\mathrm{N},i}(s_{i,t} \mid s, a_{t-1})}{\sum_{s^{\prime},s\in \mathcal{S}}Z_i(o_{i,t}\mid s^{\prime})f_{\mathrm{N},i}(s^{\prime} \mid s, a_{i,t-1})b_{i,t-1}(s)}.\nonumber
\end{align*}
(a) follows from the definition of $b_{i,t}$ (\ref{eq:belief_def}); (b) is an expansion of the conditional probability using Bayes rule; (c) follows from the Markov property of $Z_i$ (\ref{eq:obs_function}); (d) follows from the Markov property of $f_{\mathrm{N},i}$ (\ref{eq:recovery_dynamics}); and (e) follows by definition of $Z_i$ (\ref{eq:obs_function}) and $f_{\mathrm{N},i}$ (\ref{eq:recovery_dynamics}). Computing the belief state through the expression in (e) requires $O(|\mathcal{S}|^2)$ scalar multiplications.

\section{Proof of Theorem \ref{thm:stopping_policy}}\label{appendix:theorem_1}
Solving (\ref{eq:recovery_problem}) corresponds to solving $N_t$ finite, stationary, and constrained Partially Observed Markov Decision Processes (\pomdp{}s) with bounded costs and the average cost optimality criterion. Since the \pomdp{}s are equivalent except for the parameters $p_{\mathrm{A},i},p_{\mathrm{C}_1,i},p_{\mathrm{C}_2,i},p_{\mathrm{U},i}$ (\ref{eq:recovery_dynamics}) and the observation distribution $Z_{i}$ (\ref{eq:obs_function}) it suffices to prove the statement for a single \pomdp.

It follows from (\ref{eq:recovery_dynamics}) and assumption \ref{eq:assumption_A} that $f_{\mathrm{N},i}(s^{\prime} \mid s, a) > 0$ for all $t,s^{\prime},s,a$. Given this property and assumption \ref{eq:assumption_D}, we know that there exists a deterministic optimal strategy $\pi_{i,t}^{\star}$ for which the limit in (\ref{eq:objective_recovery}) exists and which satisfies
\begin{align}
&\pi_{i,t}^{\star}(b_{i,t}) \in \argmin_{a \in \{\mathfrak{W}, \mathfrak{R}\}}\bigg[c_{\mathrm{N}}(b_{i,t}, a) + \sum_{o \in \mathcal{O}}\mathbb{P}\left[o| b_{i,t}, a\right]V_{i,t}^{\star}(b_{i,t+1})\bigg]\label{eq:belief_bellman}
\end{align}
for all $b_{i,t}$ and $t \geq 1$ \cite[Prop. 1]{xiong2022sublinear}, where $c_{\mathrm{N}}(b_{i,t}, a)$ is the expected immediate cost of $a$ given $b_{i,t}$ and $V_{i,t}^{\star}$ is the value function \cite[Thm. 7.4.1]{krishnamurthy_2016}.

Each of the constrained \pomdp{}s with infinite horizons defined in (\ref{eq:recovery_problem}) can be converted into a sequence of unconstrained \pomdps $(\mathcal{M}_{i,k})_{k=1,2,\hdots}$ with horizon $T=\Delta_{\mathrm{R}}$, where $a_{i,T}=\mathfrak{R}$ ensures that (\ref{eq:recovery_constraint}) is satisfied. This sequence is equivalent to the original \pomdp because
\begin{align}
  &\argmin_{\pi_{i,t}}\left[\lim_{T\rightarrow \infty}\mathbb{E}_{\pi_{i,t}}\left[\frac{1}{T}\sum_{t=1}^{T}C_{i,t} \mid B_{i,1}=p_{\mathrm{A},i}\right]\right]\nonumber\\
  &\numeq{a}\argmin_{\pi_{i,t}}\Bigg[\lim_{T\rightarrow \infty}\frac{1}{T}\bigg(\mathbb{E}_{\pi_{i,t}}\bigg[\sum_{t=1}^{\Delta_{\mathrm{R}}}C_{i,t} \mid B_{i,1}=p_{\mathrm{A},i}\bigg] + \nonumber\\
  &\quad\quad\quad\quad\quad\mathbb{E}_{\pi_{i,t}}\bigg[\sum_{t=\Delta_{\mathrm{R}}}^{2\Delta_{\mathrm{R}}}C_{i,t} \mid B_{i,\Delta_{\mathrm{R}}}=p_{\mathrm{A},i}\bigg] + \hdots\bigg)\Bigg]\nonumber\\
  &\numeq{b}\argmin_{\pi_{i,t}}\left[\lim_{T\rightarrow \Delta_{\mathrm{R}}}\mathbb{E}_{\pi_{i,t}}\left[\frac{1}{T}\sum_{t=1}^{\Delta_{\mathrm{R}}}C_{i,t} \mid B_{i,1}=p_{\mathrm{A},i}\right]\right],\label{eq:decomposed_recovery_prob}
\end{align}
where $C_{i,t}$ is a random variable representing the cost of node $i$ at time $t$; (a) follows from linearity of $\mathbb{E}$; (b) follows because all elements inside the parentheses are equivalent, which means that a strategy that minimizes one element minimizes the whole expression.

Consider the threshold structure in (\ref{eq:threhsold_structure}). We know that a strategy $\pi_i^{\star}$ that achieves the minimization in (\ref{eq:decomposed_recovery_prob}) induces a partition of $[0,1]$ into two regions at each time $t$: a wait region $\mathcal{W}_t$ where $\pi_i^{\star}(b)=\mathfrak{W}$, and a recovery region $\mathcal{R}_t$ where $\pi_i^{\star}(b)=\mathfrak{R}$. The idea behind the proof of (\ref{eq:threhsold_structure}) is to show that $\mathcal{R}_t = [\alpha_{i,t}^{\star}, 1]$ for all $t$ and some thresholds $(\alpha_{i,t}^{\star})_{t=1,\hdots,T}$. Towards this deduction, note that $\mathcal{W}_t$ and $\mathcal{R}_t$ are connected sets \cite[Thm. 12.3.4]{krishnamurthy_2016}. This follows because (\textit{i}) the transition and observation matrices are \tpp \cite[Def. 10.2.1]{krishnamurthy_2016} (it is a consequence of assumptions \ref{eq:assumption_A}--\ref{eq:assumption_C}, and \ref{eq:assumption_E}); (\textit{ii}) $c_{\mathrm{N}}(s_{i,t},a_{i,t})$ (\ref{eq:objective_recovery}) is submodular \cite[Def. 12.3.2]{krishnamurthy_2016}; and (\textit{iii}) $c_{\mathrm{N}}(s_{i,t},a_{i,t})$ is weakly increasing in $s_{i,t}$ for each $a_{i,t}$.

As $\mathcal{R}_t$ is connected, $1 \in \mathcal{R}_t \iff \mathcal{R}_t=[\alpha_{i,t}^{\star}, 1]$. Hence it suffices to show that $1 \in \mathcal{R}_t$. We obtain from (\ref{eq:belief_bellman}) that
\begin{align*}
1 \in \mathcal{R}_t \iff&  \mathbb{E}_{B_{i,t+1}}\left[V_{i,t+1}^{\star}(B_{i,t+1}) \mid A_{i,t}=\mathfrak{R}, B_{i,t}=1\right] \leq\\
  &\mathbb{E}_{B_{i,t+1}}\left[V_{i,t+1}^{\star}(B_{i,t+1}) \mid A_{i,t}=\mathfrak{W}, B_{i,t}=1\right].
\end{align*}
Clearly
\begin{align*}
\mathbb{E}_{B_{i,t+1}}\left[B_{i,t+1} \mid A_{i,t}=\mathfrak{R}, B_{i,t}=1\right] \leq\\
 \mathbb{E}_{B_{i,t+1}}\left[B_{i,t+1} \mid A_{i,t}=\mathfrak{W}, B_{i,t}=1\right].
\end{align*}
Further $B^{\prime} \leq B \implies V_{i,t+1}^{\star}(B^{\prime}) \leq V_{i,t+1}^{\star}(B)$ for all $i \in \mathcal{N}_t$ and $t \geq 1$ \cite[Thm. 11.2.1]{krishnamurthy_2016}, which implies that $1 \in \mathcal{R}_t$ for all $t$ (see Fig. \ref{fig:thm1_illustration}). \qed

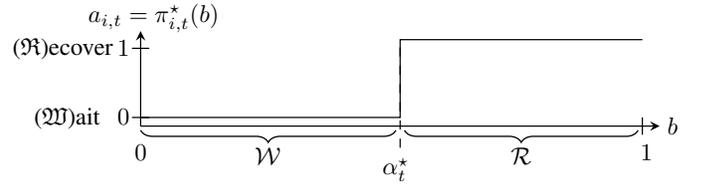
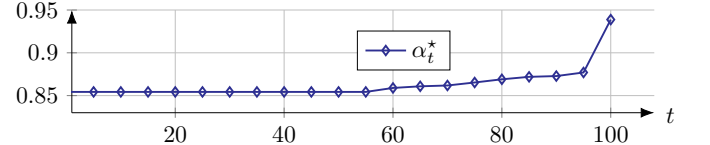
\begin{figure}
  \centering
  \begin{subfigure}[t]{1\columnwidth}
    \centering
    \scalebox{1.15}{
            \begin{tikzpicture}[fill=white, >=stealth,
    node distance=3cm,
    database/.style={
      cylinder,
      cylinder uses custom fill,
      shape border rotate=90,
      aspect=0.25,
      draw}]

    \tikzset{
node distance = 9em and 4em,
sloped,
   box/.style = {%
    shape=rectangle,
    rounded corners,
    draw=blue!40,
    fill=blue!15,
    align=center,
    font=\fontsize{12}{12}\selectfont},
 arrow/.style = {%
    line width=0.1mm,
    -{Triangle[length=5mm,width=2mm]},
    shorten >=1mm, shorten <=1mm,
    font=\fontsize{8}{8}\selectfont},
}

\node[scale=1] (system) at (0,-1)
{
  \begin{tikzpicture}
\draw[->, color=black] (0,0) to (6,0);
\draw[->, color=black] (0,0) to (0,1.1);

\draw[-, color=black] (0,0.1) to (3,0.1) to (3,1) to (5.8,1);
\node[inner sep=0pt,align=center, scale=0.8] (time) at (-0.15,0.11)
{
$0$
};

\node[inner sep=0pt,align=center, scale=0.8] (time) at (-0.15,0.9)
{
$1$
};

\node[inner sep=0pt,align=center, scale=0.8] (time) at (5.9,-0.3)
{
$1$
};

\node[inner sep=0pt,align=center, scale=0.8] (time) at (0.05,-0.3)
{
$0$
};

\draw[-, color=black] (5.8,0.1) to (5.8,-0.1);

\draw[-, color=black] (0.1,0.9) to (-0.1,0.9);
\draw[-, color=black] (0.1,0.1) to (-0.1,0.1);

\draw[-, color=black, dashed] (3,-0.25) to (3,1.0);

\node[inner sep=0pt,align=center, scale=0.8] (time) at (0.2,1.25)
{
$a_{i,t} = \pi_{i,t}^{\star}(b)$
};

\node[inner sep=0pt,align=center, scale=0.8] (time) at (3,-0.5)
{
  $\alpha^{\star}_{t}$
};

\node[inner sep=0pt,align=center, scale=0.8] (time) at (6.2,0)
{
$b$
};

\node[inner sep=0pt,align=center, scale=0.8] (time) at (-0.8,0.09)
{
($\mathfrak{W}$)ait
};

\node[inner sep=0pt,align=center, scale=0.8] (time) at (-0.85,0.9)
{
($\mathfrak{R}$)ecover
};

\draw [decorate,decoration={brace,amplitude=4pt,mirror,raise=4.5pt},yshift=0pt,line width=0.15mm]
(0,0.11) -- (2.95,0.11) node [black,midway,xshift=0.2cm] {};

\node[inner sep=0pt,align=center, scale=0.8] (time) at (1.52,-0.35)
{
$\mathcal{W}$
};

\node[inner sep=0pt,align=center, scale=0.8] (time) at (4.45,-0.35)
{
$\mathcal{R}$
};

\draw [decorate,decoration={brace,amplitude=4pt,mirror,raise=4.5pt},yshift=0pt,line width=0.15mm]
(3.05,0.11) -- (5.75,0.11) node [black,midway,xshift=0.2cm] {};

    \end{tikzpicture}
  };

\end{tikzpicture}
    }
    \caption{Structure of an optimal threshold recovery strategy $\pi^{\star}_{i,t}$; $\mathcal{W}$ and $\mathcal{R}$ denote the wait and recovery sets.}
    \label{fig:threshold_strategy_structure}
  \end{subfigure}
  \hfill
  \begin{subfigure}[t]{1\columnwidth}
    \centering
    \scalebox{0.9}{
      \begin{tikzpicture}[
    dot/.style={
        draw=black,
        fill=blue!90,
        circle,
        minimum size=3pt,
        inner sep=0pt,
        solid,
    },
    ]
\node[scale=1] (kth_cr) at (0,2.15)
{
  \begin{tikzpicture}[declare function={sigma(\x)=1/(1+exp(-\x));
      sigmap(\x)=sigma(\x)*(1-sigma(\x));}]
\pgfplotstableread{
100 0.9388150125316148
95 0.876998583590625
90 0.8728223056679183
85 0.8719118131284906
80 0.8690461306980984
75 0.865317731849339
70 0.8618157894077201
65 0.8608482535785417
60 0.8589515038171087
55 0.8543072944641904
50 0.8543072944641904
45 0.8543072944641904
40 0.8543072944641904
35 0.8543072944641904
30 0.8543072944641904
25 0.8543072944641904
20 0.8543072944641904
15 0.8543072944641904
10 0.8543072944641904
5 0.8543072944641904
0 0.8543072944641904
}\datatablee
\begin{axis}[
        xmin=1,
        xmax=108,
        ymin=0.83,
        ymax=0.95,
        width =1.15\columnwidth,
        height = 0.35\columnwidth,
        axis lines=center,
        xmajorgrids=true,
        ymajorgrids=true,
        major grid style = {lightgray},
        minor grid style = {lightgray!25},
        scaled y ticks=false,
        yticklabel style={
        /pgf/number format/fixed,
        /pgf/number format/precision=5
        },
        xlabel style={below right},
        ylabel style={above left},
        axis line style={-{Latex[length=2mm]}},
        legend style={at={(0.65,0.8)}},
        legend columns=3,
        legend style={
            /tikz/column 2/.style={
                column sep=5pt,
              }
              }
              ]
\addplot[Blue,mark=diamond, name path=l1, thick, domain=1:200] table [x index=0, y index=1] {\datatablee};

\legend{$\alpha^{\star}_{t}$}
\end{axis}
\end{tikzpicture}
};
\node[inner sep=0pt,align=center, scale=1, rotate=0, opacity=1] (obs) at (5.07,1.5)
{
  $t$
};
  \end{tikzpicture}
    }
    \caption{Optimal recovery thresholds $\alpha^{\star}_{i,t}$, where $\Delta_R=100$.}
    \label{fig:recovery_thresholds}
  \end{subfigure}
  \caption{Illustration of Thm. \ref{thm:stopping_policy} and Cor. \ref{corr:increasing_thresholds}; the parameters for computing the figures are listed in Appendix \ref{appendix:hyperparameters}.}\label{fig:thm1_illustration}
\end{figure}

\section{Proof of Corollary \ref{corr:increasing_thresholds}}\label{appendix:corollary_1}
When $\Delta_{\mathrm{R}} \rightarrow \infty$, (\ref{eq:recovery_problem}) reduces to a set of unconstrained, stationary, and finite \pomdp{}s, which means that there exists an optimal deterministic stationary strategy for each node \cite[Prop. 1]{xiong2022sublinear}. Such a strategy partitions the belief space into two time-independent regions $\mathcal{W}$ and $\mathcal{R}$, which means that the recovery threshold $\alpha_i^{\star}$ is time-independent (Thm. \ref{thm:stopping_policy}).

When $\Delta_{\mathrm{R}} < \infty$, it follows from (\ref{eq:belief_bellman}) that it is optimal to recover node $i \in \mathcal{N}_t$ at time $t$ iff
\begin{align*}
  & c_{\mathrm{N}}(b_{i,t}, \mathfrak{R}) + \mathbb{E}_{B_{i,t+1}}[V^{\star}_{i,t}(B_{i,t+1}) \mid a_{i,t}=\mathfrak{R}, b_{i,t}]  \leq\\
  &c_{\mathrm{N}}(b_{i,t}, \mathfrak{W}) + \mathbb{E}_{B_{i,t+1}}[V^{\star}_{i,t}(B_{i,t+1}) \mid a_{i,t}=\mathfrak{W},b_{i,t}]\\
&\iff 1 \leq \eta b_{i,t} + W_{i,t}(b_{i,t}) \iff b_{i,t} \geq \underbrace{\frac{1 -W_{i,t}(b_{i,t})}{\eta}}_{\alpha^{\star}_{i,t}},
\end{align*}
where $W_{i,t}(b_{i,t}) \triangleq \mathbb{E}_{B_{i,t+1}}[V_{i,t+1}^{\star}(B_{i,t+1}) \mid a_{i,t}=\mathfrak{W}, b_{i,t}] - \mathbb{E}_{B_{i,t+1}}[V^{\star}_{i,t+1}(B_{i,t+1}) \mid a_{i,t}=\mathfrak{R}, b_{i,t}]$

Hence $\alpha^{\star}_{i,t}\leq \alpha^{\star}_{i,t+1}$ iff $W_{i,t}$ is non-increasing in $t$ for all $b_{i,t}$ and $i$. We prove this using mathematical induction on $k=T,T-2,\hdots,1$. For $k=T-1$ we have $W_{i,T-1}(b_{i,T-1})=0$ for all $b$ and $i$. Next, for $k=T-2$ we have
\begin{align*}
  &W_{i,T-2}(b_{i,T-2}) = \min\big[1 + \mathbb{E}_{B_{i,T}}[V^{\star}_{i,T}(B_{i,T}) \mid a_{i,T-1}=\mathfrak{R}], \\
  &\mathbb{E}_{B_{i,T-1},B_{i,T}}[\eta B_{i,T-1} + V^{\star}_{i,T}(B_{i,T}) \mid a_{i,T-2}=a_{i, T-1}=\\
  & \quad\quad \mathfrak{W}, b_{T-2}]\big]-\min\big[1 -\mathbb{E}_{B_{i,T}}[V^{\star}_{i,T}(B_{i,T}) \mid a_{i,T-1}=\mathfrak{R}], \\
  &\mathbb{E}_{B_{i,T-1},B_{i,T}}[\eta B_{i,T-1} + V^{\star}_{i,T}(B_{i,T}) \\
   &\quad\quad\quad\quad\quad\quad\text{ }\mid a_{i,T-2}=\mathfrak{R},a_{i,T-1}=\mathfrak{W}, b_{i,T-2}]\big]\\
  &\numgeq{a} 0 = W_{i,T-1}(b_{i,T-2}),
\end{align*}
where (a) follows because $\mathbb{E}[B_{i,t+1} \mid a_{i,t}=\mathfrak{W}, b_{i,t}] \geq \mathbb{E}[B_{i,t} \mid a_{i,t}=\mathfrak{R}, b_{i,t}]$ by definition (\ref{eq:recovery_dynamics}).

Assume by induction that $W_{i,k}(b_i) \geq W_{i,k+1}(b_i)$ for $k=t,t+1,\hdots,T-3$ and all $b_i$ and $i$. We will show that this assumption implies $W_{i,k-1}(b_i) \geq W_{i,k}(b_i)$ for all $b_i$ and $i$.

There are three cases to consider:
\begin{enumerate}
\item If $B_{i,k} \in \mathcal{R}$ both when $a_{i,k-1} = \mathfrak{W}$ and when  $a_{i,k-1} = \mathfrak{R}$, then
\begin{align*}
&W_{i,k-1}(b_{i,k-1}) = \mathbb{E}_{B_{i,k}}[V_{i,k}^{\star}(B_{i,k}) \mid a_{i,k-1}=\mathfrak{W}, b_{i,k-1}] -\\
  &\quad\quad\quad\quad\quad\quad\quad\text{ }\text{ }\mathbb{E}_{B_{i,k}}[V_{i,k}^{\star}(B_{i,k}) \mid a_{i,k-1}=\mathfrak{R}, b_{i,k-1}]\\
  &\numeq{a} \mathbb{E}_{B_{i,k+1}}[1+ V_{i,k+1}^{\star}(B_{i,k+1}) \mid a_{i,k}=\mathfrak{R}] -\\
  &\quad\text{ }\mathbb{E}_{B_{i,k+1}}[1+ V_{i,k+1}^{\star}(B_{i,k+1}) \mid a_{i,k}=\mathfrak{R}]\\
& = \mathbb{E}_{B_{i,k+1}}[V_{i,k+1}^{\star}(B_{i,k+1}) \mid a_{i,k}=\mathfrak{R}] -\\
  &\quad\text{ }\mathbb{E}_{B_{i,k+1}}[V_{i,k+1}^{\star}(B_{i,k+1}) \mid a_{i,k}=\mathfrak{R}]= W_{i,k}(b_{i,k-1}),
\end{align*}
where (a) follows from (\ref{eq:belief_bellman}).
\item If $B_{i,k} \in \mathcal{W}$ both when $a_{i,k-1} = \mathfrak{W}$ and when $a_{i,k-1} = \mathfrak{R}$, then
\begin{align*}
&W_{i,k-1}(b_{i,k-1}) = \mathbb{E}_{B_{i,k}}[V_{i,k}^{\star}(B_{i,k}) \mid a_{i,k-1}=\mathfrak{W}, b_{i,k-1}] -\\
  &\quad\quad\quad\quad\quad\quad\quad\text{ }\text{ }\mathbb{E}_{B_{i,k}}[V_{i,k}^{\star}(B_{i,k}) \mid a_{i,k-1}=\mathfrak{R}, b_{i,k-1}]\\
  &= \mathbb{E}_{B_{i,k},B_{i,k+1}}[\eta B_{i,k} + V_{i,k+1}^{\star}(B_{i,k+1}) \mid \\
  &\quad\quad\quad\quad\quad\quad a_{i,k}=a_{i,k-1}=\mathfrak{W}, b_{i,k-1}] - \\
  &\quad\text{ }\mathbb{E}_{B_{i,k},B_{i,k+1}}[\eta B_{i,k} + V_{i,k+1}^{\star}(B_{i,k+1})\mid\\
  & \quad\quad\quad\quad\quad\quad a_{i,k}=\mathfrak{W},a_{i,k-1}=\mathfrak{R}, b_{i,k-1}]\\
& \numgeq{a} \mathbb{E}_{B_{i,k+1}}[V_{i,k+1}^{\star}(B_{i,k+1}) \mid a_{i,k}=a_{i,k-1}=\mathfrak{W},b_{i,k-1}] -\\
  &\quad\text{ }\mathbb{E}_{B_{i,k+1}}[V_{i,k+1}^{\star}(B_{i,k+1}) \mid a_{i,k}=\mathfrak{W},a_{i,k-1}=\mathfrak{R},b_{i,k-1}]\\
  &= W_{i,k}(b_{i,k-1}),
\end{align*}
where (a) follows because $\mathbb{E}[B_i^{\prime} \mid a_i=\mathfrak{W}, b_i] \geq \mathbb{E}[B_i^{\prime} \mid a_i=\mathfrak{R}, b_i]$ by definition (\ref{eq:recovery_dynamics}).
\item If $B_{i,k} \in \mathcal{R}$ when $a_{i,k-1} = \mathfrak{W}$, and $B_{i,k} \in \mathcal{W}$ when $a_{i,k-1} = \mathfrak{R}$, then
\begin{align*}
&W_{i,k-1}(b_{i,k-1}) = \mathbb{E}_{B_{i,k}}[V_{i,k}^{\star}(B_{i,k}) \mid a_{i,k-1}=\mathfrak{W}, b_{i,k-1}] -\\
  &\quad\quad\quad\quad\quad\quad\quad\text{ }\text{ }\mathbb{E}_{B_{i,k}}[V_{i,k}^{\star}(B_{i,k}) \mid a_{i,k-1}=\mathfrak{R}, b_{i,k-1}]\\
  &= \mathbb{E}_{B_{i,k+1}}[1+ V_{i,k+1}^{\star}(B_{i,k+1}) \mid a_{i,k}=\mathfrak{R}] -\\
  &\mathbb{E}_{B_{i,k}, B_{i,k+1}}[\eta B_{i,k} + V_{i,k+1}^{\star}(B_{i,k+1})\mid a_{i,k}=\mathfrak{W},a_{i,k-1}=\mathfrak{R}]\\
& \numgeq{a} \mathbb{E}_{B_{i,k+1}}[1+ V_{i,k+1}^{\star}(B_{i,k+1}) \mid a_{i,k}=\mathfrak{R}] -\\
  &\quad\text{ }\mathbb{E}_{B_{i,k+1}}[1+ V_{i,k+1}^{\star}(B_{i,k+1}) \mid a_{i,k}=\mathfrak{R}]\\
& \geq \mathbb{E}_{B_{i,k+1}}[V_{i,k+1}^{\star}(B_{i,k+1}) \mid a_{i,k}=\mathfrak{R}] -\\
  &\quad\text{ }\mathbb{E}_{B_{i,k+1}}[V_{i,k+1}^{\star}(B_{i,k+1}) \mid a_{i,k}=\mathfrak{R}]\numeq{d} W_{i,k}(b_{i,k-1}),
\end{align*}
where (a) follows from (\ref{eq:belief_bellman}).
\end{enumerate}
The case where $a_{i,k-1} = \mathfrak{R} \implies B_{i,k} \in \mathcal{R}$ and $a_{i,k-1} = \mathfrak{W} \implies B_{i,k} \in \mathcal{W}$ can be discarded due to Thm. \ref{thm:stopping_policy} since $\mathbb{E}[B_i^{\prime} \mid a_i=\mathfrak{W}, b_i] \geq \mathbb{E}[B_i^{\prime} \mid a_i=\mathfrak{R}, b_i]$, which means that if $B_{i,k} \in \mathcal{R}$ when $a_{i,k-1} = \mathfrak{R}$, then $B_{i,k} \in \mathcal{R}$ also when $a_{i,k-1} = \mathfrak{W}$. It follows by induction that $W_{i,t}(b) \geq W_{i,t+1}(b)$ for all $t$, $b$, and $i$. \qed
\section{Proof of Theorem \ref{thm:structure_respone}}\label{appendix:theorem_2}
Solving (\ref{eq:response_problem}) corresponds to solving a finite and stationary Constrained Markov Decision Process (\cmdp) with bounded costs and the average cost optimality criterion. Assumption \ref{eq:feasibility_assumption} implies that the \cmdp is feasible and assumption \ref{eq:unichain_assumption} implies that the \cmdp is \textit{unichain} \cite[Def. 6.5.1]{krishnamurthy_2016}, which means that there exists an optimal stationary strategy $\pi^{\star}$ for which the limit in (\ref{eq:objective_response}) exists \cite[Thm. 8.4.5]{puterman}\cite[Thms. 6.5.2--6.5.4]{krishnamurthy_2016}.

By introducing the Lagrange multiplier $\lambda\geq 0$ and defining the immediate cost to be $c_{\lambda}(s_t) = s_t + \lambda \llbracket s_t < f+1 \rrbracket$ we can reformulate the \cmdp as an unconstrained \mdp through Lagrangian relaxation \cite[Thm. 3.7]{altman-constrainedMDP}. The optimal strategy in the unconstrained \mdp satisfies
\begin{align}
&\pi_{\lambda}^{\star}(s_{t}) \in \argmin_{a \in \{0,1\}}\bigg[c_{\lambda}(s_t) + \mathbb{E}_{S_{t+1}}\left[V_{\lambda}^{\star}(S_{t+1})\right]\bigg],\label{eq:bellman_eq_state}
\end{align}
where $V_{\lambda}^{\star}$ is the value function \cite[Thm. 3.6]{altman-constrainedMDP}.

Since $c_{\lambda}(s_t)$ is non-decreasing in $s$ it follows from assumptions \ref{eq:dominance_transitions} and \ref{eq:tail_sum} that the \mdp has an optimal threshold strategy for any $\lambda$ \cite[Thm. 9.3.1]{krishnamurthy_2016}\cite[Prop. 4.7.3]{puterman}. Further, we know from Lagrangian dynamic programming theory that there exists an optimal strategy in the \cmdp which is a randomized mixture of two optimal deterministic strategies of the \mdp with different Lagrange multipliers $\lambda_1$ and $\lambda_2$ \cite[Thm. 6.6.2]{krishnamurthy_2016}, \cite[Thm. 12.7]{altman-constrainedMDP}. When combined, these two properties imply Thm. \ref{thm:structure_respone}. \qed

\section{Hyperparameters}\label{appendix:hyperparameters}
Hyperparameters for the experimental results and figures reported in the paper are listed in Table \ref{tab:hyperparams}.

\begin{table}
\centering
\resizebox{1\columnwidth}{!}{%
  \begin{tabular}{ll} \toprule
  \textbf{Intrusion recovery parameters} & {\textbf{Values}} \\
    \hline
    Confidence levels & Confidence levels for all figures were \\
                                         &computed based on the Student-t distribution\\
    Fig. \ref{fig:reliability_curve_1}--\ref{fig:reliability_curves_3}, Fig. \ref{fig:strategy_structure}   & $p_{\mathrm{C}_1,i}=10^{-5}$,$p_{\mathrm{C}_2,i}=10^{-3}$, $k=1$\\
                                         & $\eta=2$, $\mathcal{O}=\{0,\hdots,9\}$,\\
                                         & $Z_i(\cdot \mid 0) = \mathrm{BetaBin}(n=10,\alpha=0.7,\beta=3)$,\\
                                         & $Z_i(\cdot \mid 1) = \mathrm{BetaBin}(n=10,\alpha=1,\beta=0.7$\\\\
    Figs. \ref{fig:reliability_curve_1}--\ref{fig:reliability_curves_3} & no recoveries, $p_{\mathrm{U},i}=0$, $\Delta_{\mathrm{R}}=100$, $p_{\mathrm{A},i}=0.1$, $k=1$\\\\
    Fig. \ref{fig:value_funs} & $p_{\mathrm{A},i}=0.01$ \\\\
    Fig. \ref{fig:value_funs}, Fig. \ref{fig:strategy_structure} & $p_{\mathrm{U},i}=2\times 10^{-2}$, $\Delta_{\mathrm{R}}=100$, $k=1$\\\\
    Figs. \ref{fig:recovery_curves}-\ref{fig:pomdp_times} & $\eta=2$, $p_{\mathrm{A},i}=0.1$, $p_{\mathrm{C}_1,i}=10^{-5}$, $p_{\mathrm{C}_2,i}=10^{-3}$,\\
     & $p_{\mathrm{U},i}=2\times 20^{-2}$, $k=1$,\\
     & $Z_i(\cdot \mid 0) = \mathrm{BetaBin}(n=10,\alpha=0.7,\beta=3)$,\\
     & $Z_i(\cdot \mid 1) = \mathrm{BetaBin}(n=10,\alpha=1,\beta=0.7$,\\\\
    Fig. \ref{fig:cmdp_times} & $\epsilon_{\mathrm{A}}=0.9$, $N=10$, $f=3$,\\
                                         & see Fig. \ref{fig:pmf_1} for $f_{\mathrm{S}}$\\
    \\
    Evaluation in \S \ref{sec:solution_approach} & $p_{\mathrm{U},i}=2\times 10^{-2}$, $p_{\mathrm{A},i}=10^{-1}$, $p_{\mathrm{C}_1,i}=10^{-5}$,\\
                                         &  $p_{\mathrm{C}_2,i}=10^{-3}$, $\Delta_{\mathrm{R}}=\infty$, $\epsilon_{\mathrm{A}}=0.9$,\\
                                         & $s_{\mathrm{max}}=13$, $\eta = 2$, $N_1=3$, $f=\min[\frac{N_1-1}{2}, 2]$\\
                                         & $f_{\mathrm{S}}$ estimated from simulations of Prob \ref{prob_1},\\                                                                                                                                                                      & $\mathrm{PO}=\text{\cem}$ in Alg. \ref{alg:solver_prob1}\\
    Fig. \ref{fig:kls} & $\eta=2$, $p_{\mathrm{A},i}=0.1$, $p_{\mathrm{C}_1,i}=10^{-5}$, $p_{\mathrm{C}_2,i}=10^{-3}$,\\
                                         & $p_{\mathrm{U},i}=2\times 20^{-2}$, $k=1$,$\mathrm{PO}=\text{\cem}$ in Alg. \ref{alg:solver_prob1}\\
    {\textbf{\minbft \cite[\S 4.2]{giuliana_thesis} parameters}} &   \\
  \hline
    \usig implementation & \rsa with key lengths $1024$ bits \cite{rsa_citation}\\
    $T_{\mathrm{exec}}$, $T_{\mathrm{vc}}$, $\mathrm{cp}$, $L$ & $30$ seconds, $280$ seconds, $10^2$, $10^3$\\
  {\textbf{\ppo \cite[Alg. 1]{ppo} parameters}} &   \\
  \hline
  lr $\alpha$, batch, \# layers, \# neurons, clip $\epsilon$ & $10^{-5}$, $4\cdot 10^{3}t$, $4$, $64$, $0.2$,\\
  GAE $\lambda$, ent-coef, activation & $0.95$, $10^{-4}$, ReLU \\
  {\textbf{\spsa parameters \cite[Fig. 1]{spsa_impl}}} & \\
  \hline
    $c, \epsilon, \lambda, A, a, N, \delta$ & $10$, $0.101$, $0.602$, $100$, $1$, $50$, $0.2$\\
    $M$ number of samples for each evaluation & $50$ \\
  {\textbf{Incremental pruning parameters \cite[Fig. 4]{incremental_pruning_pomdp}}} &  \\
  \hline
    Variation, $\epsilon$ & normal, $0$\\
  {\textbf{Cross-entropy method \cite{cem_rubinstein}\cite[Alg. 1]{moss2020crossentropy}}} &   \\
  \hline
    $\lambda$ (fraction of samples to keep) & $0.15,100$\\
    $K$ population size & $100$\\
    $M$ number of samples for each evaluation & $50$ \\
  {\textbf{Differential evolution \cite[Fig. 3]{differential_evolution}}} &  \\
  \hline
    Population size $K$, mutate step  & $10, 0.2$\\
    Recombination rate  & $0.7$\\
    $M$ number of samples for each evaluation & $50$ \\
  {\textbf{Bayesian optimization \cite[Alg. 1]{bayesian_opt}}} &  \\
  \hline
    Acquisition function  & lower confidence bound \cite[Alg. 1]{gp_ucb}\\
    $\beta$, Kernel  & $2.5$, $\operatorname{Matern}(2.5)$\\
    $M$ number of samples for each evaluation & $50$ \\
  {\textbf{Linear Programming}} &  \\
  \hline
    Solver  & \cbc \cite{forrest2005cbc}\\
  \bottomrule\\
\end{tabular}
}
\caption{Hyperparameters.}\label{tab:hyperparams}
\end{table}
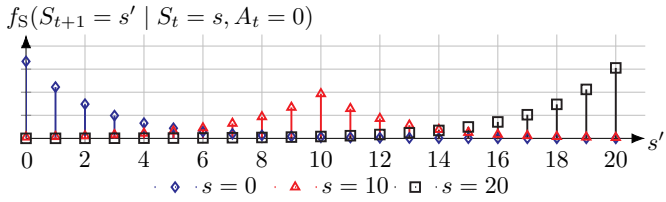
\begin{figure}
  \centering
    \scalebox{0.91}{
      \begin{tikzpicture}[
    dot/.style={
        draw=black,
        fill=blue!90,
        circle,
        minimum size=3pt,
        inner sep=0pt,
        solid,
    },
    ]

\node[scale=1] (kth_cr) at (0,2.15)
{
  \begin{tikzpicture}[declare function={sigma(\x)=1/(1+exp(-\x));
      sigmap(\x)=sigma(\x)*(1-sigma(\x));}]
\pgfplotstableread{
0 0.333451
1 0.222301
2 0.148200
3 0.098800
4 0.065867
5 0.043911
6 0.029274
7 0.019516
8 0.013010
9 0.008673
10 0.005781
11 0.003853
12 0.002567
13 0.001710
14 0.001137
15 0.000754
16 0.000497
17 0.000323
18 0.000203
19 0.000118
20 0.000054
}\datatablee
\pgfplotstableread{
0 0.005066
1 0.007290
2 0.010491
3 0.015098
4 0.021726
5 0.031265
6 0.044991
7 0.064743
8 0.093167
9 0.134069
10 0.192929
11 0.128619
12 0.085746
13 0.057164
14 0.038109
15 0.025406
16 0.016937
17 0.011291
18 0.007527
19 0.005018
20 0.003345
}\datatableee
\pgfplotstableread{
0 0.000113
1 0.000238
2 0.000393
3 0.000599
4 0.000884
5 0.001287
6 0.001861
7 0.002685
8 0.003868
9 0.005569
10 0.008016
11 0.011537
12 0.016603
13 0.023893
14 0.034383
15 0.049478
16 0.071200
17 0.102459
18 0.147441
19 0.212172
20 0.305320
}\datatableeee
\begin{axis}[
        xmin=0,
        xmax=21,
        ymin=0,
        ymax=0.45,
        width =1.2\columnwidth,
        height = 0.35\columnwidth,
        axis lines=center,
        xmajorgrids=true,
        ymajorgrids=true,
        major grid style = {lightgray},
        minor grid style = {lightgray!25},
        ytick={0.0, 0.1,0.2, 0.3, 0.4, 0.5, 0.6, 0.7, 0.8, 0.9, 1.0},
        yticklabels={$0$, $$, $$, $$, $$, $$, $$, $$, $$, $$,$$},
        xtick={0,2,4,6,8,10,12,14,16,18,20},
        xticklabels={$0$, $2$, $4$, $6$, $8$, $10$, $12$, $14$, $16$, $18$, $20$},
        scaled y ticks=false,
        yticklabel style={
        /pgf/number format/fixed,
        /pgf/number format/precision=5
        },
        xlabel style={below right},
        ylabel style={above left},
        axis line style={-{Latex[length=2mm]}},
        legend style={at={(0.8,-0.28)}},
        legend columns=3,
        legend style={
          draw=none,
            /tikz/column 2/.style={
                column sep=5pt,
              }
              }
              ]
              \addplot+[ycomb,Blue,thick, mark=diamond] table [x index=0, y index=1] {\datatablee};
              \addplot+[ycomb,Red,thick, mark=triangle] table [x index=0, y index=1] {\datatableee};
              \addplot+[ycomb,Black,thick, mark=square] table [x index=0, y index=1] {\datatableeee};
\legend{$s=0$, $s=10$, $s=20$}
\end{axis}
\end{tikzpicture}
};

\node[inner sep=0pt,align=center, scale=1, rotate=0, opacity=1] (obs) at (-4.29,1.57)
{
  $0$
};

\node[inner sep=0pt,align=center, scale=1, rotate=0, opacity=1] (obs) at (4.9,1.85)
{
  $s^{\prime}$
};
\node[inner sep=0pt,align=center, scale=1, rotate=0, opacity=1] (obs) at (-2.35,3.65)
{
  $f_{\mathrm{S}}(S_{t+1}=s^{\prime} \mid S_t=s, A_t=0)$
};

\end{tikzpicture}
    }
    \caption{Example transition function for Prob. \ref{prob_2}.}
    \label{fig:pmf_1}
\end{figure}
\section{Computation of \mttf and Reliability Functions}\label{appendix_mttf}
The \mttf and the reliability function $R(t)$ can be calculated using numerical methods for Markov chains. Specifically, the number of healthy nodes in the system can be modeled as a Markov chain with state space $\mathcal{S} \triangleq \{0,1,\hdots,N\}$ and transition matrix $\mathbf{P} \in [0,1]^{|\mathcal{S}|^2}$. In this Markov chain, the subset of states $\mathcal{F} \triangleq \{0,1,\hdots,f\} \subset \mathcal{S}$ represents the states where service is unavailable. ($f$ is a fixed tolerance threshold and service is guaranteed if $S \geq f+1$ (Prop. \ref{prop:correctness}).) When calculating the \mttf, we assume that there are no recoveries, which means that $\mathcal{F}$ is absorbing. As a consequence, the mean time to failure ($\mttf$) can be defined as
\begin{align*}
\mathbb{E}[T^{(f)} \mid S_1=s_1] \triangleq \mathbb{E}_{(S_t)_{t\geq 1}}\left[\inf \left\{t \geq 1 \mid S_t \in \mathcal{F}\right\}\mid S_1=s_1\right],
\end{align*}
i.e., the \mttf is the mean hitting time of $\mathcal{F}$ in the Markov chain starting at $s_1 \in \mathcal{S}$.

By standard Markov chain calculations:
\begin{align*}
  &\mathbb{E}[T^{(f)} \mid S_1=s_1]
  \\&=
                                        \begin{dcases}
                                          0 & \text{if } s_1 \in \mathcal{F}\\
                                          1 + \sum_{s^{\prime} \in \mathcal{S} \setminus \mathcal{F}}\mathbf{P}_{s_1,s^{\prime}}\mathbb{E}[T^{(f)} \mid S_1=s^{\prime}] & \text{if }s_1 \not\in \mathcal{F},
                                          \end{dcases}
\end{align*}
which defines a set of $|\mathcal{S}|$ linear equations that can be solved using Gaussian elimination.

Similarly, since the reliability function is defined as $R(t) \triangleq \mathbb{P}[T^{(f)}> t]=\mathbb{P}[S_t > f]$, we have from the Chapman-Kolmogorov equation that
\begin{align}
R(t) &= \sum_{s \in \mathcal{S} \setminus \mathcal{F}}\left(\mathbf{e}^T_{s_1}\mathbf{P}^{t}\right)_{s},
\end{align}
where $\mathbf{e}_{s_1}$ is the $s_1$-basis vector.

\section{The \minbft Consensus Protocol \cite[\S 4.2]{giuliana_thesis}}\label{appendix_minbft}
\tolerancee is based on a reconfigurable consensus protocol for the partially synchronous system model with hybrid failures, a reliable network, and authenticated communication links (see \S \ref{sec:implementation} and Prop. \ref{prop:correctness}). Examples of such protocols include \minbft \cite[\S 4.2]{giuliana_thesis}, \minzyzzyva \cite[\S 4.3]{giuliana_thesis}, \reminbft \cite[\S 5]{7307998}, and \cheapbft \cite[\S 3]{10.1145/2168836.2168866}. Our implementation uses \minbft. Correctness of \minbft is proven in \cite[Appendix A]{giuliana_thesis}.

\minbft is based on \pbft \cite{pbft} with one crucial difference. While \pbft assumes Byzantine failures and tolerates up to $f = \frac{N-1}{3}$ failures, \minbft assumes hybrid failures \cite{wormit} and tolerates up to $f=\frac{N-1}{2}$ failures. The improved resilience of \minbft is achieved by assuming access to a trusted component that provides certain functions for the protocol. In particular, \minbft relies on a tamperproof service at each node that can assert whether a given sequence number was assigned to a message. This service allows \minbft to prevent equivocation \cite{ampbea} and imposes a first-in-first-out (\fifo) order on requests issued by clients. In \tolerancee, the tamperproof service is provided by the virtualization layer (see Fig. \ref{fig:tolerance_6}).

We extend the \minbft protocol \cite[\S 4.2]{giuliana_thesis} to be \textit{reconfigurable} \cite{reconfigurable_consensus}, where the reconfiguration procedure is based on the method described in \cite[\S IV.B]{8433150}. The different stages of the protocol are illustrated in Fig. \ref{fig:tolerance_28} and the throughput of our implementation is shown in Fig. \ref{fig:tolerance_24}. The source code is available at \cite{csle_docs,supplementary} and the hyperparameters are listed in Table \ref{tab:hyperparams}.

\begin{figure}
  \centering
  \scalebox{1.35}{
    \input{tikz/tolerance_28.tex}
  }
  \caption{Time-space diagrams illustrating the message patterns of the \minbft consensus protocol \cite[\S 4.2]{giuliana_thesis}.}
  \label{fig:tolerance_28}
\end{figure}

\section{Distributions of System Metrics}\label{appendix:infrastructure_metrics}
Our testbed implementation of \tolerancee collects hundreds of metrics every time step. To measure the information that a metric provides for detecting intrusions, we calculate the Kullback-Leibler (\kl) divergence $D_{\text{\textsc{kl}}}(\widehat{Z}_{O|0} \parallel \widehat{Z}_{O | s>0})$ between the distribution of the metric when no intrusion occurs $\widehat{Z}_{O|\mathbb{H}} \triangleq \widehat{Z}(\cdot \mid S_i=\mathbb{H})$ and during an intrusion $\widehat{Z}_{O|\mathbb{C}} \triangleq \widehat{Z}(\cdot \mid S_i=\mathbb{C})$:
\begin{align}
D_{\text{\textsc{kl}}}(\widehat{Z}_{O|\mathbb{H}} \parallel \widehat{Z}_{O \mid \mathbb{C}})= \sum_{o \in \mathcal{O}}\widehat{Z}_{O \mid \mathbb{H}}\log\left(\frac{\widehat{Z}_{O \mid \mathbb{H}}}{\widehat{Z}_{O \mid \mathbb{C}}}\right).\nonumber
\end{align}
Here $o\in \mathcal{O}$ realizes the random variable $O$ (\ref{eq:obs_function}), which represents the value of the metric. ($\mathcal{O}$ is the domain of $O$.)

Figure \ref{fig:observations} shows empirical distributions of the collected metrics with the largest \kl divergence. We see that the \textsc{ids} alerts have the largest \kl divergence and thus provide the most information for detecting the type of intrusions that we consider in this paper (see Table \ref{tab:attacker_actions}).
\begin{figure}
  \centering
    \scalebox{0.45}{
      \includegraphics{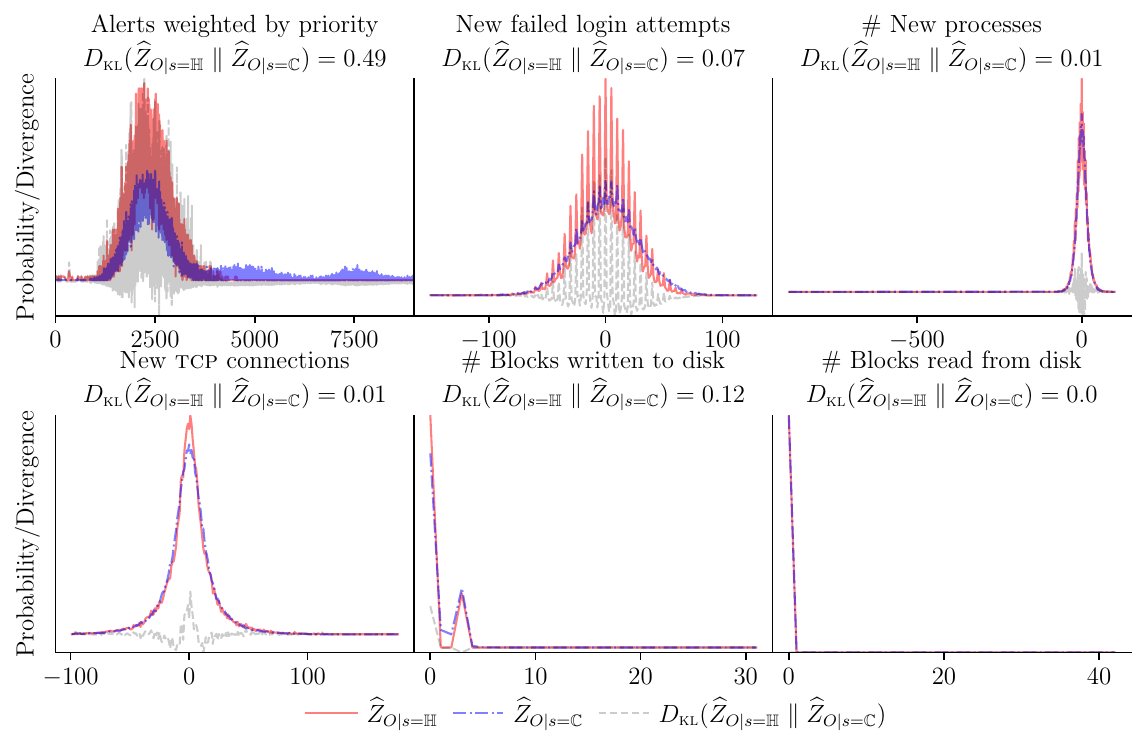}
    }
    \caption{Empirical distributions of selected infrastructure metrics; the red and blue lines show the distributions when no intrusion occurs and during an intrusion, respectively.}
    \label{fig:observations}
\end{figure}

\hypersetup{
  colorlinks,
  linkcolor={black},
  citecolor={black},
  urlcolor={black}
}
\bibliographystyle{IEEEtran}
\bibliography{references,url}

\end{document}
